\newtheorem{theorem}{Theorem}
\newtheorem{Definition}{Definition}
\newtheorem{corollary}{Corollary}
\newtheorem{lemma}{Lemma}
\newtheorem{claim}{Claim}
\newtheorem{obs}{Observation}
\definecolor{darkred}{rgb}{1, 0.1, 0.3}
\definecolor{darkblue}{rgb}{0.1, 0.1, 1}
\definecolor{darkgreen}{rgb}{0,0.6,0.5}
\newcommand {\mm}[1] {\ifmmode{#1}\else{\mbox{\(#1\)}}\fi}
\newcommand{\denselist}{\itemsep 0pt\parsep=1pt\partopsep 0pt}
\newcommand{\etal}		{{et al.}\xspace}
\newcommand{\myeg}	{{e.g.}\xspace}
\newcommand{\eps}		{\varepsilon}
\newcommand{\myanceq}	{{\succeq}} 
\newcommand{\myanc}	{{\succ}} 
\newcommand{\mydesceq} {{\preceq}}
\newcommand{\mydesc}		{{\prec}}
\newcommand{\dgh}		{\delta_{\mathcal{GH}}} 
\newcommand{\optdI}	{\mu}
\newcommand{\myeps}	{\eps}
\newcommand{\mygood}		{{good}\xspace}
\newcommand{\myk}		{\tau} 
\newcommand{\mydelta}	{\delta}
\newcommand{\Img}		{{\mathrm{Im}}}
\newcommand{\myL}		{{L}}
\newcommand{\superlevel}		{{super-level}\xspace}
\newcommand{\myC}		{{\mathrm{C}}}
\newcommand{\setSL}	{{\mathcal{L}}}
\newcommand{\slone}[1] 	{{\mathrm{L}^{(1)}_{#1}}}
\newcommand{\sltwo}[1] 	{{\mathrm{L}^{(2)}_{#1}}}
\newcommand{\hatT}		{{\widehat{T}}}
\newcommand{\levelC}		{\mathrm{Ch}} 
\newcommand{\LCA}		{{LCA}}
\newcommand{\DPalg}	{{\sf DPgoodmap}}
\newcommand{\modifyDPalg}	{{\sf modified-DP}}
\newcommand{\myF}		{{\mathcal{F}}}
\newcommand{\mypartial}[1]		{{partial-{#1}-good}\xspace}
\newcommand{\degbound}		{{degree-bound}\xspace}
\newcommand{\mydepth}		{{depth}\xspace}
\newcommand{\Intdopt}		{{\mydelta^*}}
\newcommand{\criSet}		{{\Pi}}
\newcommand{\Tcal}			{\mathcal{T}}
\newcommand{\mycost}		{{\mathrm{cost}}}
\newcommand{\uTone}		{{|T_1^f|}}
\newcommand{\uTwo}		{{|T_2^g|}}
\newcommand{\uT}				{{|T|}}
\newcommand{\mywF}		{{w^F}}
\newcommand{\mykpara}	{{\myk_\mydelta}}
\newcommand{\newdepth}	{{depth}}
\newcommand{\newtau}	{{\widehat{\tau}}}
\newcommand{\sensiblepair}     {{sensible-pair}\xspace}
\newcommand{\Edgelistpair}      {{Edge-list pair}\xspace}
\newcommand{\edgelistpair}      {{edge-list pair}\xspace}
\newcommand{\Fnew}          {{F_{new}}}
\newcommand{\Fold}          {{F_{old}}}
\begin{document}

\title{FPT-Algorithms for computing Gromov-Hausdorff and interleaving distances between trees}
 
\author{
{Elena Farahbkhsh Touli\footnote{Department of Mathematics, Stockholm University}} \and {Yusu Wang\footnote{Department of Computer Science, Ohio State University, email: yusu@cse.ohio-state.edu}}
}




\maketitle
\setcounter{page}{0}

\begin{abstract}
The Gromov-Hausdorff distance is a natural way to measure the distortion between two metric spaces. However, there has been only limited algorithmic development to compute or approximate this distance. We focus on computing the Gromov-Hausdorff distance between two metric trees. Roughly speaking, a metric tree is a metric space that can be realized by the shortest path metric on a tree. Any finite tree with positive edge weight can be viewed as a metric tree where the weight is treated as edge length and the metric is the induced shortest path metric in the tree. 
Previously, Agarwal et al. showed that even for trees with unit edge length, it is NP-hard to approximate the Gromov-Hausdorff distance between them within a factor of $3$. In this paper, we present a fixed-parameter tractable (FPT) algorithm that can approximate the Gromov-Hausdorff distance between two general metric trees within a \emph{multiplicative factor} of $14$. 

Interestingly, the development of our algorithm is made possible by a connection between the Gromov-Hausdorff distance for metric trees and the interleaving distance for the so-called merge trees. The merge trees arise in practice naturally as a simple yet meaningful topological summary (it is a variant of the Reeb graphs and contour trees), and are of independent interest. It turns out that an exact or approximation algorithm for the interleaving distance leads to an approximation algorithm for the Gromov-Hausdorff distance. One of the key contributions of our work is that we re-define the interleaving distance in a way that makes it easier to develop dynamic programming approaches to compute it. We then present a fixed-parameter tractable algorithm to compute the interleaving distance between two merge trees {\bf exactly}, which ultimately leads to an FPT-algorithm to approximate the Gromov-Hausdorff distance between two metric trees. This exact FPT-algorithm to compute the interleaving distance between merge trees is of interest itself, as it is known that it is NP-hard to approximate it within a factor of $3$, and previously the best known algorithm has an approximation factor of $O(\sqrt{n})$ even for trees with unit edge length. 
\end{abstract}

\newpage

\section{Introduction}
Given two metric spaces $(X, d_X)$ and $(Y, d_Y)$, a natural way to measure their distance is via the \emph{Gromov-Hausdorff distance} $\dgh(X, Y)$ between them \cite{GH-book}, which intuitively describes how much \emph{additive} distance distortion is needed to make the two metric spaces isometric. 

We are interested in computing the Gromov-Hausdorff distance between \emph{metric trees}. 
Roughly speaking, a metric tree $(X, d)$ is a geodesic-metric space that can be realized by the shortest path metric on a tree. 
Any finite tree $T = (V, E)$ with positive edge weights $w: E\to \mathbb{R}$ can be naturally viewed as a metric tree $\Tcal = (|T|, d)$: 
the space is the underlying space $|T|$ of $T$, each edge $e$ can be viewed as a segment of length $w(e)$, and the distance $d$ is the induced shortest path metric. See Figure \ref{fig:trees} (a) for an example. 
Metric trees occur commonly in practical applications: 
\myeg, a neuron cell has a tree morphology, and can be modeled as an embedded metric tree in $\mathbb{R}^3$. 
It also represents an important family of metric spaces that has for example attracted much attention in the literature of metric embedding and recovery of hierarchical structures, \myeg, \cite{agarwala1998approximability,alon2008ordinal,ailon2005fitting,badoiu2007approximation,chepoi2012constant, fellows2008parameterized,SWW17}. 

Unfortunately, it is shown in \cite{AFN18,Sch17} that it is not only NP-hard to compute the Gromov-Hausdorff distance between two trees, but also NP-hard to approximate it within a factor of $3$ even for trees with unit edge length. A polynomial-time approximation algorithm is given in \cite{AFN18}; however, the approximation factor is high: it is $O(\sqrt{n})$ even for unit-edge weight trees. 

Another family of tree structures that is of practical interest is the so-called \emph{merge tree}. 
Intuitively, a merge tree is a rooted tree $T$ associated with a real-valued function $f: T \to \mathbb{R}$ such that the function value is monotonically decreasing along any root-to-leaf path -- We can think of a merge tree to be a tree with height function associated to it where all nodes with degree $> 2$ are down-forks (merging nodes); see Figure \ref{fig:trees} (b).  
The merge tree is a loop-free variant of the so-called Reeb graph, which is a simple yet meaningful topological summary for a scalar field $g: X \to \mathbb{R}$ defined on a domain $X$, and has been widely used in many applications in graphics and visualization \myeg, \cite{BGSF08,GSBW11,HA03,Tie08}. 
Morozov \etal{} introduced the \emph{interleaving distance} to compare merge trees \cite{IDbMT}, based on a natural ``interleaving idea'' which has recently become fundamental in comparing various topological objects. 
Also, several distance measures have been proposed for the Reeb graphs \cite{bauer,BLM18,deSilva2016}. When applying them to merge trees, it turns out that two of these distance measures are equivalent to the interleaving distance. 
However, the same reduction in \cite{AFN18} to show the hardness of approximating the Gromov-Hausdorff distance can also be used to show that it is NP-hard to approximate the interleaving distance between two merge trees within a factor 3. 

\vspace*{0.08in}\noindent{\bf New work.}
Although the Gromov-Hausdorff distance is a natural way to measure the degree of near-isometry between metric spaces \cite{GH-book,ms-tcfii-05}, the algorithmic development for it has been very limited so far \cite{AFN18,bbk-eciid-06,dgh-props,Sch17}. 
In \cite{Sch17}, Schmiedl gave an FPT algorithm for approximating the Gromov-Hausdorff distance between two \emph{finite metrics}, where the approximation contains \emph{both an additive and multiplicative terms}; see more discussion in {\it Remarks} after Theorem \ref{thm:GHalg}. 
In this paper, we present the first FPT algorithm to approximate the Gromov-Hausdorff distance for metric trees \emph{within a constant multiplicative factor}. 

Interestingly, the development of our approximation algorithm is made possible via a connection between the Gromov-Hausdorff distance between metric trees and the interleaving distance between certain merge trees (which has already been observed previously in \cite{AFN18}). 
This connection implies that any exact or approximation algorithm for the interleaving distance will lead to an approximation algorithm for the Gromov-Hausdorff distance for metric trees of similar time complexity. 
Hence we can focus on developing algorithms for the interleaving distance. 
The original interleaving distance definition requires considering a pair of maps between the two input merge trees and their interaction. 
One of the key insights of our work is that we can in fact develop an equivalent definition for the interleaving distance that relies on only \emph{a single map} from one tree to the other. 
This, together with the height functions equipped with merge trees (which give rises to natural ordering between points in the two trees), essentially allows us to develop a dynamic programming algorithm to check whether the interleaving distance between two merge trees is smaller than a given threshold or not: In particular in Section \ref{sec:decision}, we first give a simpler DP algorithm with slower time complexity to illustrate the main idea. We then show how we can modify this DP algorithm to improve the time complexity. 
Finally, we solve the optimization problem for computing the interleaving distance\footnote{We note that the final time complexity for the optimization problem presented in Theorem \ref{thm:optinterleaving-fast} is based on an argument by Kyle Fox. His argument improves our previous $n^4$ factor (as in Theorem \ref{thm:optinterleaving-slow}) by an almost $n^2$ factor, by performing a double-binary search, instead of a sequence search we originally used.} in Section \ref{sec:optimization}, which leads to a constant-factor (a multiplicative factor of $14$) approximation FPT algorithm for the Gromov-Hausdorff distance between metric trees. 

\begin{figure}[tbph]
\begin{center}
\begin{tabular}{ccccc}
\includegraphics[height=3cm]{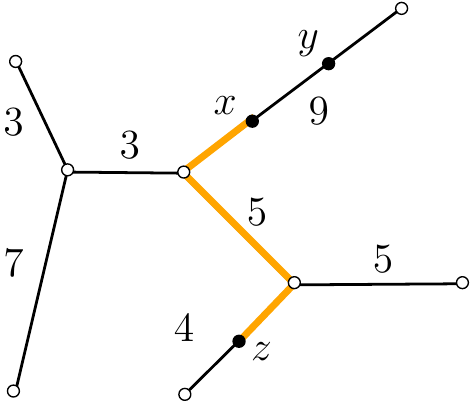} &\hspace*{0.1in} &
 \includegraphics[height=3.5cm]{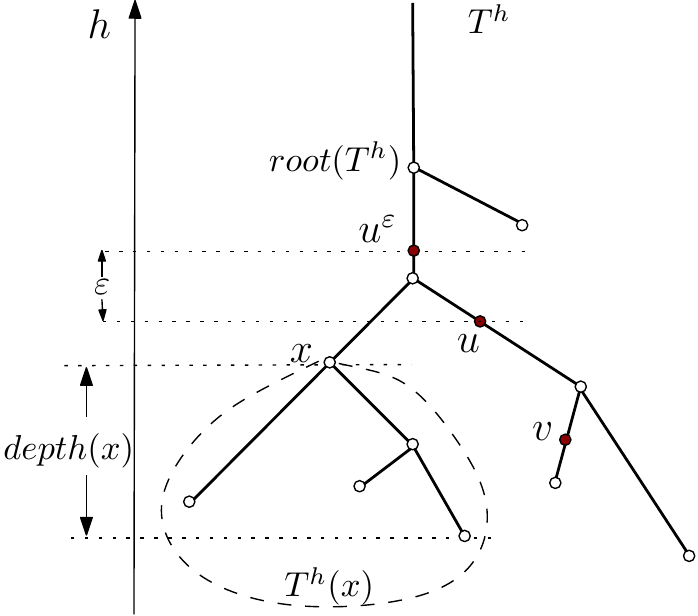} & \hspace*{0.1in}
& 
\includegraphics[height=3cm]{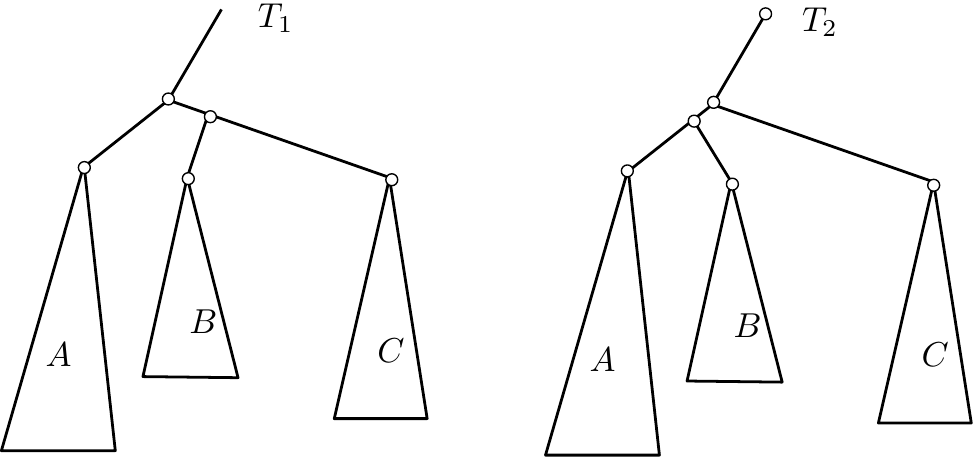} \\
(a) & & (b) & & (c) 
\end{tabular}
\end{center}
\vspace*{-0.3in}\caption{
(a) A metric tree $(T,d_T)$ with edge length marked. Tree nodes are white dots. 
$d_T(x,z) = 3 + 5 + 2 = 10$ is the length of the thickened path $\pi(x,z)$.
(b) A merge tree $T^h$, with examples of $u \myanc v$, $u^\eps$, $T^h(x)$ and $\newdepth(x)$ marked. 
(c) Tree alignment distance between $T_1$ and $T_2$ arbitrarily large, while $\dgh(T_1,T_2)$ is roughly bounded by the pairwise distance difference which is small. 
\label{fig:trees}}
\end{figure}

\noindent{\bf More on related work.~} 
There have been several tree distances proposed in the literature. 
Two most well-known ones are the tree edit and tree alignment distances \cite{B05}, primarily developed to compare labeled trees. 
Unfortunately, both distances are MAX SNP-hard to compute for un-ordered trees \cite{JWZ95,ZJ94}. 
For tree edit distance, it is MAX SNP-hard even for trees with bounded degree. 
For the tree alignment distance, it can be computed in polynomial time for trees with bounded degree. 
However the tree alignment distance requires that parent-child relation to be strictly preserved, and thus the small local configuration change shown in Figure \ref{fig:trees} (c) will incur a large tree alignment distance.  

We will not survey the large literature in metric embedding which typically minimizes the metric distortion in a \emph{mulplicative} manner. However, we mention the work of Hall and Papadimitriou \cite{hp-ad-05}, where, given two equal-sized point sets, they propose to find the best \emph{bijection} under which the \emph{additive distortion} is minimized. They show it is NP-hard to approximate this optimal additive distortion within a factor of 3 even for points in $\mathbb{R}^3$. In contrast, the Gromov-Hausdorff distance is also \emph{additive}, but allows for many-to-many correspondence (instead of bijection) between points from two input metric spaces. 
We also note that our metric trees consist of all points in the underlying space of input trees (i.e, including points in the interior of a tree edge). 
This makes the distance robust against adding extra nodes and short ``hairs'' (branches). 
Nevertheless, we can also consider discrete metric trees, where we only aim to align nodes of input trees (instead of all points in the underlying space of the trees). 
Our algorithms hold for the discrete case as well.

\section{Preliminaries}
\label{sec:notations}

\paragraph{Metric space, metric trees.}
A metric space is a pair $(X, d)$ where $X$ is a set and $d: X \times X \to \mathbb{R}_{\ge 0}$ satisfies: (i) for any $x, y\in X$, $d(x,y) \ge 0$ and $d(x,y)=0$ holds only when $x = y$; (ii) $d(x,y) = d(y,x)$, and (iii) for any $x,y, z$, $d(x,z) \le d(x,y)  + d(y,z)$. We call $d$ a metric on the space $X$. 
A metric space $(X, d)$ is a finite metric tree if it is a length metric space\footnote{$(X,d)$ is a length metric space if $d$ is the same as the shortest path (i.e, intrinsic) metric it induces on $X$.} and $X$ is homeomorphic to the underlying space $|T|$ of some finite tree $T=(V,E)$. 

Equivalently, suppose we are given a finite tree $T=(V, E)$ where each edge $e\in E$ has a positive weight $\ell(e) > 0$. 
View the underlying space $|e|$ of $e$ as a segment with length $\ell(e)$ (i.e, it is isometric to $[0, \ell(e)]$), and we can thus define the distance $d_T(x,y)$ between any two points $x, y \in |e|$ as the length of the sub-segment $e[x, y]$. 
The underlying space $|T|$ of $T$ is the union of all these segments (and thus includes points in the interior of each edge as well). 
For any $x, z\in |T|$, there is a unique simple path $\pi(x,z) \subset |T|$ connecting them. The (shortest path) distance $d_T(x,z)$ equal to the length of this path, which is simply the sum of the lengths of the restrictions of this path to edges in $T$. 
See Figure \ref{fig:trees} (a). 
The space $|T|$ equipped with $d_T$ is a metric tree $(|T|, d_T)$.

Given a tree $T=(V, E)$, we use the term \emph{tree nodes} to refer to points in $V$, and an arbitrary $x\in |T|$ potentially from the interior of some tree edge is referred to as a \emph{point}. 
Given $T$, we also use $V(T)$ and $E(T)$ to denote its node-set and edge-set, respectively. 
To emphasize the combinatorial structure behind a metric tree, in the paper we will write a metric tree $(T, d_T)$, with the understanding that the space is in fact the underlying space $|T|$ of $T$. 
%

Note that if we restrict this metric space to only the tree nodes, we obtain a \emph{discrete metric tree $(V(T), d_T)$}, and the distance between two tree nodes is simply the standard shortest path distance between them in a weighted graph (tree $T$ in this case). 
Our algorithms developed in this paper can be made to work for the discrete metric trees as well.

\paragraph{Gromov-Hausdorff distance.}
Given two metric spaces $\mathcal{X} = (X, d_X)$ and $\mathcal{Y} = (Y, d_Y)$, a \emph{correspondence} between them is a relation $C: X \times Y$ whose projection on $X$ and on $Y$ are both surjective; i.e, for any $x\in X$, there is at least one $(x, y) \in C$, and for any $y' \in Y$, there is at least one $(x', y') \in C$. 
If $(x,y)\in C$, then we say $y$ (resp. $x$) is a pairing partner for $x$ (resp. $y$); note that $x$ (resp. $y$) could have multiple pairing partners. 
The cost of this correspondence is defined as:
$$\mycost(C) = \max_{(x,y), (x', y') \in C} | d_X(x, x') - d_Y (y, y') |,$$
which measures the maximum metric distortion (difference in pairwise distances) under this correspondence. 
The \emph{Gromov-Hausdorff distance} between them is:
$$\dgh(\mathcal{X}, \mathcal{Y}) = \frac{1}{2} \inf_{C \in \Pi(X, Y)} \mycost(C), ~~~
\text{where}~\Pi(X,Y)=\text{set of correspondences between}~X~\text{and}~Y.$$

\paragraph{Merge trees.}
A \emph{merge tree} is a pair $(T, h)$ where $T$ is a rooted tree, and the continuous function $h: \uT \to \mathbb{R}$ is \emph{monotone} in the sense the value of $h$ is decreasing along any root-to-leaf path. 
See Figure \ref{fig:trees} (b) for an example. For simplicity, we often write the merge tree as $T^h$, and refer to $h$ as  the \emph{height function}, and $h(x)$ \emph{the height} of a point $x\in |T|$. 
The merge tree is an natural object: \myeg, it can be used to model a hierarchical clustering tree, where the height of a tree node indicates the parameter when the cluster (corresponding to the subtree  rooted at this node) is formed. 
It also arises as a simple topological summary of a scalar function $\tilde{h}: M \to \mathbb{R}$ on a domain $M$, which tracks the connected component information of the sub-level sets $\tilde{h}^{-1}(-\infty, a]$ as $a\in \mathbb{R}$ increases.

To define the interleaving distance, 
we modify a merge tree $T^h$ slightly by extending a ray from $root(T^h)$ upwards with function value $h$ goes to $+\infty$. 
All merge trees from now on refer to this modified version.  
Given a merge tree $T^h$ and a point $x\in |T|$, $T^h(x)$ is the subtree of $T^h$ rooted at $x$, and the \emph{\mydepth{} of $x$ (or of $T^h(x)$)}, denoted by $\newdepth(x)$, is the largest function value difference between $x$ and any node in its subtree; that is, the height of the entire subtree $T^h(x)$ w.r.t. function $h$. 
Given any two points $u, v \in |T|$, we use $u \myanceq v$ to denote that $u$ is an ancestor of $v$; $u \myanc v$ if $u$ is an ancestor of $v$ and $u \neq v$. 
Similarly, $v \mydesceq u$ means that $v$ is a descendant of $u$. Also, the degree of a node in a merge tree is defined as the downward degree of the node. 
We use $LCA(u,v)$ to represent the lowest common ancestor of $u$ and $v$ in $|T|$.  
For any non-negative value $\eps \ge 0$, $u^\eps$ represents the unique ancestor of $u$ in $T$ such that $h(u^\eps) - h(u) = \eps$. See Figure \ref{fig:trees} (b).

\paragraph{Interleaving distance.}
We now define the interleaving distance between two merge trees $T_1^f$ and $T_2^g$, associated with functions $f: \uTone \to \mathbb{R}$ and $g: \uTwo \to \mathbb{R}$, respectively. 

\begin{Definition}[$\eps$-Compatible maps \cite{IDbMT}] \label{def:compatible}
A pair of continuous maps $\alpha: \uTone \rightarrow \uTwo$ and $\beta: \uTwo \rightarrow \uTone$ is \emph{$\eps$-compatible w.r.t $T_1^f$ and $T_2^g$} if the following four conditions hold: 
\begin{align*}
{\sf (C1)}.~   g(\alpha(u))=f(u)+\eps ~~ \text{and}~~  &{\sf (C2)}. ~  \beta \circ\alpha(u)=u^{2\eps}   ~~\text{for any}~u \in \uTone; 
\\
 {\sf (C3)}.~ f(\beta(w))=g(w)+\eps~~   \text{and}~~ &{\sf (C4)}.~ \alpha \circ \beta(w)=w^{2\eps} ~~\text{for any}~w \in \uTwo . 
\end{align*}
\end{Definition}
To provide some intuition for this definition, note that if $\eps = 0$, then $\alpha = \beta^{-1}$: In this case, the two trees $T_1$ and $T_2$ are not only isomorphic, but also the function values associated to them are preserved under the isomorphism. 
In general for $\eps > 0$, this quantity measures how far a pair of maps are away from forming a function-preserving isomorphism between $T_1^f$ and $T_2^g$. In particular, $\beta$ is no longer the inverse of $\alpha$. However, the two maps relate to each other in the sense that if we send a point $u \in \uTone$ to $\uTwo$ through $\alpha: \uTone \to \uTwo$, then bring it back via $\beta: \uTwo \to \uTone$, we come back at an ancestor of $u$ in $\uTone$ (i.e, property {\sf (C2)}). This ancestor must be at height $f(u) + 2\eps$ due to properties {\sf (C1)} and {\sf (C3)}. 

\begin{Definition}[Interleaving distance \cite{IDbMT}] \label{def:interleaving} 
The \emph{interleaving distance} between two merge trees $T_1^f$ and $T_2^g$ is defined as:
\begin{align*}
d_I(T_1^{f},T_2^{f}) & = \inf\{~\eps ~\mid~ \text{there exist a pair of}~\eps\text{-compatible maps w.r.t}~ T_1^f ~\text{and}~T_2^g \}.
\end{align*}
\end{Definition}

Interestingly, it is shown in \cite{AFN18} that the Gromov-Hausdorff distance between two metric trees is related to the interleaving distance between two specific merge trees. 
\begin{claim}[\cite{AFN18}] \label{claim:GHandInterleaving}
Given two metric trees $\mathcal{T}_1=(T_1,d_1)$ and $\mathcal{T}_2=(T_2,d_2)$ with node sets $V_1 = V(T_1)$ and $V_2 = V(T_2)$, respectively, let $f_u: |T_1| \to \mathbb{R}$ (resp. $g_w: |T_2| \to \mathbb{R}$) denote the geodesic distance function to the base point $u \in V_1$ (resp. $v \in V_2$) defined by $f_u (x) = -d_1(x, u)$ for any $x\in |T_1|$ (resp. $g_w(y) = -d_2(y, w)$ for any $y\in |T_2|$). 
Set $\optdI :=  \min_{u\in V_1, w\in V_2} d_I(T_1^{f_u},T_2^{g_w})$. We then have that 
$$ \frac{\optdI}{14} \leq  \dgh(\mathcal{T}_1,\mathcal{T}_2)\leq  2\optdI. $$
\end{claim}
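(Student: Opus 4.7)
The proof breaks into two directions, as the claim combines an upper bound $\dgh(\mathcal{T}_1,\mathcal{T}_2) \le 2\mu$ (easy direction) and a lower bound $\mu \le 14\,\dgh(\mathcal{T}_1,\mathcal{T}_2)$ (hard direction).

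\textbf{Upper bound $\dgh \le 2\mu$.} I would pick $u^* \in V_1$, $w^* \in V_2$ achieving $\mu$, together with $\mu$-compatible maps $\alpha:|T_1|\to |T_2|$, $\beta:|T_2|\to|T_1|$, and define the correspondence $C := \{(x,\alpha(x)) : x\in|T_1|\} \cup \{(\beta(y),y): y\in|T_2|\}$, which is surjective on both sides. To bound $\mycost(C)$, the key observation is that, in a merge tree rooted at $u$ with $f_u(\cdot) = -d(\cdot, u)$, every pairwise distance factors through the LCA: $d_1(x, x') = 2f_{u^*}(\mathrm{LCA}(x,x')) - f_{u^*}(x) - f_{u^*}(x')$, and similarly in $T_2$. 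So differences of distances reduce to differences of point-heights and LCA-heights. Point-heights are exactly preserved up to an additive $\mu$ by (C1)/(C3). For LCA-heights, since $\alpha$ sends ancestors to ancestors (by continuity plus tree topology plus height monotonicity), $\alpha(\mathrm{LCA}(x_1,x_2))$ is an ancestor of $L' := \mathrm{LCA}(\alpha(x_1),\alpha(x_2))$, giving $g_{w^*}(L') \le f_{u^*}(L) + \mu$; applying $\beta$ and using (C2) with the fact that taking $2\mu$-ancestors raises the LCA by at most $2\mu$ yields $g_{w^*}(L') \ge f_{u^*}(L) - \mu$. Substituting into the LCA-formula gives $|d_1(x_1,x_2) - d_2(\alpha(x_1),\alpha(x_2))| \le 4\mu$; the $\beta$-pairs case is symmetric, and for mixed pairs $(x,\alpha(x)), (\beta(y),y)$ the $\mu$-corrections from the two height-shifts cancel, so only the $2|f_{u^*}(L_1) - g_{w^*}(L_2)| \le 2\mu$ term survives. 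Hence $\mycost(C) \le 4\mu$, so $\dgh \le 2\mu$.

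\textbf{Lower bound $\mu \le 14\,\dgh$.} Let $C$ be a correspondence between $\mathcal{T}_1, \mathcal{T}_2$ with $\mycost(C) \le 2\dgh =: 2\eta$. I would pick a tree node $u \in V_1$, take any $w_0 \in |T_2|$ with $(u,w_0)\in C$, and let $w \in V_2$ be a tree node chosen so that $w_0$ lies on a simple segment anchored at $w$ (rounding off at a subdivision cost that the factor $14$ absorbs). For any $(x,y)\in C$, the cost bound $|d_1(x,u) - d_2(y,w_0)| \le 2\eta$ translates into $|f_u(x) - g_w(y)| \le c_1\eta$ for a small constant $c_1$. I would then construct $\alpha,\beta$ from $C$ by the following recipe: for each $x\in |T_1|$, select some partner $y(x)$ under $C$ and define $\alpha(x)$ to be the ancestor of $y(x)$ in $T_2^{g_w}$ whose height equals $f_u(x) + \eps'$ exactly, where $\eps' := 14\eta$; define $\beta$ symmetrically. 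Conditions (C1) and (C3) then hold by construction.

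The nontrivial work is verifying the composition conditions (C2) $\beta\circ\alpha(x) = x^{2\eps'}$ and (C4) $\alpha\circ\beta(y) = y^{2\eps'}$. The strategy is to show that for any two points $x,x'\in |T_1|$ with partners $y,y'$, the height of $\mathrm{LCA}_w(y,y')$ differs from the height of $\mathrm{LCA}_u(x,x')$ by at most $O(\eta)$ (again via the LCA-distance decomposition and the cost bound), so the ancestor-at-target-height defined above sits in the \emph{correct branch} of $T_2$ relative to other partners. Then following the chain of inequalities one shows that after the $\alpha$-rise of $\eps'$ and the $\beta$-rise of another $\eps'$, the net effect on $x$ is to move to an ancestor at height $f_u(x) + 2\eps'$, which is $x^{2\eps'}$, provided $\eps'$ is large enough to dominate every intermediate error term; a careful accounting of all the triangle-inequality losses (from $2\eta$ in $C$, from $w_0\mapsto w$, from height-rounding, from possibly lifting past branching points, and from preserving the correct branch on the composition) produces the stated multiplier $14$.

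The main obstacle, and the source of the factor $14$, is this lower bound step of showing that a metric correspondence -- which is only a relation, has no control over the tree topology, and respects only pairwise distances up to $2\dgh$ -- can be upgraded into a pair of maps satisfying the rigid composition identities (C2)/(C4) of interleaving. All of the local slack needed to define $\alpha(x)$ on the correct branch of $T_2^{g_w}$, to align branching points, and to make the $\beta\circ\alpha$ identity land exactly on $x^{2\eps'}$ must be paid for by inflating $\eps'$ above $\eta$, and the factor $14$ is what suffices to absorb every one of these overheads simultaneously.
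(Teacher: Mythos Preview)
The paper does not prove this claim at all: it is stated with an explicit citation to \cite{AFN18} and is used as a black box throughout (see the discussion after the claim and in Section~\ref{sec:optimization}). There is therefore no ``paper's own proof'' to compare your proposal against.

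That said, your outline follows the natural two-direction strategy, and the upper-bound half is essentially sound: the LCA formula $d_1(x,x') = 2f_{u^*}(\mathrm{LCA}(x,x')) - f_{u^*}(x) - f_{u^*}(x')$ is exactly right for the geodesic metric on a tree, and bounding the LCA-height discrepancy by $\mu$ via (C1)--(C4) and Observation~\ref{obs:ancester} gives the $4\mu$ cost bound, hence $\dgh \le 2\mu$.

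For the lower bound, your sketch is plausible in spirit but glosses over a genuine technical point. Definition~\ref{def:compatible} requires $\alpha$ and $\beta$ to be \emph{continuous}, yet your construction ``for each $x$, select some partner $y(x)$ under $C$'' produces a map with no continuity guarantee whatsoever --- the correspondence $C$ is merely a relation, and different nearby points may have partners in entirely different branches of $T_2$. Passing to the ancestor at height $f_u(x)+\eps'$ does not automatically repair this: you would need to argue that for nearby $x,x'$ the selected partners $y(x),y(x')$ have their $\eps'$-ancestors on the same branch of $T_2^{g_w}$, which is precisely where the LCA-height control and the size of $\eps'$ must be invoked carefully. Your final paragraph gestures at this (``define $\alpha(x)$ on the correct branch''), but the actual argument that the resulting $\alpha$ is continuous, and that (C2)/(C4) hold exactly rather than up to some further error, is where the real work in \cite{AFN18} lies. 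The rounding from $w_0$ to a tree node $w$ also needs a quantitative bound (how far can $w_0$ be from the nearest vertex?), which in general is not controlled by $\eta$ alone without additional structure.
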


Note that to compute the quantity $\mu$, we only need to check all pairs of \emph{tree nodes} of $T_1$ and $T_2$, instead of all pairs of points from $|T_1|$ and $|T_2|$. 

We say a quantity $A$ is a \emph{$c$-approximation} for a quantity $B$ if 
$\frac{A}{c} \le B \le c A$; obviously, $c \ge 1$ and $c = 1$ means that $A = B$. 
The above claim immediately suggests the following: 
\begin{corollary} 
If there is an algorithm to $c$-approximate the interleaving distance between any two merge trees in $T(n)$ time, where $n$ is the total complexity of input trees, then there is an algorithm to $O(c)$-approximate the Gromov-Hausdorff distance between two metric trees in $n^2 T(n)$ time. 
\end{corollary}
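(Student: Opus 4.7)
The plan is to use Claim~\ref{claim:GHandInterleaving} as a black box: it already sandwiches $\dgh(\mathcal{T}_1,\mathcal{T}_2)$ between $\optdI/14$ and $2\optdI$, where $\optdI = \min_{u \in V_1,\, w \in V_2} d_I(T_1^{f_u}, T_2^{g_w})$. So it suffices to compute a $c$-approximation $\widehat{\mu}$ of $\optdI$ using the hypothesized interleaving-distance approximator, and then return $\widehat{\mu}$ itself as the output; composing the two ratios will yield the desired $O(c)$ factor.

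Algorithmically, the natural approach is to enumerate every pair of tree nodes $(u,w) \in V(T_1)\times V(T_2)$, giving $O(n^2)$ pairs. For each pair, build the merge trees $T_1^{f_u}$ and $T_2^{g_w}$ by rooting $T_i$ at the corresponding base node and filling in the height functions $f_u,g_w$ via a single DFS per pair (an $O(n)$ overhead), then invoke the assumed algorithm to obtain a value $\widehat{d}_{u,w}$ with $\widehat{d}_{u,w}/c \le d_I(T_1^{f_u}, T_2^{g_w}) \le c\,\widehat{d}_{u,w}$ in $T(n)$ time. Output $\widehat{\mu} := \min_{u,w} \widehat{d}_{u,w}$. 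Since any interleaving-distance algorithm has $T(n)=\Omega(n)$ just to read its input, the total running time is $O(n^2\, T(n))$.

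For correctness, first observe that $\widehat{\mu}$ is itself a $c$-approximation of $\optdI$. Let $(u^\ast, w^\ast)$ achieve $\optdI$ and $(\hat u, \hat w)$ achieve $\widehat{\mu}$; then $\widehat{\mu} \le \widehat{d}_{u^\ast,w^\ast} \le c\, d_I(T_1^{f_{u^\ast}}, T_2^{g_{w^\ast}}) = c\,\optdI$ on one side, and $\optdI \le d_I(T_1^{f_{\hat u}}, T_2^{g_{\hat w}}) \le c\,\widehat{d}_{\hat u,\hat w} = c\,\widehat{\mu}$ on the other, giving $\widehat{\mu}/c \le \optdI \le c\,\widehat{\mu}$. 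Composing with Claim~\ref{claim:GHandInterleaving} then produces $\widehat{\mu}/(14c) \le \dgh(\mathcal{T}_1,\mathcal{T}_2) \le 2c\,\widehat{\mu}$, so $\widehat{\mu}$ is a $(14c)$-approximation of $\dgh$, which is $O(c)$.

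The main obstacle is not actually here: this corollary is essentially an exercise in composing approximation ratios once Claim~\ref{claim:GHandInterleaving} is in hand. The only points worth explicitly checking are (i) that the minimum of per-pair approximations really is an approximation of the minimum (handled by the two-line argument above), and (ii) that each $f_u,g_w$ indeed defines a valid merge tree in the sense of Section~\ref{sec:notations} — this holds because each is continuous and, after rooting $T_i$ at its base node, strictly decreases along every root-to-leaf path since it is the negation of the geodesic distance from that root. The substantive work of the paper lies upstream, in establishing Claim~\ref{claim:GHandInterleaving}, and downstream, in designing the interleaving-distance algorithm with a small $T(n)$.
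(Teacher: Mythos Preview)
Your proposal is correct and matches the paper's intent: the paper does not give a separate proof of this corollary, stating only that Claim~\ref{claim:GHandInterleaving} ``immediately suggests'' it. Your argument---enumerate the $O(n^2)$ base-point pairs, approximate each interleaving distance, take the minimum, and compose the resulting $c$-factor with the $14$- and $2$-factors from Claim~\ref{claim:GHandInterleaving}---is exactly the intended justification, and your check that the minimum of $c$-approximations is a $c$-approximation of the minimum fills in the one nontrivial step the paper leaves implicit.
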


In the remainder of this paper, we will focus on developing an algorithm to compute the interleaving distance between two merge trees $T_1^f$ and $T_2^g$. 
In particular, in Section \ref{sec:epsgood} we will first show an equivalent definition for interleaving distance, which has a nice structure that helps us to develop a fixed-parameter tractable algorithm for the decision problem of ``Is 
$d_I(T_1^f, T_2^g) \ge \myeps$?'' in Section \ref{sec:decision}. 
We show how this ultimately leads to FPT algorithms to compute the interleaving distance {\bf exactly} and to {\bf approximate} the Gromov-Hausdorff distance in Section \ref{sec:optimization}. 

\section{A New Definition for Interleaving Distance}
\label{sec:epsgood}

Given two merge trees $T_1^f$ and $T_2^g$ and $\mydelta > 0$, to answer the question  ``Is $d_I(T_1^f, T_2^g) \le \mydelta$?", a natural idea
is to scan the two trees bottom up w.r.t the ``height'' values (i.e, $f$ and $g$), while checking for possible $\myeps$-compatible maps between the partial forests of $T_1^f$ and $T_2^g$ already scanned. 
However, the interaction between the pair maps $\alpha$ and $\beta$ makes it complicated to maintain potential maps. 
We now show that in fact, we only need to check for the existence of a \emph{single} map from $T_1^f$ to $T_2^g$, which we will call the $\myeps$-\mygood{} map. 
We believe that this result is of independent interest. 

\begin{Definition}[$\eps$-\mygood{} map] \label{def:epsgood} 
A continuous map $\alpha: \uTone \rightarrow \uTwo$ is ${\eps}$-\mygood{} if and only if:
\begin{description}\denselist
\item[{\sf (P1)}] for any $u \in \uTone$, we have $g(\alpha(u))=f(u)+\eps$; 
\item[{\sf (P2)}] if $\alpha(u_1)\succeq\alpha(u_2)$, then we have $u_1^{2\eps}\succeq u_2^{2\eps}$, (note $u_1 \succeq u_2$ may not be true); and
\item[{\sf (P3)}] if $w\in \uTwo \setminus \Img(\alpha)$,
then we have $|g(\mywF)-g(w)|\leq 2\eps$, where $\Img(\alpha) \subseteq \uTwo$ is the image of $\alpha$, and $\mywF$ is the lowest ancestor of $w$ in $\Img(\alpha)$.
\end{description}
\end{Definition}

A map $\rho: |T_1^{h_1}| \to |T_2^{h_2}|$ between two arbitrary merge trees $T_1^{h_1}$ and $T_2^{h_2}$ is \emph{monotone} if for any $u\in |T_1^{h_1}|$, we have that $h_2( \rho(u)) \ge h_1(u)$. In other word, $\rho$ carries any point $u$ from $T_1^{h_1}$ to a point higher than it in $T_2^{h_2}$. 
If $\rho$ is continuous, then it will map an ancestor of $u$ in $T_1^{h_1}$ to an ancestor of $\rho(u)$ in $T_2^{h_2}$ as stated below (but the converse is not necessarily true):  
\begin{obs}\label{obs:ancester}
Given a \emph{continuous and monotone} map $\rho: |T_1^{h_1}| \to |T_2^{h_2}|$ between two merge trees $T_1^{h_1}$ and $T_2^{h_2}$, we have that if $u_1\succeq u_2$ in $T_1^{h_1}$, then $\rho(u_1)\succeq\rho(u_2)$ in $T_2^{h_2}$. 

This implies that if $w = \rho(u)$ for $u\in |T_1^{h_1}|$, 
then $\rho$ maps the subtree $T_1^{h_1}(u)$ rooted at $u$ into the subtree $T_2^{h_2}(w)$ rooted at $w$. 
This also implies that if $w\notin \Img(\rho)$, neither does any of its descendant. 
\end{obs}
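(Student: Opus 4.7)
The plan is to reduce the ancestor-preservation claim to a statement about continuous ascending paths in the target merge tree, and then exploit the tree structure of $|T_2^{h_2}|$ to identify such paths with ancestor chains. Concretely, given $u_1 \succeq u_2$ in $T_1^{h_1}$, I would first take the unique simple path $\pi \subseteq |T_1|$ from $u_2$ to $u_1$; since $u_1$ is an ancestor of $u_2$ and $h_1$ is monotonically decreasing along any root-to-leaf path, $h_1$ is non-decreasing along $\pi$ when it is traversed from $u_2$ to $u_1$. The composition $\rho \circ \pi$ is then a continuous curve in $|T_2|$ connecting $\rho(u_2)$ to $\rho(u_1)$. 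The heart of the argument is a structural lemma for merge trees: any continuous path in $|T^h|$ along which $h$ is monotonically non-decreasing is contained in an ancestor chain of its starting point. The justification is that in a rooted tree, from any point there is a locally unique ``upward'' direction toward the root, so continuity pins such a path onto the unique ancestor chain through its base point. Applied to $\rho\circ\pi$, this lemma forces $\rho(u_1)$ to lie on the ancestor chain of $\rho(u_2)$, giving $\rho(u_1) \succeq \rho(u_2)$ as desired.

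The two stated corollaries then follow essentially for free. For the subtree-containment, given $w=\rho(u)$ and any $v \in T_1^{h_1}(u)$, we have $u \succeq v$, so the main claim gives $\rho(v) \preceq w$, i.e.\ $\rho(v) \in T_2^{h_2}(w)$. For the contrapositive on images, suppose some descendant $w' \preceq w$ lies in $\Img(\rho)$ with $w' = \rho(v)$; applying the main claim to the ancestor chain of $v$ (which extends to $+\infty$ along the root-ray of $T_1$ and whose image must land on the root-ray of $T_2$ since heights blow up) places the entire simple path in $|T_2|$ from $w'$ up to the root-ray inside $\Img(\rho)$. In particular $w$ lies on that path, so $w \in \Img(\rho)$, contradicting the hypothesis.

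The main obstacle is the structural lemma itself. The bare monotonicity assumption only gives a pointwise inequality $h_2(\rho(\cdot)) \ge h_1(\cdot)$, so $h_2$ along the image curve $\rho\circ\pi$ is not automatically non-decreasing even though $h_1$ along $\pi$ is; without care, the image could in principle detour up into a common ancestor of $\rho(u_1)$ and $\rho(u_2)$ and then back down into a sibling subtree. The cleanest way to close this gap is to invoke the fact that the maps we will actually apply this observation to (the $\eps$-compatible maps of Definition~\ref{def:compatible} and the $\eps$-\mygood{} maps of Definition~\ref{def:epsgood}) satisfy the equality $h_2(\rho(u)) = h_1(u)+\eps$, which does make $h_2 \circ \rho$ strictly monotone along $\pi$. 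One then still has to verify that at branch points of $T_2$ a continuous strictly height-increasing path cannot switch into a sibling subtree, which follows from the uniqueness of the upward direction at every point of a tree.
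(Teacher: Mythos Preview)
The paper states this as an observation without proof, so there is no argument to compare against directly. Your approach---push the monotone path $\pi$ from $u_2$ up to $u_1$ through $\rho$ and argue that the image must stay on the ancestor chain of $\rho(u_2)$---is the natural one, and your derivation of the two corollaries from the main claim is correct.

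More importantly, you are right to flag the gap in the statement itself. With the paper's definition of ``monotone'' ($h_2(\rho(u)) \ge h_1(u)$), the observation is actually \emph{false}. A concrete counterexample: let $T_1$ be a single edge of height~$2$, let $T_2$ be a $\mathsf{Y}$ with branch point at height~$2$ and two leaves at height~$0$, and define $\rho$ so that as you ascend $T_1$ from height~$0$ to~$1$ the image runs up the left branch at double speed to the branch point, then for heights in $[1,4/3]$ the image dips down the right branch to height~$4/3$, and finally climbs back up. This $\rho$ is continuous and satisfies $h_2(\rho(u)) \ge h_1(u)$ throughout, yet $\rho(u_{4/3})$ sits on the right branch while $\rho(u_0)$ is the left leaf, so neither is an ancestor of the other even though $u_{4/3} \succeq u_0$.

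Your proposed fix---to replace the inequality by the equality $h_2(\rho(u)) = h_1(u) + \eps$ that actually holds for the $\eps$-compatible and $\eps$-good maps---is exactly the right repair. Under that hypothesis $h_2 \circ \rho \circ \pi$ is genuinely non-decreasing, and your structural lemma (a height-non-decreasing continuous path in a merge tree stays on the ancestor chain of its base point) is both true and easily proved by the local-uniqueness-of-the-upward-direction argument you sketch. Since every invocation of Observation~\ref{obs:ancester} in the paper is for maps satisfying this equality, your restricted version suffices for all downstream uses.
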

Note that an $\eps$-\mygood{} map, or a pair of $\eps$-compatible maps, are all monotone and continuous. Hence the above observations are applicable to all these maps.  

The main result of this section is as follows. Its proof is in Appendix \ref{appendix:thm:epsgoodtwo}. 

\begin{theorem}\label{thm:epsgoodtwo}
Given any two merge trees $T_1^f$ and $T_2^g$, then $d_I(T_1^f, T_2^g)\le \eps$ if and only if there exists an ${\eps}$-\mygood{} map ${\alpha}:\uTone \rightarrow \uTwo$. 
\end{theorem}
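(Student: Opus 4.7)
My plan is to prove both directions of the equivalence. The forward direction is easy: given an $\eps$-compatible pair $(\alpha,\beta)$, I check that $\alpha$ itself satisfies (P1), (P2), (P3). Property (P1) is literally (C1). For (P2), if $\alpha(u_1)\myanceq \alpha(u_2)$, then applying the continuous and monotone map $\beta$ and using Observation~\ref{obs:ancester} yields $\beta\circ\alpha(u_1)\myanceq \beta\circ\alpha(u_2)$, which by (C2) is $u_1^{2\eps}\myanceq u_2^{2\eps}$. For (P3), take $w\in \uTwo \setminus \Img(\alpha)$; by (C4), $w^{2\eps}=\alpha(\beta(w))$ lies in $\Img(\alpha)$, so the lowest ancestor $w^F$ of $w$ in $\Img(\alpha)$ satisfies $g(w)\le g(w^F) \le g(w^{2\eps})=g(w)+2\eps$.

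The substantive direction is the converse. Given an $\eps$-\mygood{} map $\alpha$, I construct $\beta:\uTwo\to\uTone$ in two pieces. For $w\in\Img(\alpha)$, I pick any $u\in\alpha^{-1}(w)$ and set $\beta(w):=u^{2\eps}$; this is well-defined because if $u_1,u_2\in\alpha^{-1}(w)$ then (P2) applied both ways gives $u_1^{2\eps}\myanceq u_2^{2\eps}$ and $u_2^{2\eps}\myanceq u_1^{2\eps}$, forcing equality. For $w\notin\Img(\alpha)$, let $C$ denote the connected component of $\uTwo\setminus\Img(\alpha)$ containing $w$; because $\Img(\alpha)$ is a connected subtree (containing the infinite upward ray), all points of $C$ share a single lowest ancestor $w^F\in\Img(\alpha)$, namely the unique attachment point of $C$. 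Fixing any preimage $u_C\in\alpha^{-1}(w^F)$, I define $\beta(w)$ to be the unique point at height $g(w)+\eps$ on the monotone path from $u_C$ upward to $u_C^{2\eps}$ in $\uTone$. Properties (P3) and (P1) guarantee $g(w)+\eps\in[f(u_C),f(u_C^{2\eps})]$, so this point exists and is unique.

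To finish, I verify that $(\alpha,\beta)$ is $\eps$-compatible. (C1) is (P1), and (C3) holds by construction in both cases (using (P1) on $\Img(\alpha)$). (C2) follows because $\alpha(u)\in\Img(\alpha)$, so $\beta(\alpha(u))=u^{2\eps}$ by the first-piece definition. For (C4), in either case $\beta(w)$ is an ancestor in $\uTone$ of some preimage of $w$ (when $w\in\Img(\alpha)$) or of $w^F$ (otherwise); by Observation~\ref{obs:ancester}, $\alpha(\beta(w))$ is an ancestor of $w$, and its height $f(\beta(w))+\eps = g(w)+2\eps$ pins it down to $w^{2\eps}$.

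The main obstacle I expect is proving continuity of $\beta$. On $\Img(\alpha)$, continuity follows from a sequential argument: preimages of a convergent $w_n\to w$ stay in a bounded-height region of $\uTone$ (since $\alpha$ shifts heights by $+\eps$), so passing to a convergent subsequence and invoking continuity of $\alpha$ and of the ancestor-shift $u\mapsto u^{2\eps}$ yields $\beta(w_n)\to\beta(w)$. On each component $C$ the path-based formula is continuous because it depends on $w$ only through $g(w)$. Gluing at the interface $w=w^F$: the path-based value tends to $u_C^{2\eps}$ as $g(w)\uparrow g(w^F)$ inside $C$, which matches the first-piece value $\beta(w^F)$ by well-definedness, even when several components share the same $w^F$. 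Hence $\beta$ is globally continuous, completing the equivalence.
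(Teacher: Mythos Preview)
Your proposal is correct and follows essentially the same route as the paper: both directions are argued identically, and your construction of $\beta$ (set $\beta(w)=u^{2\eps}$ on $\Img(\alpha)$; for $w\notin\Img(\alpha)$ pick a fixed preimage $u_C$ of $w^F$ and place $\beta(w)$ on the monotone segment $[u_C,u_C^{2\eps}]$ at height $g(w)+\eps$) is exactly the paper's construction, with well-definedness and (C1)--(C4) checked the same way. The only methodological difference is in the continuity of $\beta$: the paper introduces the function-induced metrics $d_f,d_g$ and does an $\eps$--$\delta$ case analysis, whereas you use a compactness/subsequence argument on $\Img(\alpha)$ together with the observation that $\Img(\alpha)$, being a connected subset containing the upward ray, is upward-closed so each complementary component has a single attachment point $w^F$; both arguments are valid, and yours is arguably more conceptual, though you should make explicit that the ``subsequence'' step upgrades to full-sequence convergence via the standard ``every subsequence has a further subsequence with the same limit'' trick.
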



\section{Decision Problem for Interleaving Distance}
\label{sec:decision}

\begin{figure}[htbp]
\vspace*{-0.1in}
\begin{center}
\begin{tabular}{ccc}
\includegraphics[width=4cm]{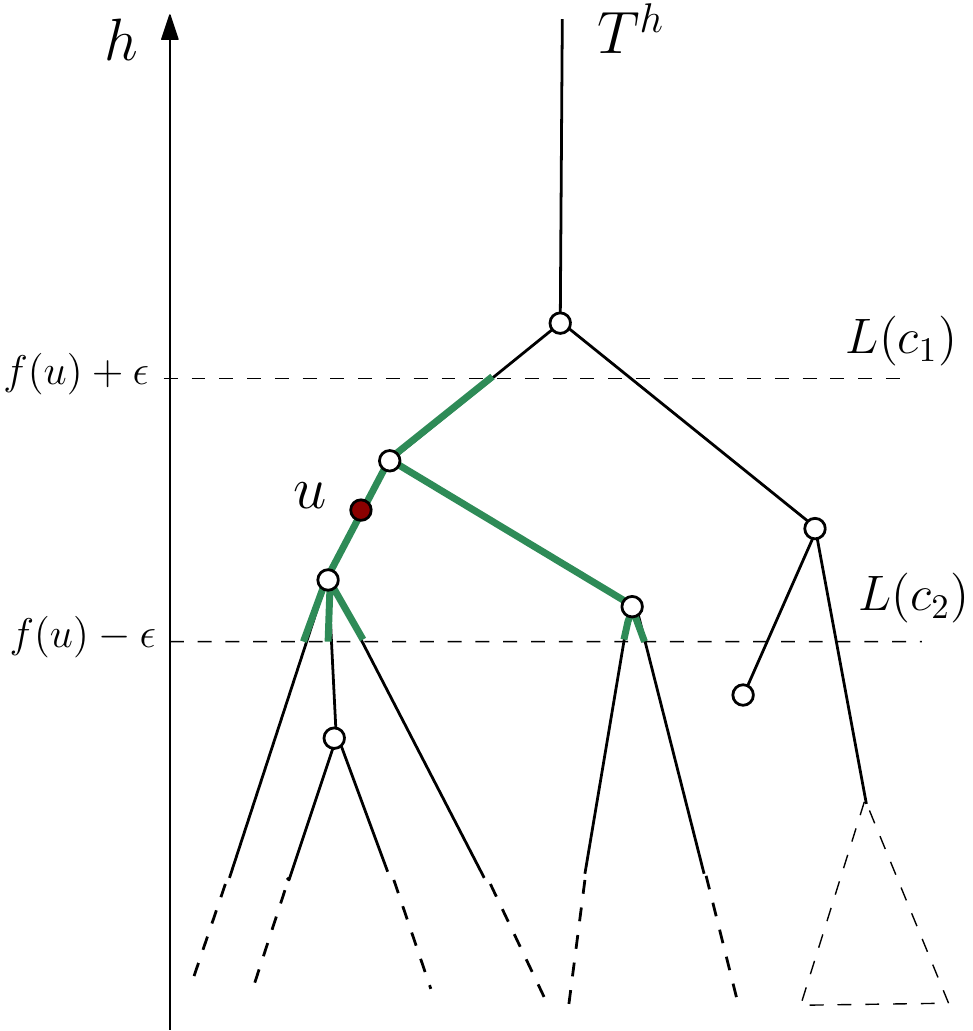} & \hspace*{0.2in}&
\includegraphics[height=4.5cm]{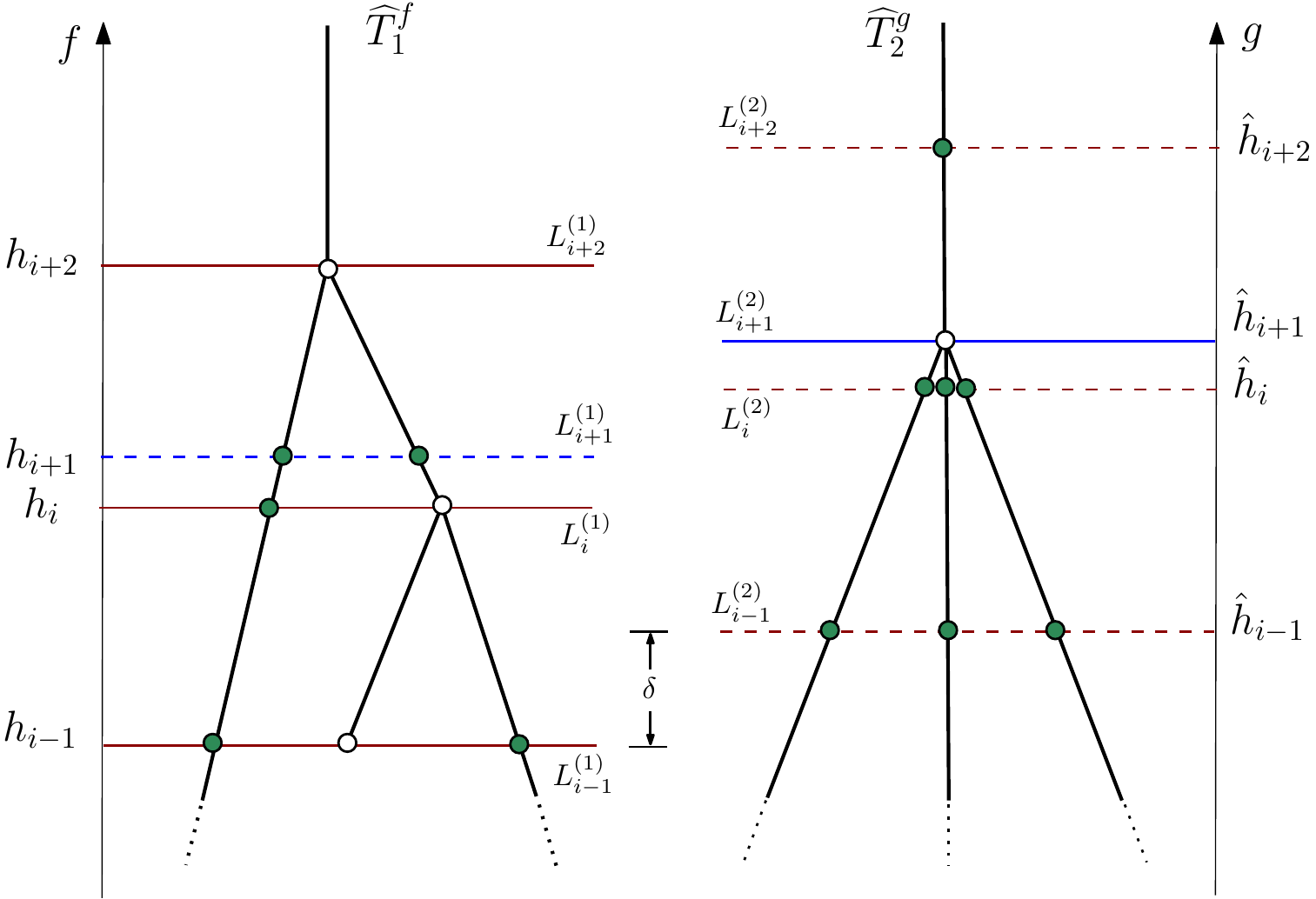} \\
(a) & & (b) 
\end{tabular}
\end{center}
\vspace*{-0.25in}
\caption{(a) Green component within the slab is $B_\eps(u, T_1^f)$. The sum of degrees for nodes within this $\eps$-ball is $13$. The $\eps$-degree bound $\myk_\eps(T_1^f, T_2^g)$ is the largest value of this sum for any $\eps$-ball in $T_1^f$ or in $T_2^g$. 
(b) White dots are tree nodes of $T_1^f$ and $T_2^g$. Green dots are newly augmented tree nodes in $\hatT_1^f$ and $\hatT_2^g$. 
\label{fig:augmentedtree}}
\end{figure}
 In this section, given two merge trees $T_1^f$ and $T_2^g$ as well as a positive value $\mydelta > 0$, we aim to develop a fixed-parameter tractable algorithm for the decision problem ``Is $d_I(T_1^f, T_2^g) \le  \mydelta$?''. 
The specific parameter our algorithm uses is the following: 
Given a merge tree $T^h$ and any point $u\in T^h$, let $B_\eps(u; T^h)$ denote the $\eps$-ball 
$$B_\eps(u; T^h) = \{ x \in |T| \mid \forall y\in \pi_{T} (u, x), |h(y) - h(u)| \le \eps \},$$
where $\pi_{T}(u, x)$ is the unique path from $u$ to $x$ in $T^h$. 
In other words, $B_\eps(u; T^h)$ contains all points reachable from $u$ via a path whose function value is completely contained with the range $[f(u)-\eps, f(u)+\eps]$. 
See Figure \ref{fig:augmentedtree} (a) for an example: in particular, consider the restriction of $T^h$ within the height interval $[f(u)-\eps, f(u)+\eps]$. There could be multiple components within this slab, and $B_\eps(u; T^h)$ is the component containing $u$. 
\begin{description}\label{descript:parameter}
\item[Parameter $\myk_\mydelta$]: Let $\myk_\eps(T_1^f, T_2^g)$ denote  
the largest sum of degrees of all tree nodes contained in any $\myeps$-ball in $T_1^f$ or $T_2^g$, which we also refer to as the \emph{$\myeps$-\degbound{}} of $T_1^f$ and $T_2^g$. 
The parameter for our algorithm for the decision problem will be $\mykpara = \myk_\mydelta(T_1^f, T_2^g)$. 
\end{description}

\subsection{A Slower FPT-Algorithm}
\label{subsec:FPT}

\paragraph{Augmented trees. }
We now develop an algorithm for the decision problem  ``Is $d_I(T_1^f, T_2^g) \le \mydelta$?'' via a dynamic programming type approach. 
First, we will show that, even though a $\mydelta$-\mygood{} map is defined for all (infinite number of) points from $T_1^f$ and $T_2^g$, we can check for its existence by inspecting only a finite number of points from $T_1^f$ and $T_2^g$. 
In particular, we will augment the input merge trees $T_1^f$ and $T_2^g$ with extra tree nodes, and our algorithm later only needs to consider the discrete nodes in these augmented trees to answer the decision problem.

The set of points from tree $T_1^f$ or $T_2^g$ at a certain height value $c$ is called a \emph{level (at height $c$)}, denoted by $\myL(c)$. 
For example, in Figure \ref{fig:augmentedtree} (a), the level $L(c_1)$ for $c_1 = f(u)  + \eps$, contains 2 points, while $L(c_2)$ with $c_2 = f(u) - \eps$ contains 7 points. 
The function value of a level $L$ is called its \emph{height}, denoted by $height(L)$; so $height(\myL(c)) = c$. 
\begin{Definition}[Critical-heights and Super-levels]\label{def:superlevels}
For the tree $T_1^f$, the \emph{set of critical-heights} $\myC_1$ 
consists of the function values of all tree nodes of $T_1^f$; similarly, define $\myC_2$ for $T_2^g$. That is, 
$$\myC_1 := \{ f(x) \mid x~ \text{is a tree node of}~ T_1^f \}; ~\text{and}~\myC_2  := \{ g(y) \mid y~ \text{is a tree node of}~ T_2^g \}. $$ 

\noindent The \emph{set of \superlevel{}s $\setSL_1$ w.r.t. $\mydelta$} for $T_1^f$ and the \emph{set of \superlevel{}s $\setSL_2 $} for $T_2^g$ are: 
\begin{align*}
\setSL_{1} &:= \{ \myL(c) \mid c \in \myC_1 \} \cup \{ \myL(c-\mydelta) \mid c \in \myC_2 \} ~\text{while} \\
\setSL_{2} &:= \{ \myL(c+\mydelta) \mid c \in \myC_1 \} \cup \{ \myL(c) \mid c \in \myC_2 \}. 
\end{align*}
\end{Definition}

Now sort all levels in $\setSL_i$ in increasing order of their heights, denoted by $\setSL_1 = \{\slone{1}, \slone{2}, \ldots, \slone{m} \}$ and $\setSL_2 = \{\sltwo{1}, \ldots, \sltwo{m} \}$, respectively. The \emph{child-level} of \superlevel{} $\slone{i}$ (resp. $\sltwo{i}$) is $\slone{i-1}$ (resp. $\sltwo{i-1}$) for any $i\in [2, m]$; symmetrically, $\slone{i}$ (resp. $\sltwo{i}$) is the \emph{parent-level} of $\slone{i-1}$ (resp. $\sltwo{i-1}$). 
Let $h_1, \ldots, h_m$ be the sequence of height values for $\slone{1}, \slone{2}, \ldots, \slone{m}$; that is, $h_i = height(\slone{i})$. 
Similarly, let $\widehat{h}_1, \widehat{h}_2, \ldots, \widehat{h}_m$ be the corresponding sequence for $\sltwo{i}$'s.

Note that 
there is a one-to-one correspondence between \superlevel{}s in $\setSL_1$ and $\setSL_2$: specifically, for any $i\in [1, m]$, we have $\widehat{h}_i = h_i + \mydelta$. 
From now on, when we refer to the $i$-th \superlevel{}s of $\hatT_1^f$ and $\hatT_2^g$, we mean \superlevel{}s $\slone{i}$ and $\sltwo{i}$. 
Also observe that there is no tree node in between any two consecutive \superlevel{}s in either $T_1^f$ or in $T_2^g$ (all tree nodes are from some \superlevel{}s). See Figure \ref{fig:augmentedtree} (b) for an illustration. 

Next, we augment the tree $T_1^f$ (resp. $T_2^g$) to add points from all \superlevel{}s from $\setSL_1$ (resp. from $\setSL_2$) also as \emph{tree nodes}. 
The resulting \emph{augmented trees} are denoted by $\hatT_1^f$ and $\hatT_2^g$ respectively; obviously, $\hatT_1^f$ (resp. $\hatT_2^g$) has isomorphic underlying space as $T_1^f$ (resp. $T_2^g$), just with additional degree-2 tree nodes. 
In particular, $V(\hatT_1^f)$ (resp. $V(\hatT_2^g)$) is formed by all points from all \superlevel{}s in $\setSL_1$ (resp. $\setSL_2$). 
See Figure \ref{fig:augmentedtree} (b): In this figure, solid horizontal lines indicate levels passing through critical heights, while dashed ones are induced by critical height from the other tree.
In what follows, given a \superlevel{} $L$, we use $V(L)$ to denote the set of nodes from this level. 
Note that $V(\slone{m})$ and $V(\sltwo{m})$ each contain only one node, which is $root(\hatT_1^f)$ and $root(\hatT_2^g)$ respectively. 
Given a node $v$ from $\slone{i}$ (resp. $\sltwo{i}$), let $\levelC(v)$ denote its children nodes in the augmented tree. Each child node of $v$ must be from level $\slone{i-1}$ (resp. $\sltwo{i-1}$), as there are no tree-nodes between two consecutive \superlevel{}s. 

\begin{Definition}[Valid pair]
Given a node $w \in V(\hatT_2^g)$ and a collection of nodes $S \subseteq V(\hatT_1^f)$, we say that $(S, w)$ form a \emph{valid pair} if there exists an index $j\in [1, m]$ such that (1) $S \subseteq V(\slone{j})$ and $w \in V(\sltwo{j})$ (which implies that nodes in $S$ at height $h_j$ while $w$ has height $g(w) = \widehat{h}_j$); and (2) all nodes in $S$ have the same ancestor at height $h_j + 2\mydelta$ (which also equals $\widehat{h}_j + \mydelta$). Intuitively, it indicates that $S$ has the basic condition to be mapped to $w$ under some ${\eps}$-\mygood{} maps.

We say that $S$ is \emph{valid} if it participates some valid pair (and thus condition (2) above holds). 
\end{Definition}

\paragraph{A first (slower) dynamic programming algorithm.}
We now describe our dynamic programming algorithm. 
To illustrate the main idea, we first describe a much cleaner but also slower dynamic programming algorithm \DPalg() below. Later in Section \ref{subsec:fasteralg} we modify this algorithm to improve its time complexity (which requires significant additional technical details). 

Our algorithm maintains a certain quantity, called \emph{feasibility} $F(S, w)$ for valid pairs in a bottom-up manner. 
Recall that we have defined the \mydepth{} of a node $u \in T^h$ in a merge tree $T^h$ as the height of the subtree $T^h(u)$ rooted at $u$; or equivalently
$\newdepth(u) = \max_{x \mydesceq u} |h(u) - h(x)|. $

\begin{description}\denselist
\item[Algorithm \DPalg($T_1^f, T_2^g, \mydelta$): ]
\item[Base case ($i=1$):] For each valid-pair $(S, w)$ from level-1, set $F(S, w) = 1$ (``true'') if and only if $\newdepth(w) \le 2\mydelta$; otherwise, set $F(S, w) = 0$ (``false''). 
\item[When $i > 1$:] Suppose we have already computed the feasibility values for all valid-pairs from level-($i-1$) or lower. 
Now for any valid-pair $(S, w)$ from level-$i$, 
we set $F(S,w) = 1$ if and only if the following holds: 
Consider the set of children $\levelC(S) \subseteq \slone{i-1}$ of nodes in $S$, and $w$'s children $\levelC(w) = \{w_1, \ldots, w_k\}$ in $\sltwo{i-1}$. 

If $\levelC(w)$ is empty, then $F(S,w)=1$ \emph{only if} $\levelC(S)$ is also empty; otherwise $F(S,w) = 0$. 

If $\levelC(w)$ is not empty, then we set $F(S,w)$=1 if there exists a partition of $\levelC(S) = S_1 \cup S_2 \cup \ldots \cup S_k$ (where $S_i \cap S_j = \emptyset$ for $i \neq j$, and it is possible that $S_i =\emptyset$) such that for each $j\in [1, k]$, 
\begin{itemize}\denselist
\item[(F-1)] if $S_j \neq \emptyset$, then $F(S_j, w_j) = 1$; and
\item[(F-2)]  if $S_j  = \emptyset$, then $\newdepth(w_j) \le 2\mydelta - (\widehat{h}_i- \widehat{h}_{i-1})$; note that this implies that $\widehat{h}_i- \widehat{h}_{i-1} \le 2\mydelta$ in this case.
\end{itemize}
\item[Output:] \DPalg($T_1^f, T_2^g, \mydelta$) returns ``yes'' if and only if $F(root(\hatT_1^f), root(\hatT_2^g)) = 1$. 
\end{description}
Recall that $root(\hatT_1^f)$ (resp. $root(\hatT_2^g)$) is the only node in $V(\slone{m})$ (resp. $V(\sltwo{m})$). 

We will first prove the following theorem for this slower.
In Section \ref{subsec:fasteralg} we show that time complexity can be reduced by almost a factor of $n$. 
\begin{theorem}\label{thm:DPalg}
(i) Algorithm \DPalg($T_1^f, T_2^g, \mydelta$) returns ``yes'' if and only if $d_I(T_1^f, T_2^g) \le \mydelta$. 

(ii) Algorithm \DPalg($T_1^f, T_2^g, \mydelta$) can be implemented to run in $O(n^3 2^{\myk}\myk^{\myk+1} )$ time, where $n$ is the total size of $T_1^f, T_2^g$, and $\myk = \myk_\mydelta(T_1^f, T_2^g)$ is the $\mydelta$-\degbound{} w.r.t. $T_1^f$ and $T_2^g$. 

Note that if $\myk$ is constant, then the time complexity is $O(n^3)$. 
\end{theorem}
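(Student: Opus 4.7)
The plan is to establish part (i) by strong induction on the level index $i$, proving the following inductive claim: for every valid pair $(S,w)$ at level $i$, $F(S,w)=1$ if and only if there exists a continuous monotone map $\alpha_{S,w}\colon \bigcup_{s\in S}|\hatT_1^f(s)| \to |\hatT_2^g(w)|$ that is $\mydelta$-\mygood{} in the restricted sense --- it satisfies (P1) pointwise, the ancestor compatibility (P2) wherever the relevant $2\mydelta$-ancestors lie inside this subforest/subtree, and the image-gap condition (P3) for every point of $\hatT_2^g(w)$. Applied at level $m$ to the unique valid pair $(\{root(\hatT_1^f)\},root(\hatT_2^g))$ and combined with Theorem \ref{thm:epsgoodtwo}, this immediately yields the equivalence ``$\DPalg$ returns yes'' $\Leftrightarrow$ $d_I(T_1^f,T_2^g)\le\mydelta$.

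The base case $i=1$ is immediate: since $\slone{1}$ is the lowest \superlevel{}, each $s\in S$ is a leaf of $\hatT_1^f$, so the domain of $\alpha_{S,w}$ is discrete and the only nontrivial requirement on $\alpha_{S,w}$ is (P3) for descendants of $w$, which is exactly $\newdepth(w)\le 2\mydelta$. For the inductive step, given a restricted $\mydelta$-\mygood{} map $\alpha_{S,w}$, continuity and (P1) force $\alpha_{S,w}$ to send each child $s\in\levelC(S)$ into the subtree rooted at some child $w_j\in\levelC(w)$ (the edge of height-difference $\widehat{h}_i-\widehat{h}_{i-1}$ between $s$ and its parent in $S$ must image to a monotone path from $w_j$ up to $w$, which is well-defined since no tree nodes sit strictly between consecutive \superlevel{}s). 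This induces a partition $\levelC(S)=S_1\sqcup\cdots\sqcup S_k$; property (P2) at height $h_{i-1}+2\mydelta$ forces each nonempty $S_j$ to share a common $2\mydelta$-offset ancestor, so $(S_j,w_j)$ is a valid pair whose restriction of $\alpha_{S,w}$ meets the inductive hypothesis, giving $F(S_j,w_j)=1$. For empty $S_j$, no point of $\hatT_2^g(w_j)$ lies in $\Img(\alpha_{S,w})$ and the lowest image-containing ancestor of any descendant is $w$ itself at height $\widehat{h}_i$, so (P3) applied to the deepest descendant rearranges to $\newdepth(w_j)\le 2\mydelta-(\widehat{h}_i-\widehat{h}_{i-1})$, which is exactly (F-2). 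The converse direction assembles the child maps $\alpha_{S_j,w_j}$ and extends linearly along the edges between levels $i-1$ and $i$ to construct $\alpha_{S,w}$, then verifies (P1)--(P3) from the hypotheses.

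For part (ii), the complexity bound rests on a $\mydelta$-ball argument: every set of tree nodes at level $\slone{j}$ that share a common ancestor $a$ at height $h_j+2\mydelta$ lies in the $\mydelta$-ball centered at any midway point of height $h_j+\mydelta$ in the subtree below $a$, so its total degree, and hence its cardinality, is at most $\mykpara$. Thus each common ancestor yields at most $2^\mykpara$ valid subsets $S$. Ranging over $O(n)$ \superlevel{}s, $O(n)$ candidate ancestors, and $O(n)$ choices of $w$ at a level produces $O(n^3 2^\mykpara)$ valid pairs. To fill $F(S,w)$ for a pair at level $i>1$, we enumerate functions $\levelC(S)\to\levelC(w)$ (each set of size at most $\mykpara$), costing $O(\mykpara^\mykpara)$ assignments, with $O(\mykpara)$ table lookups and depth checks per assignment, for $O(\mykpara^{\mykpara+1})$ work per entry. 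Multiplying gives the stated $O(n^3 2^\mykpara \mykpara^{\mykpara+1})$ bound.

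The hardest part will be the gluing (``if'') direction of the inductive step in (i): even when each local map $\alpha_{S_j,w_j}$ is $\mydelta$-\mygood{} on its own subforest/subtree, we must verify that the assembled map preserves the global ancestor constraint (P2) across the boundary at level $i$. The key observation that rescues the argument is the common-$2\mydelta$-ancestor condition built into the definition of valid pair: all $s\in S_j$ must have the same $2\mydelta$-ancestor in $\hatT_1^f$, which is forced to map to the $\mydelta$-offset ancestor of $w_j$ (equivalently, the unique descendant of $w$ at height $\widehat{h}_i+\mydelta$). Showing that this compatibility propagates correctly as we splice the child maps into a single continuous $\mydelta$-\mygood{} map on the larger subforest is the most delicate bookkeeping in the proof.
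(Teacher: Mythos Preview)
Your proposal is correct and follows essentially the same route as the paper: part (i) is proved via the paper's Lemma~\ref{lem:partialgood} (your inductive claim is precisely the ``\mypartial{$\mydelta$}{} map'' characterization), and part (ii) via the paper's Lemmas~\ref{lem:sizevalidS} and~\ref{lem:coarsevalidbound}, with the same $\mydelta$-ball argument bounding $|S|$ and $|\levelC(S)|$ by~$\mykpara$. One small difference worth flagging: the paper's notion of \mypartial{$\mydelta$}{} requires {\sf (P2)} to hold for all $u_1,u_2\in\myF_1(S)$ even when $u_1^{2\mydelta},u_2^{2\mydelta}$ fall outside $\myF_1(S)$, whereas you restrict {\sf (P2)} to the interior --- your formulation still works because, as you note, the valid-pair condition supplies exactly the common-$2\mydelta$-ancestor information needed at the boundary, but be aware that in the gluing step (case $\alpha(u_2)\in T_2(w_j)$ with $u_1$ on the edge above $S_j$) the paper invokes the child map's {\sf (P2)} directly on points whose $2\mydelta$-ancestors lie outside $\myF_1(S_j)$, so under your weaker hypothesis you must instead argue via validity of $(S_j,w_j)$ as you indicate.
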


In the remainder of this section, we sketch the proof of Theorem \ref{thm:DPalg}. 

\paragraph{Part (i) of Theorem \ref{thm:DPalg}: correctness.} 
We first show the correctness of algorithm \DPalg(). 
Give a subset of nodes $S'$ from some \superlevel{} of $\hatT_1^f$, let $\myF_1(S')$ denote the forest consisting of all subtrees rooted at nodes in $S'$. For a node $w' \in T_2^g$, let $T_2(w')$ denote the subtree of $T_2^g$ rooted at $w'$. 
We will now argue that $F(S, w) = 1$ if and only if there is a ``partial'' $\mydelta$-good map from $\myF_1(S) \to T_2(w)$. 

More precisely: 
a continuous map $\alpha: \myF_1(S) \to T_2(w)$ with $(S, w)$ being valid is a \emph{\mypartial{$\eps$}}{}map, if properties {\sf (P1)}, {\sf (P2)}, and {\sf (P3)} from Definition \ref{def:epsgood} hold (with $T_1^f$ replaced by $\myF_1(S)$ and $T_2^g$ replaced by $T_2(w)$). Note that in the case of {\sf (P2)}, the condition in {\sf (P2)} only needs to hold for $u_1, u_2 \in \myF_1(S)$ (implying that $\alpha(u_1), \alpha(u_2)\in T_2(w)$); that is, if $\alpha(u_1) \myanceq \alpha(u_2)$ for $u_1, u_2 \in \myF_1(S)$, then, we have $u_1^{2\eps} \myanceq u_2^{2\eps}$. 
Note that while $u_1$ and $u_2$ are from $\myF_1(S)$,  $u_1^{2\eps}$ and $u_2^{2\eps}$ may not be in $\myF_1(S)$ as it is possible that $f(u_1^{2\eps}) = f(u_1) + 2\eps \ge height(S)$.
First, we observe the following: 
\begin{claim}\label{claim:rootgood}
At the top level  where $\slone{m} = \{ u = root(\hatT_1^f) \}$ and $\sltwo{m} = \{w = root(\hatT_2^g) \}$ both contain only one node, if there is a \mypartial{$\mydelta$}{} map from $\myF_1(\{u\}) \to T_2(w)$, then there is a $\mydelta$-good map from $\uTone \to \uTwo$. 
\end{claim}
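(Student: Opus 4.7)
The plan is to construct a $\mydelta$-\mygood{} map $\alpha': \uTone \to \uTwo$ by extending the given \mypartial{$\mydelta$} map $\alpha: \myF_1(\{u\}) \to T_2(w)$. Since $u = root(\hatT_1^f)$, we have $\myF_1(\{u\}) = T_1(u)$, so the only points of $\uTone$ outside the domain of $\alpha$ are those lying strictly above $u$ on the $+\infty$-ray of the merge tree; the symmetric statement holds for $w$ and $T_2(w)$. First I would observe that Observation~\ref{obs:ancester} combined with {\sf (P1)} forces $\alpha(u) = w$: by Observation~\ref{obs:ancester}, $\alpha(u)$ is an ancestor in $T_2^g$ of every point of $\alpha(T_1(u)) \subseteq T_2(w)$, while {\sf (P1)} pins its height to $g(w) = f(u) + \mydelta$, so $\alpha(u)$ must coincide with the root $w$ of $T_2(w)$. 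I then extend $\alpha$ to $\alpha'$ by sending each $v$ strictly above $u$ on the ray of $\uTone$ to the unique point on the ray of $\uTwo$ at height $f(v) + \mydelta$; the matching value $\alpha(u) = w$ makes this extension continuous.

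Next I would verify the three defining properties of a $\mydelta$-\mygood{} map for $\alpha'$. Property {\sf (P1)} is immediate from the construction. For {\sf (P3)}, observe that $\Img(\alpha')$ equals $\Img(\alpha)$ together with the entire ray above $w$, so any $y \in \uTwo \setminus \Img(\alpha')$ already lies in $T_2(w) \setminus \Img(\alpha)$; since $w = \alpha(u) \in \Img(\alpha)$ is itself an ancestor of $y$, the lowest ancestor of $y$ in $\Img(\alpha')$ coincides with its lowest ancestor in $\Img(\alpha)$, and the partial {\sf (P3)} bound transfers verbatim.

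Property {\sf (P2)} is the step requiring the most care, and I would split it into cases on whether each of $v_1, v_2 \in \uTone$ lies in $T_1(u)$ or strictly above $u$ on the ray. When both lie in $T_1(u)$, the partial {\sf (P2)} applied to $\alpha$ is exactly what is needed (its definition already permits $v_i^{2\mydelta}$ to leave $T_1(u)$). When $v_1 \in T_1(u)$ and $v_2$ lies strictly above $u$ on the ray, the heights satisfy $g(\alpha'(v_2)) > g(w) \ge g(\alpha'(v_1))$, so the hypothesis $\alpha'(v_1) \myanceq \alpha'(v_2)$ cannot hold and the implication is vacuous. In the remaining cases $v_1$ lies strictly above $u$ on the ray, so $v_1^{2\mydelta}$ lies even higher on the $+\infty$-ray and is automatically an ancestor of every point of $\uTone$ that lies in $\hatT_1^f$ or on the ray at no greater height; combining {\sf (P1)} with the hypothesis yields $f(v_1) \ge f(v_2)$, from which $v_1^{2\mydelta} \myanceq v_2^{2\mydelta}$ follows by a direct height comparison along the totally ordered ray. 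I expect this case analysis to be the only real subtlety; each sub-case collapses either to the partial property already in hand or to a comparison of function values along the $+\infty$-ray.
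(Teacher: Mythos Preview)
Your proposal is correct. The paper itself does not prove Claim~\ref{claim:rootgood}; it is stated and then used together with Lemma~\ref{lem:partialgood} to establish correctness of the algorithm, evidently treating the extension along the $+\infty$-ray as routine. Your argument supplies exactly the verification the paper omits: extend $\alpha$ along the ray by the unique height-compatible choice, check continuity at $u$ via $\alpha(u)=w$, and then verify {\sf (P1)}--{\sf (P3)} by a short case analysis. One small simplification: to see $\alpha(u)=w$ you do not need Observation~\ref{obs:ancester}; since $\alpha(u)\in T_2(w)$ and {\sf (P1)} forces $g(\alpha(u))=f(u)+\mydelta=g(w)$, and $w$ is the unique point of $T_2(w)$ at that height, the conclusion is immediate. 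Your {\sf (P2)} case analysis is sound; just note that in the last case $v_2^{2\mydelta}$ need not itself lie on the ray, but your earlier sentence already covers this, since any point on the ray is an ancestor of every point of $\uTone$ at no greater height.
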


 The correctness of our dynamic programming algorithm (part (ii) of Theorem \ref{thm:DPalg}) will follow from Claim \ref{claim:rootgood} and Lemma \ref{lem:partialgood} below. Lemma \ref{lem:partialgood} is one of our key techincal results, and its proof can be found in Appendix \ref{appendix:lem:partialgood}. 
\begin{lemma}\label{lem:partialgood}
For any valid pair $(S, w)$, $F(S,w) =1$ if and only if there is a \mypartial{$\mydelta$}{} map $\alpha: \myF_1(S) \to T_2(w)$. 
\end{lemma}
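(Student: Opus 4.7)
The plan is to prove Lemma \ref{lem:partialgood} by induction on the super-level index $i$ at which the valid pair $(S,w)$ sits, so that the recursive structure of $F$ mirrors the super-level decomposition of a partial-$\mydelta$-good map. A crucial ``pinning'' observation used throughout is this: whenever $\alpha:\myF_1(S)\to T_2(w)$ is partial-$\mydelta$-good with $(S,w)$ at super-level $i$, property (P1) forces $g(\alpha(\bar u))=\widehat h_i$ for every $\bar u\in S$, and $w$ is the \emph{only} point of $T_2(w)$ at height $\widehat h_i$, so $\alpha(\bar u)=w$. The same argument one super-level down forces $\alpha(u')=w_j$ whenever $u'\in\levelC(S)$ has $\alpha(u')\in T_2(w_j)$. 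This pinning is the bridge between the continuous map and the combinatorial DP. The base case $i=1$ is then immediate: all nodes of $S$ must map to $w$, and (P3) collapses to $\newdepth(w)\le 2\mydelta$, which is exactly the base rule of \DPalg.

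For the ``if'' direction of the inductive step, suppose $F(S,w)=1$ via a partition $\levelC(S)=S_1\sqcup\cdots\sqcup S_k$ satisfying (F-1) and (F-2). By induction each non-empty $S_j$ is equipped with a partial-$\mydelta$-good $\alpha_j:\myF_1(S_j)\to T_2(w_j)$. Glue these maps, send every $\bar u\in S$ to $w$, and extend along each edge from $\bar u$ to its child $u'\in S_j$ by traveling linearly in height along the unique edge from $w$ to $w_j$ in $\hatT_2^g$. Continuity and (P1) are immediate; (P2) holds because distinct siblings $w_j,w_{j'}$ are incomparable, which reduces every comparable pair of images to a single $S_j$ where (P2) is inherited from $\alpha_j$. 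For (P3), descendants of $w_j$ with $S_j\ne\emptyset$ are handled by (P3) of $\alpha_j$ (using that $w_j\in\Img(\alpha)$ keeps the lowest image-ancestor inside $T_2(w_j)$); and for $S_j=\emptyset$, the lowest image-ancestor is $w$, whose height gap $(\widehat h_i-\widehat h_{i-1})+\newdepth(w_j)\le 2\mydelta$ by (F-2).

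For the ``only if'' direction, given $\alpha$ I would use pinning to assign each $u'\in\levelC(S)$ to the unique child $w_j$ with $\alpha(u')=w_j$, defining $S_j$. Validity of $(S_j,w_j)$ follows by applying (P2) in \emph{both} directions to the equality $\alpha(u_1)=\alpha(u_2)=w_j$, which forces $u_1^{2\mydelta}=u_2^{2\mydelta}$ for all $u_1,u_2\in S_j$. The restriction $\alpha|_{\myF_1(S_j)}$ maps into $T_2(w_j)$ by monotonicity and inherits (P1) and (P2); (P3) transfers because any pre-image of $y\in T_2(w_j)$ lying outside $\myF_1(S_j)$ would, by monotonicity, force its ancestor at height $h_{i-1}$ to also land in $T_2(w_j)$ and hence into $S_j$ --- a contradiction. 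When $S_j=\emptyset$, applying (P3) of $\alpha$ to any $y\in T_2(w_j)$ (whose lowest image-ancestor is $w$) gives $\newdepth(w_j)\le 2\mydelta-(\widehat h_i-\widehat h_{i-1})$, matching (F-2). Induction then delivers $F(S_j,w_j)=1$ on every non-empty slot and therefore $F(S,w)=1$.

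The main obstacle is the validity of each sub-pair $(S_j,w_j)$ in the ``only if'' step: it relies entirely on pinning, which lets (P2) be applied in both directions to conclude \emph{equality} (rather than merely $\myanceq$) of the $2\mydelta$-ancestors across $S_j$; without pinning one would only recover a common ancestor, not a common node. A secondary subtlety is the (P3) bookkeeping in the ``if'' direction when $S_j=\emptyset$: one must combine the depth budget from (F-2) with the height gap $\widehat h_i-\widehat h_{i-1}$ exactly to recover the $2\mydelta$ bound, while checking that the gluing construction has not introduced image points that would spuriously shorten some other lowest-ancestor distance.
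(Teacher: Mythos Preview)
Your overall strategy matches the paper's proof almost exactly: induction on the super-level index, the same gluing construction in the $F(S,w)=1\Rightarrow$ map direction, and the same partition $S_j=\alpha^{-1}(w_j)\cap\levelC(S)$ in the converse direction. Your ``pinning'' observation, the (P3) bookkeeping via (F-2), and the validity argument for $(S_j,w_j)$ using (P2) twice are all as in the paper. In fact your justification that $\Img(\alpha)\cap T_2(w_j)=\Img(\alpha_j)$ (needed for (P3) to transfer to $\alpha_j$) is more explicit than what the paper writes.

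There is one genuine gap, however, in the $\Rightarrow$ direction: your verification of (P2) for the glued map $\alpha$. You write that ``distinct siblings $w_j,w_{j'}$ are incomparable, which reduces every comparable pair of images to a single $S_j$ where (P2) is inherited from $\alpha_j$.'' But $\alpha_j$ is only defined on $\myF_1(S_j)$; it is \emph{not} defined on the newly glued edges from $S_j$ up to $S$. So if $\alpha(u_1)\succeq\alpha(u_2)$ with, say, $\alpha(u_1)$ in the open edge $(w_j,w)$, then $u_1$ lies on some edge $(s,\hat s)$ with $s\in S_j$, $\hat s\in S$, and there is nothing to inherit from $\alpha_j$. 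The paper handles this with a short case analysis: if $\alpha(u_1)=w$ one uses validity of $(S,w)$; if both images lie on the open edge $(w_j,w)$ one uses validity of $(S_j,w_j)$ (i.e., $s^{2\mydelta}=t^{2\mydelta}$ for all $s,t\in S_j$, so the corresponding $u_1^{2\mydelta},u_2^{2\mydelta}$ are comparable); and if $\alpha(u_1)\in(w_j,w)$ while $\alpha(u_2)\in T_2(w_j)$ one interpolates through $s\in S_j$ with $\alpha(s)=w_j$, applying (P2) of $\alpha_j$ to the pair $(s,u_2)$ and then using $u_1^{2\mydelta}\succeq s^{2\mydelta}$. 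Your one-line argument does not cover these cases, and without the validity of $(S_j,w_j)$ (not just of $(S,w)$) the edge-to-edge case would actually fail. Once you add this case split, your proof coincides with the paper's.
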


\paragraph{Part (ii) of Theorem \ref{thm:DPalg}: time complexity.}
We now show that Algorithm \DPalg() can be implemented to run in the claimed time. 
Note that the augmented-tree nodes contain tree nodes of $T_1^f$ and $T_2^g$, as well as the intersection points between tree arcs of $T_1^f$ (resp. $T_2^g$) and \superlevel{}s. 
As there are at most $m = 2n$ number of \superlevel{}s in $\setSL_1$ and $\setSL_2$, it follows that the total number of tree nodes in the augmented trees $\hatT_1^f$ and $\hatT_2^g$ is bounded by $O(nm)=O(n^2)$. 
In what follows, in order to distinguish between the tree nodes for the augmented trees ($\hatT_1^f$ and $\hatT_2^g$) from the tree nodes of the original trees ($T_1^f$ and $T_2^f$), we refer to nodes of the former as \emph{augmented-tree nodes}, while the latter simply as \emph{tree nodes}. 
It is important to note that the $\mydelta$-\degbound{} is defined with respect to the original tree nodes in $T_1^f$ and $T_2^g$, not for the augmented trees (the one for the augmented trees can be significantly higher). 

Our DP-algorithm essentially checks for the feasibility $F(S, w)$ of valid-pairs $(S,w)$S. 
The following two lemmas bound the size of valid pairs, and their numbers. Their proofs are in Appendix \ref{appendix:lem:sizevalids} and \ref{appendix:lem:coarsevalidbound}, respectively. 





\begin{lemma}\label{lem:sizevalidS}
For any valid pair $(S, w)$, we have  $|S| \le \myk$ and $|\levelC(S)| \le \myk$, where $\myk = \myk_\mydelta(T_1^f, T_2^g)$ is the $\mydelta$-\degbound{} w.r.t. $T_1^f$ and $T_2^g$. 
\end{lemma}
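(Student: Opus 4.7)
}
Fix a valid pair $(S,w)$ with $S \subseteq V(\slone{j})$, and let $u^* \in |T_1^f|$ denote the (unique) common ancestor of $S$ at height $h_j + 2\mydelta$ supplied by the validity condition. The plan is to exhibit a single $\mydelta$-ball in $T_1^f$ whose original-tree-node content simultaneously witnesses both bounds. I would pick any $s_0 \in S$, let $x$ be the point at height $h_j + \mydelta$ on the path from $s_0$ to $u^*$, and work with $B := B_\mydelta(x; T_1^f)$, which by the definition of a $\mydelta$-ball is precisely the connected component of $|T_1^f| \cap [h_j, h_j + 2\mydelta]$ containing $x$.

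The heart of the argument is to certify that $B$ contains every branching internal node of the two Steiner subtrees relevant to the lemma. Let $\widehat{T}$ be the union of paths in $T_1^f$ from $u^*$ down to each $s \in S$ (a tree rooted at $u^*$ with $|S|$ leaves, one per element of $S$), and let $\widehat{T}_e$ be the union of paths from $u^*$ down to each $c \in \levelC(S)$ (a tree rooted at $u^*$ with $|\levelC(S)|$ leaves). Every point of $\widehat{T}$ has height in $[h_j, h_j + 2\mydelta]$, so $\widehat{T} \subseteq B$ outright. The set $\widehat{T}_e$ extends further down to height $h_{j-1}$ and hence can leave $B$, so the main obstacle is to argue that its branching vertices do not: any vertex $v$ whose downward degree in $\widehat{T}_e$ is $\ge 2$ must be an original tree node of $T_1^f$ with $T_1^f$-downward-degree $\ge 2$, and by the ``no tree nodes strictly between consecutive super-levels'' property of the augmentation such a $v$ cannot lie in the open strip $(h_{j-1}, h_j)$. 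It cannot lie at height $h_{j-1}$ either, because the vertices of $\widehat{T}_e$ at height $h_{j-1}$ are exactly the elements of $\levelC(S)$, which are leaves of $\widehat{T}_e$ (downward $\widehat{T}_e$-degree $0$). Consequently every branching vertex of $\widehat{T}_e$ has height in $[h_j, h_j + 2\mydelta]$ and therefore belongs to $B$; this branching-height bookkeeping is the one genuinely non-trivial step.

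With all branching vertices localized inside $B$, I would finish with the standard rooted-tree counting identity: any rooted tree with $L$ leaves satisfies $\sum_v (\deg(v) - 1) = L - 1$, the sum ranging over non-leaf vertices, where $\deg$ denotes downward degree. Applied to $\widehat{T}$, every branching vertex $v$ satisfies $\deg_{T_1^f}(v) \ge \deg_{\widehat{T}}(v) \ge 2$, so summing the $T_1^f$-degrees of these vertices (all of which are original tree nodes inside $B$) gives $\myk \ge (|S|-1) + \#\{\text{branching vertices of }\widehat{T}\} \ge |S|$ whenever $|S| \ge 2$, with the case $|S| = 1$ handled trivially by $\myk \ge 1$. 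Applying the identical counting to $\widehat{T}_e$ yields $|\levelC(S)| \le \myk$. No further calculation is required.
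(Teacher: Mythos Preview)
Your proposal is correct and follows essentially the same approach as the paper. Both arguments localize the relevant branching structure inside a single $\mydelta$-ball centered at a point halfway up the $2\mydelta$-window (the paper does this via its Claim~\ref{claim:epsball} with $T_v'$ rooted at $v=\LCA(S)$; you do it directly with the midpoint $x$), and then bound the leaf count by the degree-sum inside that ball. The only cosmetic difference is the bookkeeping: the paper counts tree \emph{arcs} in $T_v'$ and charges each element of $S$ (respectively $\levelC(S)$) to one arc or to a node's degree, whereas you phrase the same count via the rooted-tree identity $\sum_v(\deg(v)-1)=L-1$ applied to the Steiner trees $\widehat{T}$ and $\widehat{T}_e$. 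Your observation that no branching of $\widehat{T}_e$ can occur in the open strip $(h_{j-1},h_j)$ is exactly the content of the paper's remark that a non-tree-node $s\in S$ has only one augmented child.
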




\begin{lemma}\label{lem:coarsevalidbound}
Let $\myk = \myk_\mydelta(T_1^f, T_2^g)$ be the $\mydelta$-\degbound{} w.r.t. $T_1^f$ and $T_2^g$.
The total number of valid pairs that Algorithm \DPalg($T_1^f, T_2^g, \mydelta$) will inspect is bounded by $O(n^3 2^\myk)$, and they can be computed in the same time. 
\end{lemma}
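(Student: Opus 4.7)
}
The plan is to parametrize each valid pair $(S,w)$ by three quantities: the level index $j$, a common ancestor $u^\star$ of $S$ at height $h_j + 2\mydelta$, and the node $w \in V(\sltwo{j})$. Valid subsets $S$ will then be enumerated as subsets of the descendant-set of $u^\star$ at level $j$, whose size is controlled via Lemma \ref{lem:sizevalidS}.

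First I would bound the number of levels: since $|\setSL_1| = |\setSL_2| = m \le 2n$, the index $j$ ranges over $O(n)$ values. Next, fix a level $j$. By condition (2) of the validity definition, every valid $S \subseteq V(\slone{j})$ must lie entirely under a single ancestor at height $h_j + 2\mydelta$; such an ancestor exists by the root-ray extension of $\hatT_1^f$. Partition the nodes of $V(\slone{j})$ by this ancestor: for each point $u^\star \in |\hatT_1^f|$ at height $h_j + 2\mydelta$ let
\[
D(u^\star) \;=\; \{\, v \in V(\slone{j}) \;:\; v \mydesceq u^\star \,\}.
\]
The sets $D(u^\star)$ partition $V(\slone{j})$, so the number of distinct $u^\star$ with $D(u^\star)\neq\emptyset$ is at most $|V(\slone{j})| = O(n)$ (an edge of $\hatT_1^f$ crosses height $h_j$ at one point, and there are $O(n)$ edges).

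Now I would invoke Lemma \ref{lem:sizevalidS}: since $(D(u^\star), w)$ is itself a valid pair for any $w \in V(\sltwo{j})$, we have $|D(u^\star)| \le \myk$. Hence for each $u^\star$ there are at most $2^{|D(u^\star)|} \le 2^\myk$ nonempty subsets $S$. Multiplying by the $O(n)$ choices for $w \in V(\sltwo{j})$ and summing over the $O(n)$ ancestor classes per level, and then over the $O(n)$ levels, gives
\[
\#\{\text{valid pairs}\} \;\le\; \sum_{j=1}^{m} \Bigl( \sum_{u^\star} 2^{|D(u^\star)|} \Bigr) \cdot |V(\sltwo{j})| \;=\; O(n) \cdot O(n) \cdot 2^\myk \cdot O(n) \;=\; O(n^3 \, 2^\myk).
\]

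For the algorithmic side, I would first construct the augmented trees $\hatT_1^f, \hatT_2^g$ in $O(n^2)$ time by sweeping through the sorted list of critical heights. Then, processing levels bottom-up, at each level $j$ I would traverse $\hatT_1^f$ to compute for every node in $V(\slone{j})$ its ancestor at height $h_j + 2\mydelta$ (via a standard ray-shooting or by maintaining an auxiliary level-linked structure), in $O(n)$ time per level. This partitions $V(\slone{j})$ into the classes $D(u^\star)$. Enumerating all subsets of each class takes $2^{|D(u^\star)|} = O(2^\myk)$ work per subset, and pairing with each $w \in V(\sltwo{j})$ adds an $O(n)$ factor, matching the counting bound of $O(n^3 2^\myk)$. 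The main subtlety is making sure one does not overcount: since every valid $S$ is contained in exactly one $D(u^\star)$ (ancestors at a fixed height being unique in the merge tree), the enumeration is a disjoint union and the bound is tight.

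I do not expect any of the steps to be the main obstacle; the only mildly delicate point is justifying that the common-ancestor grouping partitions $V(\slone{j})$ cleanly even when $h_j + 2\mydelta$ is not a critical height, which follows because each tree edge is crossed at a single point at this height and the root-ray makes the ancestor well-defined for every node.
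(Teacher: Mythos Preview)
Your argument is correct and reaches the same $O(n^3 2^\myk)$ bound, but the charging scheme differs from the paper's. The paper first splits off singleton pairs (trivially $O(n^3)$), and for non-singleton $S$ uses the observation that $v=\LCA(S)$ must be an \emph{original} tree node of $T_1^f$ (not merely an augmented-tree node), giving only $O(n)$ anchors; for each such $v$ it bounds the number of super-levels meeting $T'_v$ by $O(n)$ and the number of subsets per super-level by $2^\myk$. Your parameterization instead fixes the level $j$ first, then groups $V(\slone{j})$ by the unique ancestor $u^\star$ at height $h_j+2\mydelta$, and bounds $|D(u^\star)|$ directly via Lemma~\ref{lem:sizevalidS}. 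Your route is a bit cleaner in that it avoids the singleton/non-singleton case split and the ``$\LCA(S)$ is an original node'' step; the paper's route is slightly more self-contained because it does not black-box Lemma~\ref{lem:sizevalidS}. One small wording fix: where you write ``an edge of $\hatT_1^f$ crosses height $h_j$ at one point, and there are $O(n)$ edges,'' you mean edges of the \emph{original} tree $T_1^f$ (the augmented tree $\hatT_1^f$ has $O(n^2)$ edges); the bound $|V(\slone{j})|=O(n)$ is of course correct for that reason.
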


To obtain the final time complexity for Algorithm \DPalg{}, 
consider computing $F(S,w)$ for a fixed valid pair $(S,w)$. This takes $O(1)$ time in the base case (the \superlevel{} index $i=1$). Otherwise for the case $i>1$, observe that $k=|\levelC(w)| = 
degree(w) \le \myk$, and $|\levelC(S)| \le \myk$ by Lemma \ref{lem:sizevalidS}. Hence  the number of partitioning of $\levelC(S)$ is bounded by 
$O(|\levelC(S)|^{k}) = O(\myk^\myk)$. 
For each partition, checking conditions (F-1) and (F-2) takes $O(k)$ time; thus the total time needed to compute $F(S,w)$ is $O(k \myk^\myk) = O(\myk^{\myk+1})$. 
Combining this with Lemma \ref{lem:coarsevalidbound}, we have that the time complexity of Algorithm \DPalg() is bounded from above by 
$O({n^3}{2^{\myk}}\myk^{\myk+1})$, as claimed. 

\subsection{A Faster Algorithm}
\label{subsec:fasteralg}

It turns out that we do not need to inspect all the $O(n^3 2^\myk)$ number of valid pairs as claimed in Lemma \ref{lem:coarsevalidbound}. 
We can consider only what we call \sensiblepair{}s, which we define now. 

\begin{Definition}\label{def:sensiblepair} 
Given a valid-pair ($S, w$), suppose $S$ is from \superlevel $\slone{i}$ and thus $w$ is from \superlevel{} $\sltwo{i}$. Then, $(S, w)$ is a \emph{sensiblepair}{} if either of the following two conditions hold: 
\begin{itemize}\denselist
    \item[(C-1)] $S$ contains a tree node from $V(T_1^f)$, or its children $\levelC(S) \subseteq \slone{i-1}$ in the augmented tree $\hatT_1^f$ contains some tree node from $V(T_1^f)$, or the parents of nodes of $S$ in the augmented tree $\hatT_1^f$ (which are necessarily from \superlevel{} $\slone{i+1}$) contains some tree node from $V(T_1^f)$; or
    \item[(C-2)] $w$ is a tree node of $T_2^g$, or $\levelC(w) \subseteq \sltwo{i-1}$ contains a tree node of $T_2^g$; or the parent of $w$ from \superlevel{} $\sltwo{i+1}$ in the augmented tree $\hatT_2^g$ is a tree node of $T_2^g$. 
\end{itemize}

\end{Definition}

Algorithm \DPalg() can be modified to Algorithm \modifyDPalg() so that it only inspects \sensiblepair{}s. The modification is non-trivial, and the reduction in the bound on number of \sensiblepair{}s is by relating \sensiblepair{}s to certain appropriately defined edge-list pairs $(A\subseteq E(T_1^f), \alpha \in E(T_2^g))$. 
The rather technical details can be found in Appendix \ref{appendix:subsec:fasteralg}. 
We only summarize the main theorem below. 
\begin{theorem}\label{thm:fasteralg}
(i) Algorithm \modifyDPalg($T_1^f, T_2^g, \mydelta$) returns ``yes'' if and only if $d_I(T_1^f, T_2^g) \le \mydelta$. \\
(ii) Algorithm \modifyDPalg($T_1^f, T_2^g, \mydelta$) can be implemented to run in $O(n^2 2^\myk \myk^{\myk+2}\log n)$ time, where $n$ is the total complexity of input trees $T_1^f$ and $T_2^g$, and $\myk = \myk_\mydelta(T_1^f, T_2^g)$ is the $\mydelta$-\degbound{} w.r.t. $T_1^f$ and $T_2^g$. 

Note that if $\myk$ is constant, then the time complexity is $O(n^2\log n)$.
\end{theorem}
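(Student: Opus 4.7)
The plan is to adapt the proof of Theorem \ref{thm:DPalg} in two places: (i) show that the modified DP which only records feasibility at \sensiblepair{}s still correctly decides ``Is $d_I(T_1^f, T_2^g) \le \mydelta$?'', and (ii) bound the number of \sensiblepair{}s together with the per-pair processing cost to obtain the stated complexity.

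For correctness, I would first observe that any two consecutive \superlevel{}s $\slone{i}, \slone{i+1}$ (resp.\ $\sltwo{i}, \sltwo{i+1}$) neither of which touches an original tree node of $V(T_1^f)$ (resp.\ $V(T_2^g)$) are connected through $\hatT_1^f$ (resp.\ $\hatT_2^g$) only by degree-$2$ augmented-tree nodes. For any valid pair $(S,w)$ that is \emph{not} sensible, both the $T_1^f$-side and the $T_2^g$-side sit inside such degree-$2$ chains, so $F(S,w)$ is determined by a deterministic propagation from the feasibility of the unique valid pair obtained by following these chains downward to the nearest \sensiblepair{}, together with a local check that no subtree hanging off $w$'s chain violates the depth condition (F-2). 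This reduction shows that Algorithm \modifyDPalg{} can materialize entries only at \sensiblepair{}s while still correctly recovering $F(root(\hatT_1^f), root(\hatT_2^g))$; combined with Claim \ref{claim:rootgood} and Lemma \ref{lem:partialgood} exactly as in the proof of Theorem \ref{thm:DPalg}(i), this yields part (i).

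For part (ii), the main counting ingredient is to map each \sensiblepair{} $(S,w)$ into an \edgelistpair{} $(A \subseteq E(T_1^f),\, \alpha \in E(T_2^g))$. Here $\alpha$ is the original tree edge of $T_2^g$ witnessing condition (C-2) (i.e., an edge incident to $w$, or to the tree node of $T_2^g$ sitting one \superlevel{} above or below $w$), and $A$ is the set of original edges of $T_1^f$ crossed at the positions of $S$ by the level at height $h_i$. By Lemma \ref{lem:sizevalidS} we have $|A|\le \myk$, and by the $\mydelta$-\degbound{} the candidate edges of $T_1^f$ that can appear together with $\alpha$ lie in a restricted neighborhood of size $O(\myk)$, so there are at most $2^\myk$ choices for $A$ once $\alpha$ is fixed; with $O(n)$ choices for $\alpha$ and $O(n)$ choices for the relevant level (up to the tree node that anchors $\alpha$), a careful charging gives the bound of $O(n^2 2^\myk)$ total \sensiblepair{}s.

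Finally, computing $F(S,w)$ at a \sensiblepair{} still requires enumerating partitions of $\levelC(S)$ (of size at most $\myk$) among the at most $\myk$ children of $w$, costing $O(\myk^{\myk+1})$ time. The extra $\myk \log n$ factor comes from the intermediate depth checks of type (F-2) along the skipped degree-$2$ chains: after preprocessing $\newdepth$ at every augmented-tree node along each maximal chain and sorting by height, each such check can be answered in $O(\log n)$ time via binary search, and at most $O(\myk)$ such checks arise per \sensiblepair{}. Multiplying the three factors yields $O(n^2 2^\myk \myk^{\myk+2}\log n)$ as claimed. The main obstacle I foresee is the \sensiblepair{} counting: one must be careful not to overcount when several distinct witnessing nodes or edges (filling the parent, child, or self-slot of Definition \ref{def:sensiblepair}) can be associated to the same $(S,w)$, and to verify that the chain-skipping reduction used in part (i) always lands at a \sensiblepair{} rather than at another non-sensible valid pair.
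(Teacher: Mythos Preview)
Your high-level plan matches the paper's: restrict the DP to \sensiblepair{}s, propagate feasibility along the degree-$2$ chains between them, and count \sensiblepair{}s via \edgelistpair{}s. However, two of your key steps have genuine gaps.

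\textbf{The \sensiblepair{} count.} Your charging ``$O(n)$ choices for $\alpha$, $O(n)$ choices for the relevant level, $2^\myk$ choices for $A$'' does not work. For a fixed $\alpha\in E(T_2^g)$ and a fixed \superlevel{} index $i$, the candidate edges of $T_1^f$ that can appear in a valid $A$ are not confined to a single $O(\myk)$-size neighborhood: they lie in $T_v'$ for \emph{some} tree node $v=\LCA(S)$ with $f(v)\in[h_i,h_i+2\mydelta]$, and there can be many such $v$'s at that height range, so the number of admissible $A$'s at that level can be $\Theta(n\,2^\myk)$, not $2^\myk$. The paper avoids this by proving and using a lemma you do not state: \emph{each \edgelistpair{} $(A,\alpha)$ supports at most $4$ \sensiblepair{}s} (this uses conditions (C-1)/(C-2) to pin the \superlevel{} index to one of $\{i_L,i_L+1,i_H-1,i_H\}$). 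With this in hand, the paper charges type-1 \sensiblepair{}s to the tree node $v=\LCA(S)\in V(T_1^f)$ (giving $O(n\cdot 2^\myk)$ sensible-sets, each with $O(n)$ partners $w$), and type-2 \sensiblepair{}s to the $O(n)$ admissible $w$'s (each tree edge of $T_2^g$ yields at most $4$ such $w$). Without the ``$\le 4$ per \edgelistpair{}'' lemma and the type-1/type-2 split, your bound does not go through.

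\textbf{The $\log n$ and the extra $\myk$.} Along a skipped chain there are \emph{no} (F-2) checks: every augmented node on the chain is degree-$2$ on both sides, so $w_j$ has a unique child and $\levelC(S)$ maps entirely to it; only (F-1) is invoked. The actual $\myk\log n$ overhead in the paper comes from a different place: when computing $F(S,w)$ at a \sensiblepair{} and a partition block $(S_j,w_j)$ is valid but not sensible, one must \emph{search} for the nearest lower \sensiblepair{} $(S',w')$ supported by the same \edgelistpair{}. The paper stores all \edgelistpair{}s as length-$(\le\myk{+}1)$ index vectors in a multi-level balanced BST, so each such lookup costs $O(\myk\log n)$; with up to $\myk$ blocks per partition and $O(\myk^\myk)$ partitions, this yields $O(\myk^{\myk+2}\log n)$ per \sensiblepair{}. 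Your accounting (``$O(\myk^{\myk+1})$ for partitions'' times ``$O(\myk\log n)$ for depth checks per \sensiblepair{}'') multiplies two costs that by your own description should be added, and attributes the $\log n$ to a step that does not occur.

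Your correctness sketch is essentially the paper's chain-propagation idea; just drop the spurious (F-2) checks along the chain, and make explicit that following the chain always lands at a \sensiblepair{} (because at the bottom of the range supported by $(A,\alpha)$ either some lower endpoint of $A$ or of $\alpha$ is hit, forcing (C-1) or (C-2) at the next level up).
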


\section{Algorithms for Interleaving and Gromov-Hausdorff Distances}\label{sec:optimization}

\subsection{FPT Algorithm to Compute Interleaving Distance}
\label{subsec:int-optimizaton}

In the previous section, we show how to solve the decision problem for interleaving distance between two merge trees $T^f_1$ and $T_2^g$. 
We now show how to compute the interleaving distance $\Intdopt$, which is the smallest $\mydelta$ value such that $d_I(T_1^f, T_2^g) \le \mydelta$ holds. 

The main observation is that there exists a set $\criSet$ of $O(n^2)$ number of \emph{candidate values} such that $\Intdopt$ is necessarily one of them. 
Specifically, let $\criSet_1 =  \{ |f(u) - g(w)| \mid u \in V(T_1^f), w \in V(T_2^g) \}$, 
$\criSet_2 = \{ |f(u) - f(u')|/2 \mid u, u' \in V(T_1^f) \}$, and 
$\criSet_3 = \{ |g(w) - g(w') | / 2 \mid w, w' \in V(T_2^g) \}$. 
Set $\criSet = \criSet_1 \cup \criSet_2 \cup \criSet_3$. 
The proof of the following lemma can be found in Appendix \ref{appendix:lem:criSet}. 
\begin{lemma}\label{lem:criSet}
The interleaving distance $\Intdopt = d_I(T_1^f, T_2^g)$ satisfies that $\Intdopt \in \criSet$. 
\end{lemma}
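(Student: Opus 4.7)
The plan is to show that the Boolean output $\Phi(\mydelta) := \bigl[\DPalg(T_1^f, T_2^g, \mydelta) = \text{``yes''}\bigr]$, which by Theorem \ref{thm:DPalg}(i) equals $1$ iff $d_I(T_1^f, T_2^g) \le \mydelta$, is \emph{locally constant} on $(0, \infty) \setminus \criSet$. Granting this, if $\Intdopt \notin \criSet$, then since $\criSet$ is finite, $\Intdopt$ lies in the interior of some connected component $(a, b)$ of $(0, \infty) \setminus \criSet$; but $\Phi(\mydelta) = 0$ for $\mydelta \in (a, \Intdopt)$ and $\Phi(\mydelta) = 1$ for $\mydelta \in (\Intdopt, b)$ by definition of $\Intdopt$, contradicting local constancy. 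Hence $\Intdopt \in \criSet$.

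To establish local constancy, I would examine each quantity in Algorithm \DPalg{} that depends on $\mydelta$ and verify that crossing any of its discrete thresholds forces $\mydelta \in \criSet$. \emph{First}, the \superlevel{} collections $\setSL_1, \setSL_2$ have heights in $\myC_1 \cup (\myC_2 - \mydelta)$ and $(\myC_1 + \mydelta) \cup \myC_2$ respectively; their combinatorial alignment with tree nodes of $T_1^f$ and $T_2^g$ changes precisely when some $c_1 \in \myC_1$ equals some $c_2 - \mydelta$, giving $\mydelta \in \criSet_1$. \emph{Second}, the validity condition on a pair $(S,w)$ requires the nodes of $S$ to share an ancestor in $T_1^f$ at height $h_j + 2\mydelta$; this property changes only when $h_j + 2\mydelta$ crosses a tree-node height in $\myC_1$, and since $h_j \in \myC_1 \cup (\myC_2 - \mydelta)$, a short case analysis yields $\mydelta \in \criSet_1 \cup \criSet_2$ (the symmetric $T_2$-side analysis contributes $\criSet_1 \cup \criSet_3$). \emph{Third}, the base-case test $\newdepth(w) \le 2\mydelta$ and the condition (F-2) $\newdepth(w_j) \le 2\mydelta - (\widehat{h}_i - \widehat{h}_{i-1})$ flip truth value only when $2\mydelta$ equals a signed combination of two tree-node $g$-values and the gap $\widehat{h}_i - \widehat{h}_{i-1}$, the latter being itself either a difference of $\myC_2$-values, a difference of $\myC_1$-values, or one of these shifted by $\pm \mydelta$; enumerating the four sub-cases shows that each threshold lies in $\criSet_1 \cup \criSet_3$, with the analogous $T_1$-side statement giving $\criSet_1 \cup \criSet_2$.

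Combining these, within any component $(a,b) \subseteq (0,\infty) \setminus \criSet$ the list of \superlevel{}s, the set of valid pairs, the admissible partitions $\levelC(S) = S_1 \cup \cdots \cup S_k$, and every Boolean test performed by the DP all remain fixed; a straightforward induction on super-level index $i$ then shows that the entire feasibility table $F(\cdot,\cdot)$, and hence $\Phi(\mydelta)$, is constant on $(a,b)$. The main obstacle is precisely the bookkeeping in the third item: multiple $\mydelta$-dependent terms appear simultaneously, in particular $\widehat{h}_i - \widehat{h}_{i-1}$ together with $\newdepth(w_j)$ when $w_j$ is an augmented-tree node whose own height already has a $\mydelta$ contribution, so one must enumerate which of $\myC_1 + \mydelta$ or $\myC_2$ supplies each of $\widehat{h}_i$, $\widehat{h}_{i-1}$, and $g(w_j)$, and then verify case by case that whatever threshold in $\mydelta$ makes the inequality tight lies in $\criSet_1 \cup \criSet_2 \cup \criSet_3$.
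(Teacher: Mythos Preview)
Your proposal is correct and follows essentially the same strategy as the paper: both arguments analyze where the Boolean output of \DPalg{} can flip as $\mydelta$ varies, and show that every such threshold lies in $\criSet$. The paper phrases this as an infinitesimal decrease from $\Intdopt$ and a case analysis on the first feasibility value that changes; you phrase it as local constancy of $\Phi$ on $(0,\infty)\setminus\criSet$, which is equivalent.

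Two small corrections to your parenthetical remarks, neither of which affects the argument. In your second item, there is no ``symmetric $T_2$-side'' validity condition: validity is a constraint only on $S\subset T_1^f$, so the contribution is exactly $\criSet_1\cup\criSet_2$. In your third item, there is no ``analogous $T_1$-side'' depth test: condition (F-2) and the base case concern only $\newdepth(w_j)$ in $T_2^g$, so the contribution is exactly $\criSet_1\cup\criSet_3$. Your bookkeeping in that item also simplifies: writing $\newdepth(w_j)=\widehat{h}_{i-1}-g(\hat w)$ with $\hat w$ a leaf of $T_2^g$, the inequality (F-2) becomes tight exactly when $\widehat{h}_i - g(\hat w) = 2\mydelta$, so $\widehat{h}_{i-1}$ cancels and only the two cases for $\widehat{h}_i\in\myC_2$ versus $\widehat{h}_i\in\myC_1+\mydelta$ need to be checked, giving $\criSet_3$ and $\criSet_1$ respectively, just as in the paper.
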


Finally, compute and sort all candidate values in $\criSet$ where by construction, $|\criSet| = O(n^2)$. Then, starting with $\delta$ being the smallest candidate value in $\criSet$, we perform algorithm \DPalg($T_1^f, T_2^g, \delta)$ for each $\delta$ in $\criSet$ in increasing order, till the first time the answer is `yes'. The corresponding $\delta$ value at the time is $d_I(T_1^f, T_2^g)$. Furthermore, note that for the degree-bound parameter, $\tau_{\delta}(T_1^f, T_2^g) \le \tau_{\delta'} (T_1^f, T_2^g)$ for $\delta \le \delta'$. Combining with Theorem \ref{thm:fasteralg}, we can easily obtain the following trivial bound: 
\begin{theorem}\label{thm:optinterleaving-slow}
Let $\Intdopt = d_I(T_1^f, T_2^g)$ and $\tau^* = \tau_\Intdopt(T_1^f, T_2^g)$ be the degree-bound parameter of $T_1^f$ and $T_2^g$ w.r.t. $\Intdopt$. 
Then we can compute $\Intdopt$ in $O(n^4 2^{\tau^*} (\tau^*)^{\tau^*+2}\log n)$ time. 
\end{theorem}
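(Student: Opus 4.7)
The plan is a straightforward linear scan over the candidate values combined with the decision procedure of Theorem \ref{thm:fasteralg}. First I would compute and sort the set $\criSet = \criSet_1 \cup \criSet_2 \cup \criSet_3$ defined just above Lemma \ref{lem:criSet}; since each of the three subsets has at most $O(n^2)$ elements, $|\criSet| = O(n^2)$ and we can sort them in $O(n^2 \log n)$ time. Then I iterate through $\criSet$ in increasing order of $\delta$, invoking \modifyDPalg($T_1^f, T_2^g, \delta$) from Theorem \ref{thm:fasteralg} at each step, and return the first $\delta$ for which the decision procedure answers ``yes''.

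Correctness of this scan reduces to two facts. First, Lemma \ref{lem:criSet} guarantees $\Intdopt \in \criSet$, so the optimal value is among the candidates we test. Second, the decision problem is monotone in $\delta$: if $d_I(T_1^f, T_2^g) \le \delta_1$ then trivially $d_I(T_1^f, T_2^g) \le \delta_2$ for any $\delta_2 \ge \delta_1$, directly from the infimum in Definition \ref{def:interleaving}. Consequently the first $\delta \in \criSet$ for which \modifyDPalg{} says ``yes'' is the smallest candidate $\ge \Intdopt$, which by the first fact equals $\Intdopt$.

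For the time bound, the key observation (already noted in the excerpt) is monotonicity of the degree-bound parameter: $\tau_\delta(T_1^f, T_2^g) \le \tau_{\delta'}(T_1^f, T_2^g)$ whenever $\delta \le \delta'$. This holds because $B_\delta(u; T^h) \subseteq B_{\delta'}(u; T^h)$ (a path whose height range fits inside $[f(u)-\delta, f(u)+\delta]$ also fits inside the wider interval), so the sum of degrees of tree nodes inside a $\delta$-ball can only grow with $\delta$. Since the scan terminates the first time $\delta \ge \Intdopt$, every call to \modifyDPalg{} is made with $\delta \le \Intdopt$, and therefore with local parameter $\tau_\delta \le \tau^*$. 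By Theorem \ref{thm:fasteralg}, each such call runs in $O(n^2 \, 2^{\tau^*} (\tau^*)^{\tau^*+2} \log n)$ time. Multiplying by the $O(n^2)$ candidates yields the claimed bound $O(n^4 \, 2^{\tau^*} (\tau^*)^{\tau^*+2} \log n)$.

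The only non-trivial point in the whole argument is monotonicity of $\tau_\delta$ in $\delta$, which is what lets us replace the varying parameter $\tau_\delta$ by the single quantity $\tau^*$ in the per-call cost; this is a short geometric observation about nested $\delta$-balls rather than a real obstacle, so the theorem really is a direct corollary of Lemma \ref{lem:criSet} and Theorem \ref{thm:fasteralg}. A natural question is whether binary search over $\criSet$ could save the $n^2$ factor, but without a cheap a priori bound on $\tau_\delta$ for $\delta > \Intdopt$ one cannot control the per-call cost when probing large $\delta$---which is presumably exactly why Theorem \ref{thm:optinterleaving-fast} requires the more delicate double-binary-search argument attributed to Kyle Fox.
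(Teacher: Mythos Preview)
Your proposal is correct and follows essentially the same approach as the paper: compute and sort the $O(n^2)$ candidate values in $\criSet$, scan them in increasing order invoking the decision procedure of Theorem~\ref{thm:fasteralg}, and use the monotonicity $\tau_\delta \le \tau_{\delta'}$ for $\delta \le \delta'$ to bound every call by $\tau^*$. Your write-up is in fact more detailed than the paper's (which dispatches the theorem in a single paragraph as a ``trivial bound''), and your closing remark about why naive binary search fails is exactly the motivation for Theorem~\ref{thm:optinterleaving-fast}.
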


However, it turns out that one can remove almost an $O(n^2)$ factor by using a double-binary search like procedure, as discovered by Kyle Fox. We include this improved result below and his argument below for completeness. See Appendix \ref{appendix:thm:optinterleaving-fast} for the proof. 
\begin{theorem}\label{thm:optinterleaving-fast}
Let $\Intdopt = d_I(T_1^f, T_2^g)$ and $\tau^* = \tau_\Intdopt(T_1^f, T_2^g)$ be the degree-bound parameter of $T_1^f$ and $T_2^g$ w.r.t. $\Intdopt$. 
Then we can compute $\Intdopt$ in $O(n^2 2^{2\tau^*} (2\tau^*)^{2\tau^*+2}\log^3 n)$ time. 
\end{theorem}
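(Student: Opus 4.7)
By Lemma \ref{lem:criSet} the optimal value $\Intdopt$ lies in the set $\criSet = \criSet_1 \cup \criSet_2 \cup \criSet_3$, and the slower algorithm of Theorem \ref{thm:optinterleaving-slow} linearly scans all $O(n^2)$ candidates, contributing the dominant $n^4$ factor. The plan is to replace this linear scan by a nested binary search that issues only $O(\log^2 n)$ decision queries, which together with the $O(\log n)$ factor from the decision algorithm (Theorem \ref{thm:fasteralg}) yields the $\log^3 n$ overall. The key structural observation is that each of $\criSet_1, \criSet_2, \criSet_3$ has the pairwise-differences form $\{|a-b|:a\in A,b\in B\}$ for two $O(n)$-size sets of heights (e.g.\ $A=\{f(u)\},B=\{g(w)\}$ for $\criSet_1$ and $A=B=\{f(u)\}$ for $\criSet_2$), and hence all three admit the same type of implicit search.

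I would first sort $A$ and $B$ separately in $O(n\log n)$ time. After splitting into the two sign cases $a\ge b$ and $a<b$, the values $|a-b|$ form a Young-tableau-sorted two-dimensional array $M$ (non-decreasing along rows, non-increasing down columns). Finding the smallest entry of $M$ satisfying the monotone predicate $P(\delta):$ ``\modifyDPalg$(T_1^f,T_2^g,\delta)$ returns yes'' is then carried out by a double binary search: an outer binary search on the row index $i$, and for each fixed $i$ an inner binary search on the column index $j$ to locate the smallest feasible $M_{ij}$. Using monotonicity of $P$ together with the sorted-matrix structure, the optimal row's inner answer tracks the outer index monotonically, so that $O(\log^2 n)$ calls to \modifyDPalg suffice per $\criSet_i$. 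Taking the minimum of the three feasible values found in $\criSet_1, \criSet_2, \criSet_3$ then equals $\Intdopt$.

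Separately I would control the degree parameter $\tau$ seen during the binary searches. Since $\tau_\delta(T_1^f,T_2^g)$ is monotone non-decreasing in $\delta$, it suffices to restrict every query to some interval $[0,\delta_0]$ with $\delta_0$ not much larger than $\Intdopt$. A standard exponential-search preprocessing (start at the smallest candidate value, double until \modifyDPalg returns yes) costs an additional $O(\log n)$ queries and produces such a $\delta_0\le 2\Intdopt$. All subsequent binary-search queries then use $\delta\le\delta_0$, so the relevant degree parameter is bounded by $\tau_{2\Intdopt}$, which the $2\tau^*$ in the theorem statement captures. Combining this with $O(\log^2 n)$ nested-binary-search queries, each of cost $O(n^2 2^{2\tau^*}(2\tau^*)^{2\tau^*+2}\log n)$ by Theorem \ref{thm:fasteralg}, yields the stated total $O(n^2 2^{2\tau^*}(2\tau^*)^{2\tau^*+2}\log^3 n)$.

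The main obstacle will be proving correctness of the double binary search over the pairwise-differences grid. Specifically, one must articulate an outer monotone predicate on the row index whose threshold coincides with the optimal row for the smallest feasible entry, and show that the inner search's output varies monotonically with the outer index so that the outer loop itself can be an $O(\log n)$-step binary search; the Young-tableau sortedness of $\{|a-b|\}$ after a sign split is what makes this possible, but the argument requires care for the boundary case in which the smallest feasible entry is attained on the split itself. Handling the three instances of $\criSet$ and the absolute-value case split is routine but notationally tedious, and verifying that the exponential-search preamble indeed yields $\delta_0\le 2\Intdopt$ and bounds the subsequent degree parameter by $2\tau^*$ is the last piece needed to close the cost analysis.
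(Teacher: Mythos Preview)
Your plan has a genuine gap in the cost analysis. You assert that after an exponential search producing $\delta_0 \le 2\Intdopt$, all subsequent decision queries have degree parameter bounded by $\tau_{2\Intdopt}$, ``which the $2\tau^*$ in the theorem statement captures.'' But $\tau_{2\Intdopt}$ and $2\tau^* = 2\tau_{\Intdopt}$ are not comparable: the map $\delta \mapsto \tau_\delta$ can grow arbitrarily fast. For instance, in a complete binary merge tree with unit height difference between consecutive levels one has $\tau_\delta = \Theta(2^{2\delta})$, so $\tau_{2\Intdopt} = \Theta\big((\tau_{\Intdopt})^2\big)$, not $2\tau_{\Intdopt}$. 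Your argument therefore only yields a bound of the form $O\big(n^2\, 2^{\tau_{2\Intdopt}} (\tau_{2\Intdopt})^{\tau_{2\Intdopt}+2} \log^3 n\big)$, which does not match the theorem. There is also a secondary problem with your exponential search on $\delta$: if you double the \emph{value}, the number of steps is $\log(\Intdopt/\delta_{\min})$ rather than $O(\log n)$; if you double the \emph{index} in the sorted candidate list, you get $O(\log n)$ steps but lose the conclusion $\delta_0 \le 2\Intdopt$.

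The paper sidesteps both issues by reversing the roles of the two searches. It first sorts the full candidate set $\criSet$ (only $O(n^2\log n)$ time, so the Young-tableau machinery is unnecessary), and observes that since $\tau_\delta$ is monotone in $\delta$, the candidates with $\tau_\delta = k$ form a contiguous block $[\delta_{\ell_k},\delta_{r_k}]$ of the sorted list for each integer $k$. The \emph{outer} exponential-then-binary search is performed on $k$ (the degree parameter itself), and for each tested $k$ an \emph{inner} binary search on $\delta$ within the block $[\ell_k,r_k]$ locates the smallest feasible candidate there. Because the outer search overshoots $k^* = \tau^*$ by at most a factor of two, every decision query ever issued has degree parameter at most $2\tau^*$ by construction; this is precisely where the $2\tau^*$ in the statement comes from. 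The key idea you are missing is to search on $\tau$, not on $\delta$.
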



\subsection{FPT-Algorithm for Gromov-Hausdorff Distance}

Finally, we develop a FPT-algorithm to approximate the Gromov-Hausdorff distance between two input trees $(T_1, d_1)$ and $(T_2, d_2)$. 
To approximate the Gromov-Hausdorff distance between two metric trees, we need to modify our parameter slightly (as there is no function defined on input trees any more). 
Specifically, now given a metric tree $(T, d)$, a $\eps$-geodesic ball at $u\in |T|$ is simply $\widehat{B}_\eps(u, T) = \{ x\in |T| \mid d(x, u) \le \eps \}$. 
\begin{description}
\item[Parameter $\tau$:] Given $\Tcal_1 = (T_1, d_1)$ and $\Tcal_2 = (T_2, d_2)$, define the \emph{$\eps$-metric-degree-bound} parameter $\newtau_\eps(T_1, T_2)$ to be the largest sum of degrees of all tree nodes within any $\eps$-geodesic ball in $T_1$ (w.r.t. metric $d_1$) or in $T_2$ (w.r.t. $d_2$). 
\end{description}

We obtain our main result for approximating the Gromov-Hausdorff distance between two metric trees within a factor of $14$. 
We note that to obtain this result, we need to also relate the $\eps$-metric-degree-bound parameter for metric trees with the $\eps$-degree-bound parameter used for interleaving distance for the special geodesic functions we use (in fact, we will show that $\newtau_\delta \le \tau_\delta \le \newtau_{2\delta}$). The proof of the following main theorem of this section can be found in Appendix \ref{appendix:thm:GHalg}. 
\begin{theorem}\label{thm:GHalg}
Given two metric trees $\Tcal_1 = (T_1, d_1)$ and $\Tcal_2 = (T_2, d_2)$ where the total number of vertices of $T_1$ and $T_2$ is $n$, 
we can $14$-approximate the Gromov-Hausdorff distance $\hat{\Intdopt} = \dgh(\Tcal_1, \Tcal_2)$ in $O(n^4\log n + n^2 2^\newtau \newtau^{\newtau+2}\log^3 n)$ time, where $\newtau = 2\newtau_{28\hat{\Intdopt}}(T_1, T_2)$ is twice the metric-degree-bound parameter w.r.t. $28\hat{\Intdopt}$. 
%
\end{theorem}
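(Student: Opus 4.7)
The plan is to apply Claim \ref{claim:GHandInterleaving}, which reduces approximating $\dgh(\Tcal_1, \Tcal_2)$ to computing $\mu = \min_{u \in V_1, w \in V_2} d_I(T_1^{f_u}, T_2^{g_w})$. Outputting $\mu$ itself already gives a $14$-approximation, since the claim yields $\hat{\Intdopt}/2 \leq \mu \leq 14\hat{\Intdopt}$ where $\hat{\Intdopt} = \dgh$. As preprocessing, I would compute all pairwise shortest-path distances in $T_1$ and $T_2$ in $O(n^2)$ time (e.g., BFS from each tree node), which directly supplies the height functions $f_u(x) = -d_1(x, u)$ and $g_w(y) = -d_2(y, w)$; the merge tree $T_1^{f_u}$ is simply $T_1$ re-rooted at $u$, and symmetrically for $T_2^{g_w}$. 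The basic computational primitive is then to invoke the algorithm of Theorem \ref{thm:optinterleaving-fast} on a pair $(T_1^{f_u}, T_2^{g_w})$.

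Before analyzing cost, I would establish that the merge-tree degree-bound parameter $\tau_\delta(T_1^{f_u}, T_2^{g_w})$ used by Theorem \ref{thm:optinterleaving-fast} is sandwiched by a metric parameter depending only on the input trees, uniformly in $(u,w)$: $\newtau_\delta(T_1, T_2) \leq \tau_\delta(T_1^{f_u}, T_2^{g_w}) \leq \newtau_{2\delta}(T_1, T_2)$. This follows from the inclusions $\widehat{B}_\delta(v; T_1) \subseteq B_\delta(v; T_1^{f_u}) \subseteq \widehat{B}_{2\delta}(v; T_1)$ (and symmetrically in $T_2$). For the first inclusion, if $y$ lies on the path $\pi(v,x)$ with $d_1(v,x) \leq \delta$, then $|f_u(v) - f_u(y)| = |d_1(v,u) - d_1(y,u)| \leq d_1(v,y) \leq \delta$. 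For the second, letting $z$ be the LCA of $v$ and $y$ in $T_1^{f_u}$, the ball condition forces $d_1(v,z) = f_u(z) - f_u(v) \leq \delta$ and likewise $d_1(y,z) \leq \delta$, so $d_1(v,y) \leq 2\delta$. Applied at the optimal pair $(u^*, w^*)$, where $\mu = d_I(T_1^{f_{u^*}}, T_2^{g_{w^*}}) \leq 14\hat{\Intdopt}$, this yields $\tau_\mu \leq \newtau_{2\mu} \leq \newtau_{28\hat{\Intdopt}}$; the extra factor of $2$ in the definition $\newtau = 2\newtau_{28\hat{\Intdopt}}$ is there to absorb constant-factor slack from intermediate decision queries at $\delta$ values slightly larger than $\mu$ incurred during the parametric search.

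The main obstacle will be bringing the total runtime down to the claimed $O(n^4\log n + n^2 2^\newtau \newtau^{\newtau+2}\log^3 n)$ rather than the naive $O(n^4 \cdot 2^\newtau \newtau^{\newtau+2}\log^3 n)$ that would result from running Theorem \ref{thm:optinterleaving-fast} separately on each of the $O(n^2)$ base-point pairs. The strategy I would pursue is a unified parametric search across pairs: enumerate all $O(n^4)$ candidate interleaving values from the $\criSet$-sets of every pair and sort them once in $O(n^4\log n)$ time, then drive a global double-binary search in the spirit of the Kyle Fox argument used in Theorem \ref{thm:optinterleaving-fast}, but across pairs rather than within a single pair. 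The delicate step is to arrange matters so that each probe commits to one distinguished candidate pair (namely, the pair whose own $\criSet$ contributed the probed value), so that only $O(\log n)$ invocations of \modifyDPalg{} are ever required in total, concentrating the expensive FPT work in a single $O(n^2 2^\newtau \newtau^{\newtau+2}\log^3 n)$ term while the rest of the effort remains purely comparison-based.
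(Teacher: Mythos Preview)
Your approach is essentially the paper's: reduce via Claim~\ref{claim:GHandInterleaving}, establish the sandwich $\newtau_\delta \le \tau_\delta(T_1^{f_u},T_2^{g_w}) \le \newtau_{2\delta}$ uniformly in $(u,w)$, pool the $O(n^4)$ candidate values from all $\criSet_{u,w}$ into one sorted list in $O(n^4\log n)$ time, and then run the double (exponential on $\tau$, binary on $\delta$) search of Theorem~\ref{thm:optinterleaving-fast} globally, with each probe invoking \modifyDPalg{} on only the single pair $(u,w)$ that contributed the probed value; the factor $2$ in $\newtau$ absorbs the overshoot from the exponential doubling on $\tau$, exactly as in the paper.

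One correction is needed in your proof of the upper sandwich bound. The inclusion $B_\delta(v; T_1^{f_u}) \subseteq \widehat{B}_{2\delta}(v; T_1)$ with the \emph{same} center $v$ is false. Take a branch point $z$ on the path from $u$ to $v$ with $d_1(z,v)=\delta$, and attach a side branch from $z$ to $x$ of length just under $2\delta$; then every point on $\pi(v,x)$ has $f_u$-value within $\delta$ of $f_u(v)$, so $x\in B_\delta(v;T_1^{f_u})$, yet $d_1(v,x)$ is just under $3\delta$. Your ``likewise $d_1(y,z)\le\delta$'' step is where this breaks, since the ball condition constrains heights relative to $v$, not to $y$. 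The paper instead recenters: it takes $y$ to be the $f_u$-highest point of $B_\delta(v;T_1^{f_u})$ and shows $B_\delta(v;T_1^{f_u})\subseteq\widehat{B}_{2\delta}(y;T_1)$ (each point in the ball lies on a monotone path below $y$ of $f_u$-drop at most $2\delta$, and for the geodesic function $f_u$ this drop equals the $d_1$-distance to $y$). This suffices for the degree-sum bound because $\newtau_{2\delta}$ is a supremum over all centers. With this small repair your argument goes through and matches the paper.
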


\paragraph{Remarks:} 
We remark that the time complexity of the FPT approximation algorithm of \cite{Sch17} contains terms $n^k$, where $k$ is the parameter and could be large in general -- Indeed, $k$ is the cardinality of an $\varepsilon$-net of one of the input metric spaces, and $\varepsilon$ also appears as an additive approximation term for  algorithm. In contrast, the dependency of our algorithm on the parameter $\hat{\tau}$ is roughly $O(2^{O(\hat{\tau})})$, and our algorithm has only constant multiplicative approximation factor. On the other hand, note that the algorithm of \cite{Sch17} works for general finite metric spaces. 
We also remark that the Gromov-Hausdorff distance between two metric spaces $(X, d_X)$ and $(Y, d_Y)$ measures their \emph{additive distortion}, and thus is not invariant under scaling. In particular, suppose the input two metric spaces $\Tcal_1 = (T_1, d_1)$, $\Tcal_2 = (T_2, d_2)$ scale by the same amount to a new pair of input trees $\Tcal_1' = (T_1', d_1' = c\cdot d_1)$, $\Tcal_2' = (T_2', d_2' = c \cdot d_2)$. Then the new Gromove-Hausdorff distance between them $\delta_{GH} ( \Tcal_1' , \Tcal_2') = c \cdot \delta_{GH} (\Tcal_1, \Tcal_2)$. 
However, note that the metric-degree-bound parameter for the new trees satisfies $\newtau_{c\delta}(\Tcal_1', \Tcal_2') = \newtau_\delta(\Tcal_1, \Tcal_2)$. Hence the time complexity of our algorithm to approximate the Gromov-Hausdorff distance $\delta_{GH} ( \Tcal_1' , \Tcal_2')$ for scaled metric-trees $\Tcal_1'$ and $\Tcal_2'$ {\bf remains the same} as that for approximating the Gromov-Hausdorff distance $\delta_{GH} ( \Tcal_1, \Tcal_2)$.

\section{Concluding Remarks}
\label{sec:conclusion}

In this paper, by re-formulating the interleaving distance, we developed the first FPT algorithm to compute the interleaving distance \emph{exactly} for two merge trees, which in turn leads to an FPT algorithm to approximate the Gromov-Hausdorff distance between two metric trees. 

We remark that the connection between the Gromov-Hausdorff distance and the interleaving distance is essential, as the interleaving distance has more structure behind it, as well as certain ``order'' (along the function associated to the merge tree), which helps to develop dynamic-programming type of approach. 
For more general metric graphs (which represent much more general metric spaces than trees), it would be interesting to see whether there is a similar relation between the Gromov-Hausdorff distance of metric graphs and the interleaving distance between the so-called Reeb graphs (generalization of merge trees). 

\paragraph{Acknowledgment:} We thank reviewers for helpful comments. We would like to thank Kyle Fox, for suggesting an elegant double-binary search procedure, to improve the time complexity of the optimal interleaving distance by almost a factor of $n^2$ (see Theorem \ref{thm:optinterleaving-fast}), which further leads to a similar improvement for approximating the Gromov-Hausdorff distance (Theorem \ref{thm:GHalg}). 
This work is partially supported by National Science Foundation (NSF) under grants CCF-1740761, IIS-1815697 and CCF-1618247, as well as by National Institute of Health (NIH) under grant R01EB022899.   

\bibliographystyle{abbrv}
\bibliography{ref}

\newpage
\appendix

\section{Proof of Theorem \ref{thm:epsgoodtwo}}
\label{appendix:thm:epsgoodtwo}

Theorem \ref{thm:epsgoodtwo} follows from Lemma \ref{lem:epsgoodone} and \ref{lem:epsgoodtwo} below. 
\begin{lemma}\label{lem:epsgoodone}
If $d_I(T_1^f, T_2^g) \le \eps$, then there exists an ${\eps}$-\mygood{} map ${\alpha}:\uTone \rightarrow \uTwo$.
\end{lemma}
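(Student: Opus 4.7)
The plan is to show that if $(\alpha,\beta)$ is a pair of $\eps$-compatible maps witnessing $d_I(T_1^f,T_2^g)\le \eps$, then the map $\alpha:\uTone\to\uTwo$ by itself already satisfies the three conditions {\sf (P1)}, {\sf (P2)}, {\sf (P3)} of Definition \ref{def:epsgood}. So the strategy is simply to take the $\alpha$-half of the $\eps$-compatible pair and verify each of the three conditions in turn, using the conditions on $\beta$ only as auxiliary tools.

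First, {\sf (P1)} is identical to {\sf (C1)}, so nothing is to be done there. For {\sf (P2)}, I want to argue the following: from {\sf (C3)} we have $f(\beta(w))=g(w)+\eps$, so $\beta$ is a \emph{continuous and monotone} map from $T_2^g$ to $T_1^f$ (in the sense of the text preceding Observation \ref{obs:ancester}, with $h_1=g$ and $h_2=f$). Hence Observation \ref{obs:ancester} applies to $\beta$: if $\alpha(u_1)\myanceq \alpha(u_2)$ in $T_2^g$, then $\beta(\alpha(u_1))\myanceq \beta(\alpha(u_2))$ in $T_1^f$. But by {\sf (C2)}, $\beta\circ\alpha(u_i)=u_i^{2\eps}$, so this gives exactly $u_1^{2\eps}\myanceq u_2^{2\eps}$, which is {\sf (P2)}.

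For {\sf (P3)}, the key observation is {\sf (C4)}: for any $w\in \uTwo$, the point $w^{2\eps}=\alpha(\beta(w))$ lies in $\Img(\alpha)$. Since $w^{2\eps}$ is an ancestor of $w$ at height exactly $g(w)+2\eps$, the set of ancestors of $w$ that lie in $\Img(\alpha)$ is non-empty. The lowest such ancestor $\mywF$ (which is well-defined: along the monotone ray from $w$ upward, the function $g$ is strictly increasing, and the set of ancestors of $w$ in $\Img(\alpha)$ is closed because $\alpha$ is continuous and $\uTone$ can be exhausted by compact sub-trees) then satisfies $g(w)\le g(\mywF)\le g(w^{2\eps})=g(w)+2\eps$, giving $|g(\mywF)-g(w)|\le 2\eps$ as required.

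The conceptually novel step is really only {\sf (P3)}: one needs to recognize that the ``backward'' map $\beta$, whose presence was crucial in the original interleaving definition, serves in a single-map world only as a \emph{certificate} of the existence of a nearby ancestor of $w$ inside $\Img(\alpha)$. Conditions {\sf (C2)} and {\sf (C4)} each collapse into one half of the $\eps$-good requirements: {\sf (C2)} (via Observation \ref{obs:ancester} applied to $\beta$) gives the order-preservation property {\sf (P2)}, while {\sf (C4)} gives the ``small gap'' property {\sf (P3)}. I expect no technical obstacle beyond a brief check that $\mywF$ is well-defined, which is essentially a closedness argument on a finite rooted tree.
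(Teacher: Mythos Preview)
Your proposal is correct and follows essentially the same approach as the paper: take $\alpha$ from the $\eps$-compatible pair, get {\sf (P1)} for free from {\sf (C1)}, derive {\sf (P2)} by applying Observation~\ref{obs:ancester} to the continuous monotone map $\beta$ together with {\sf (C2)}, and derive {\sf (P3)} from the fact that $w^{2\eps}=\alpha(\beta(w))\in\Img(\alpha)$ via {\sf (C4)}. The only cosmetic difference is that the paper phrases {\sf (P3)} as a short proof by contradiction rather than your direct bound, and it does not pause to discuss well-definedness of $\mywF$.
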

\begin{proof}
By definition, if $d_I(T_1^f, T_2^g) \le \eps$, then there exists a pair of $\eps$-compatible maps $\alpha_\eps: \uTone \to \uTwo$ and $\beta_\eps: \uTwo \to \uTone$. 
We simply set $\alpha := \alpha_\eps$, and argue that the three properties in Definition \ref{def:epsgood} will all hold for $\alpha$. 

Indeed, property {\sf (P1)} follows trivially from condition (C1) of Definition \ref{def:compatible}.
To see that property {\sf (P2)} holds, set $w_i = \alpha(u_i)$, for $i = 1, 2$, and note that $u_i^{2\eps} =\beta_\eps(w_i)$. 
Since the map $\beta_\eps$ is continuous and $f(\beta_\eps(w)) = g(w) + \eps$ for any $w\in \uTwo$ (therefore monotone), it then follows from Observation \ref{obs:ancester} that $u_1^{2\eps} = \beta_\eps(w_1) \myanceq \beta_\eps(w_2) = u_2^{2\eps}$, establishing property {\sf (P2)}. 
 
We now show that property {\sf (P3)} also holds. 
Specifically, consider any $w \in \uTwo \setminus \Img(\alpha)$, and let $\mywF$ be its lowest ancestor from $\Img(\alpha)$. 
Assume on the contrary that $g(\mywF) - g(w) > 2 \eps$, then it must be that $\mywF \myanc w^{2\eps} \myanc w$. 
On the other hand, consider $u  = \beta_\eps(w)$. As $\alpha (= \alpha_\eps)$ and $\beta_\eps$ are $\eps$-compatible, we have that $\alpha(u) = w^{2\eps}$, meaning that $w^{2\eps} \in \Img(\alpha)$. 
This however contradicts our assumption that $\mywF$ is the lowest ancestor of $w$ from $\Img(\alpha)$ as $\mywF \myanc w^{2\eps}$. Hence it is not possible that $g(w^F) - g(w) > 2\eps$, and property {\sf (P3)} holds. 
\end{proof}

\begin{lemma}\label{lem:epsgoodtwo}
If there is an $\eps$-\mygood{} map $\alpha: \uTone \to \uTwo$, then $d_I(T_1^f, T_2^g) \le \eps$. 
\end{lemma}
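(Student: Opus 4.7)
The plan is to keep $\alpha$ as one of the two maps required for $\eps$-compatibility and construct a partner $\beta\colon \uTwo \to \uTone$ so that the pair $(\alpha, \beta)$ satisfies all four conditions of Definition~\ref{def:compatible}; once such a pair is produced, the definition of $d_I$ immediately gives $d_I(T_1^f,T_2^g)\le \eps$. Condition (C1) follows trivially from (P1), so the real task reduces to producing a continuous $\beta$ for which (C2), (C3), and (C4) hold.

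The natural definition of $\beta$ splits according to whether $w\in \Img(\alpha)$ or not. If $w\in \Img(\alpha)$, I would pick any $u$ with $\alpha(u)=w$ and set $\beta(w):=u^{2\eps}$; then $f(\beta(w))=f(u)+2\eps=g(w)+\eps$ gives (C3), and (C2) will hold by construction. Independence from the choice of preimage is the first place (P2) is needed: if $\alpha(u_1)=\alpha(u_2)=w$, applying (P2) in both directions forces $u_1^{2\eps}\myanceq u_2^{2\eps}$ and $u_2^{2\eps}\myanceq u_1^{2\eps}$, so the two candidates coincide. If $w\notin \Img(\alpha)$, (P3) yields a lowest ancestor $w^F\in \Img(\alpha)$ with $g(w^F)-g(w)\le 2\eps$; I would pick any $u$ with $\alpha(u)=w^F$ and define $\beta(w):=u^{2\eps-(g(w^F)-g(w))}$, a valid ancestor of $u$ (the exponent being nonnegative by (P3)) which again satisfies $f(\beta(w))=g(w)+\eps$.

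Checking (C4) in both cases uses the continuity and monotonicity of $\alpha$: by Observation~\ref{obs:ancester}, $\alpha(\beta(w))$ is an ancestor of $w$ (for $w\in\Img(\alpha)$ via $\alpha(u^{2\eps})\myanceq \alpha(u)=w$; for $w\notin \Img(\alpha)$ via $\alpha(\beta(w))\myanceq \alpha(u)=w^F\myanceq w$), and (P1) pins its height at exactly $g(w)+2\eps$. Because ancestors in a merge tree are uniquely determined by their height, this ancestor must be $w^{2\eps}$. For the second case one also needs $w^F\mydesceq w^{2\eps}$, which is immediate from $g(w^F)\le g(w)+2\eps$ and both points lying on the root-path of $w$.

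The main obstacle will be verifying that $\beta$ is continuous, particularly across the boundary of $\Img(\alpha)$ inside $\uTwo$. I plan to argue this in two steps. On each connected piece of $\Img(\alpha)$, continuity follows from the fact that $\beta$ equals the $2\eps$-ancestor of any (locally continuous) choice of preimage of $\alpha$, and the well-definedness argument above shows the value does not depend on the choice, so the map inherits continuity from $\alpha$ and the continuity of the $^{2\eps}$-ancestor operation. On each connected component $K$ of $\uTwo\setminus \Img(\alpha)$, the associated lowest ancestor $w^F\in \Img(\alpha)$ and its preimage $u$ are constant on $K$, so the formula $u^{2\eps-(g(w^F)-g(w))}$ varies continuously with the scalar $g(w)$ along $K$; at the boundary point $w=w^F$ the exponent specializes to $2\eps$, matching the $\Img(\alpha)$-definition, which gives continuity across the gluing. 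Once $\beta$ is shown to be continuous, all four conditions of Definition~\ref{def:compatible} are in place and the lemma follows.
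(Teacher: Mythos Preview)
Your construction of $\beta$ and the verification of (C1)--(C4) match the paper's proof almost verbatim; your formula $u^{2\eps-(g(w^F)-g(w))}$ is just a compact rewriting of the paper's description of the point on the segment from $u^F$ to $(u^F)^{2\eps}$, and the paper likewise fixes one preimage $u^F$ per component of $\uTwo\setminus\Img(\alpha)$ (it makes the choice deterministic via an ``id'' ordering, which you should also do to make $\beta$ a well-defined function).

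The one soft spot is your continuity argument on $\Img(\alpha)$: you invoke a ``locally continuous choice of preimage'' of $\alpha$, but such a local section need not exist at branch points of $\Img(\alpha)$. For instance, if two disjoint arcs of $T_1$ are mapped by $\alpha$ onto the two distinct downward branches below a single fork $w_0\in T_2$, then every $u\in\alpha^{-1}(w_0)$ sees only one of the two branches in its subtree, so no section extends across a full neighborhood of $w_0$. The paper avoids this by using {\sf (P2)} directly rather than through sections: for a short monotone arc $\pi\subset\Img(\alpha)$, {\sf (P2)} forces $\beta$ to preserve the ancestor relation along $\pi$, and together with the height shift from {\sf (P1)} this confines $\beta(\pi)$ to a short monotone arc in $T_1^f$, which is exactly the $\eps$--$\delta$ statement of continuity. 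Replacing your section argument with this order-preservation argument closes the gap; the rest of your sketch goes through as written.
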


\begin{proof}
Given an $\eps$-\mygood{} map $\alpha: \uTone \to \uTwo$, we will show that we can construct a pair of continuous maps $\alpha_\eps: \uTone \to \uTwo$ and $\beta_\eps: \uTwo \to \uTone$ that are $\eps$-compatible, which will then prove the lemma. 
Specifically, set $\alpha_\eps = \alpha$. 
We construct $\beta_\eps$ as follows: 

\begin{figure}[tpbh]
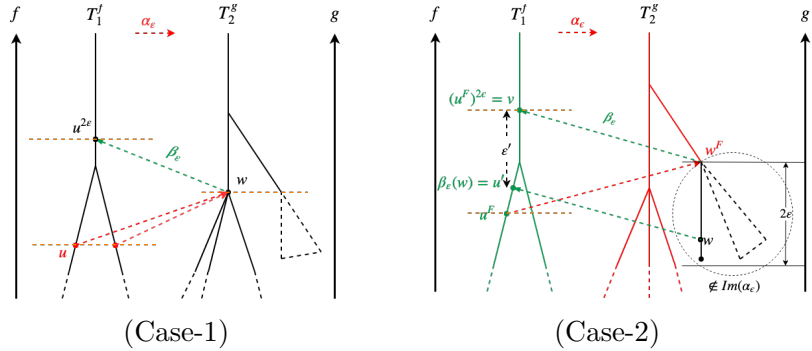

\begin{center}
\begin{tabular}{cc}
\includegraphics[height=4cm]{./Figs/lemma2case3} & \includegraphics[height=4cm]{./Figs/lemma2case2}\\
(Case-1) & (Case-2)
\end{tabular}
\vspace*{-0.2in}
\end{center}
\caption{(Case-1). Construction of $\beta_{\epsilon}$ for the points of $T_2^g$ which are in $\Img(\alpha)$. 
(Case-2). Construction of $\beta_{\epsilon}$ for the points of $T_1^f$ which are not in $\Img(\alpha)$. $w$ is not in the $\Img(\alpha)$, and $w^F$ is its lowest ancestor in the $\Img(\alpha)$. $v=\beta_\varepsilon(w^F)$ is a point in $2\varepsilon$ higher than $u^F$ in $T_1^f$. $u'= \beta_{\varepsilon}(w)$ is an ancestor of $u^F$ that we map $w$ by $\beta_{\varepsilon}$.
\label{fig:Lemma2}}
\end{figure}
\noindent\emph{(Case-1):} For any point $w\in \Img(\alpha) (\subseteq \uTwo)$, we simply set $\beta_\eps(w) = u^{2\eps}$, where $u \in \uTone$ is any point from $\alpha^{-1}(w)$. 
Note that the choice of $u$ does not matter. This is because that as $\alpha(u_1) = \alpha(u_2) =w$ (thus $\alpha(u_1) \myanceq \alpha(u_2))$, it then follows from 
property {\sf (P2)} of the $\eps$-\mygood{} map $\alpha$ that $u_1^{2\eps} \myanceq u_2^{2\eps}$. Since $f(u_1^{2\eps}) = f(u_2^{2\eps}) = f(u_1)+2\eps$, it then must be that $u_1^{2\eps} = u_2^{2\eps}$. Hence $\beta_\eps(w)$ is well-defined (independent of the choice of $u$). See the right figure in Figure \ref{fig:Lemma2}.  

\noindent\emph{(Case-2):} For any point $w\in \uTwo \setminus \Img(\alpha)$ (i.e, $w\notin \Img(\alpha)$), let $\mywF$ be its lowest ancestor in $\Img(\alpha)$. 
There could be multiple points in $T_1^f$ from $\alpha^{-1}(\mywF)$. Consider an arbitrary but fixed choice $u^F\in \alpha^{-1}(\mywF)$: For example, assume that all nodes and edges from $T_1^f$ have a unique integer id, and we choose $u^F$ to be the point from the node or edge with lowest id. 
Set $v = (u^F)^{2\eps}$ to be the ancestor of $u^F$ at height $2\eps$ above $u^F$. From (Case-1), we know that we have already set $\beta_\eps(\mywF) = v$. 
See the above figure for an illustration.

Let $\eps' = g(\mywF) - g(w)$; by property {\sf (P3)}, $\eps' \le 2\eps$. 
We simply set $\beta_\eps(w)$ to be the unique point $u'$  from the path connecting $u^F$ to its ancestor $v$ such that $f(v) - f(u') = \eps'$; that is, $v \myanceq u'  \myanceq u^F$. 
Easy to verify that under this construction, $f(\beta_\eps(w)) = g(w) + \eps$. 

We now show that $\alpha_\eps$ and $\beta_\eps$ as constructed above form a pair of $\eps$-compatible maps for $T_1^f$ and $T_2^g$. 
First, we claim that $\beta_\eps$ is continuous 
(Note that $\alpha_\eps$ is continuous as $\alpha_\eps = \alpha$.) Next we show that all conditions in Definition \ref{def:compatible} are satisfied. 

\vspace*{0.15in}\emph{(Claim-1)
The map $\beta_\eps: \uTwo \to \uTone$ is continuous.}
To this end, we first put a function-induced metric $d_f$ (resp. $d_g$) on $T_1^f$ (resp. on $T_2^g$), defined as follows: 
for any $u, u' \in \uTone$, let $\pi(u, u')$ be the unique tree path connecting $u$ to $u'$. Set $d_f(u, u') = \max_{x\in \pi(u, u')} f(x) - \min_{y\in \pi(u,u')} f(y)$.
In other words, $d_f(u, u')$ is the maximum variation of the $f$-function value along the unique path $\pi(u, u')$ from $u$ to $u'$. 
Define the function-induced metric $d_g$ for $\uTwo$ in a symmetric manner. 

Let $B_r(x, T_1^f) = \{ y\in \uTone \mid d_f(x, y) < r\}$ denote the open ball around a point $x\in T_1^f$ under the function-induced metric $d_f$; 
and define $B_r(z, T_2^g)$ symmetrically. 
(Note that this ball is in fact the same as the $\eps$-ball we will use at the beginning of Section \ref{sec:decision} to introduce the degree-bound parameter.) 
For simplicity, we sometimes omit the reference to the tree in these open balls when its choice is clear. 
To show that $\beta_\eps: \uTwo \to \uTone$ is continuous, we just need to show that for any $w \in \uTwo$ and $u = \beta_\eps(w)$, 
given any radius $r>0$, there always exists $r'>0$ such that $\beta_\eps(B_{r'} (w, T_2^g)) \subseteq B_r(u, T_1^f)$.

Fix $w\in \uTwo$, $u = \beta_\eps(w)$, and radius $r>0$. Let $0< r'  < r$ be a sufficiently small value so that $B_{r'}(w, T_2^g)$ contains no tree nodes other than potentially $w$. 
We will prove that $\beta_\eps(w_0) \in B_{r'} (u, T_1^f)$ (and thus in $B_r(u, T_1^f)$) for any $w_0 \in \beta_\eps(B_{r'} (w, T_2^g))$. 
Note first, by our choice of $r'$, the unique path $\pi$ connecting $w_0$ to $w$ is monotone in $g$-function values, as the ball $B_{r'}(w, T_2^g)$ does not contain any tree node other than potentially $w$. 
If $\pi \subset \Img(\alpha)$, then, by our construction of $\beta_\eps$, $\beta_\eps(\pi)$ is a monotone path and thus lies in $B_{r'} (u, T_1^f)$. Hence $\beta_\eps(w_0) \in B_{r'} (u, T_1^f)$ for any $w_0 \in B_{r'} (w, T_2^g)$. 

Now assume that $\pi \setminus \Img(\alpha) \neq \emptyset$. 
W.o.l.g assume that $g(w_0) < g(w)$; otherwise, we simply switch the role of the two in our argument below. 
Let $w^F$ denote the lowest ancestor of $w_0$ from $\Img(\alpha)$.
First, assume that $w \notin \Img(\alpha)$, which means that the entire path $\pi \in \uTwo \setminus \Img(\alpha)$, and $w^F \myanceq w \myanceq w_0$. 
In this case, let  
$u^F \in \alpha^{-1}(w^F)$ be the preimage of $w^F$ under $\alpha$ used in our procedure to construct $\beta_\eps(w)$. Set $v = (u^F)^{2\eps} = \beta_\eps(w^F)$, and let $\pi'(u^F, v)$ denote the unique path between them (note that the path $\pi'(u^F, v)$ is monotone.) 
Based on our procedure to construct $\beta_\eps$, we know that both $\beta_\eps(w)$ and $\beta_\eps(w_0)$ are in $\pi'(u^F, v)$, and $\beta_\eps(\pi)$ is in fact a subpath of $\pi'(u^F, v)$ and thus also monotone. 
It then follows that  $\beta_\eps(\pi) \subseteq B_{r'} (u, T_1^f)$. 

The only remaining case is when $w \myanceq w^F \myanceq w_0$. In this case, we consider the two sub-path $\pi_1 = \pi(w_0, w^F)$ and $\pi_2 = \pi(w^F, w)$ of $\pi$. Using similar arguments above, we can show that $\beta_\eps(\pi)$ is still a monotone continuous path in $\uTone$ and thus $\beta_\eps(\pi) \in B_{r'} (u, T_1^f)$. Hence $\beta_\eps(w_0) \in B_{r'} (u, T_1^f)$ for any $w_0 \in B_{r'} (w, T_2^g)$.

Putting everything together, it then follows that the map $\beta_\eps$ constructed is continuous. 


\vspace*{0.15in}\emph{(Claim-2)
All conditions in Definition \ref{def:compatible} are satisfied for the map $\beta_\eps: \uTwo \to \uTone$.}
Conditions ({\sf C1}), ({\sf C2}) and ({\sf C3}) follow from the constructions of $\alpha_\eps$ and $\beta_\eps$. 
What remains is to prove that condition ({\sf C4}) holds. 
Consider any $w\in \uTwo$. 

If $w = \alpha(u) \in \Img(\alpha)$, then by construction $\beta_\eps(w) = u^{2\eps} \myanc u$. Then Observation \ref{obs:ancester} implies that $\alpha_\eps(\beta_\eps(w))$ is necessarily an ancestor of $\alpha(u) = w$. Furthermore, since $g(\alpha_\eps(\beta_\eps(w)) - g(w) = 2\eps$ (by condition ({\sf C3})), it then must be that $\alpha_\eps(\beta_\eps(w)) = w^{2\eps}$, establishing condition ({\sf C4}). 

Otherwise, $w\notin \Img(\alpha)$: let $\mywF$ be its lowest ancestor from $\Img(\alpha)$ with $u^F$ being the point from $\alpha^{-1}(\mywF)$ as used in (Case-2) of the construction above. Recall that we set $\beta_\eps(w)$ such that $(u^F)^{2\eps} \myanceq \beta_\eps(w) \myanceq u^F$. Hence $\alpha_\eps(\beta_\eps(w))$ must be an ancestor of $\alpha(u^F) = \mywF$. It then follows that $\alpha_\eps(\beta_\eps(w)) \myanceq \mywF \myanceq w$. Furthermore, since $g(\alpha_\eps \circ \beta_\eps(w)) = g(w) + 2\eps$, we have that $\alpha_\eps\circ \beta_\eps(w) = w^{2\eps}$. 
Putting everything together, we thus have that $\alpha_\eps$ and $\beta_\eps$ form a pair of $\eps$-compatible maps for $T_1^f$ and $T_2^g$, implying that $d_I(T_1^f, T_2^g) \le \eps$. 

This finishes the proof of Lemma \ref{lem:epsgoodtwo}. 
\end{proof}

\section{Proof of Lemma \ref{lem:partialgood}}
\label{appendix:lem:partialgood}

Recall that $h_i = height(\slone{i})$ and $\widehat{h}_i = height(\sltwo{i}) = h_i + \mydelta$, for any $i \in [1, m]$. We prove this lemma by induction on the indices of \superlevel{}s. 

For the base case when $i=1$, it is easy to verify that the lemma holds for any valid pair $(S,w)$ from level-1. 
Indeed, at this first level, $\myF(S) = S$ and $T_2(w) = w$, and thus nodes in $S$ and $w$ are all leaf-nodes of $T_1^f$ or $T_2^g$. Hence if $F(S,w) = 1$, properties {\sf (P1)} and {\sf (P3)} hold trivially for the map $\alpha: \myF(S) \to T_2(w)$ defined as $\alpha(s) = w$ for any $s\in \myF(s) =S$. Property {\sf (P2)} follows from the fact that $(S,w)$ is valid, thus $s_1^{2\mydelta} = s_2^{2\mydelta}$ for any two $s_1, s_2\in S$.  
Similarly, if there is a \mypartial{$\mydelta$}{} map $\alpha: \myF(S) \to T_2(w) = w$ for a valid pair $(S,w)$, it has to be that $\alpha(s) = w$ for any $s \in \myF(S)$. 
Since $T_2(w) = w$, $w$ is a leaf and thus $\newdepth(w) = 0 < 2\mydelta$, meaning that $F(S,w) = 1$. 

Now consider a generic level $i>1$, and assume the claim holds for all valid pairs from level $j < i$. We prove that the claim holds for any valid pair $(S, w)$ from level-$i$ as well: 

\begin{figure}[tpbh]
\begin{center}
\begin{tabular}{cc}
\includegraphics[height=3.8cm]{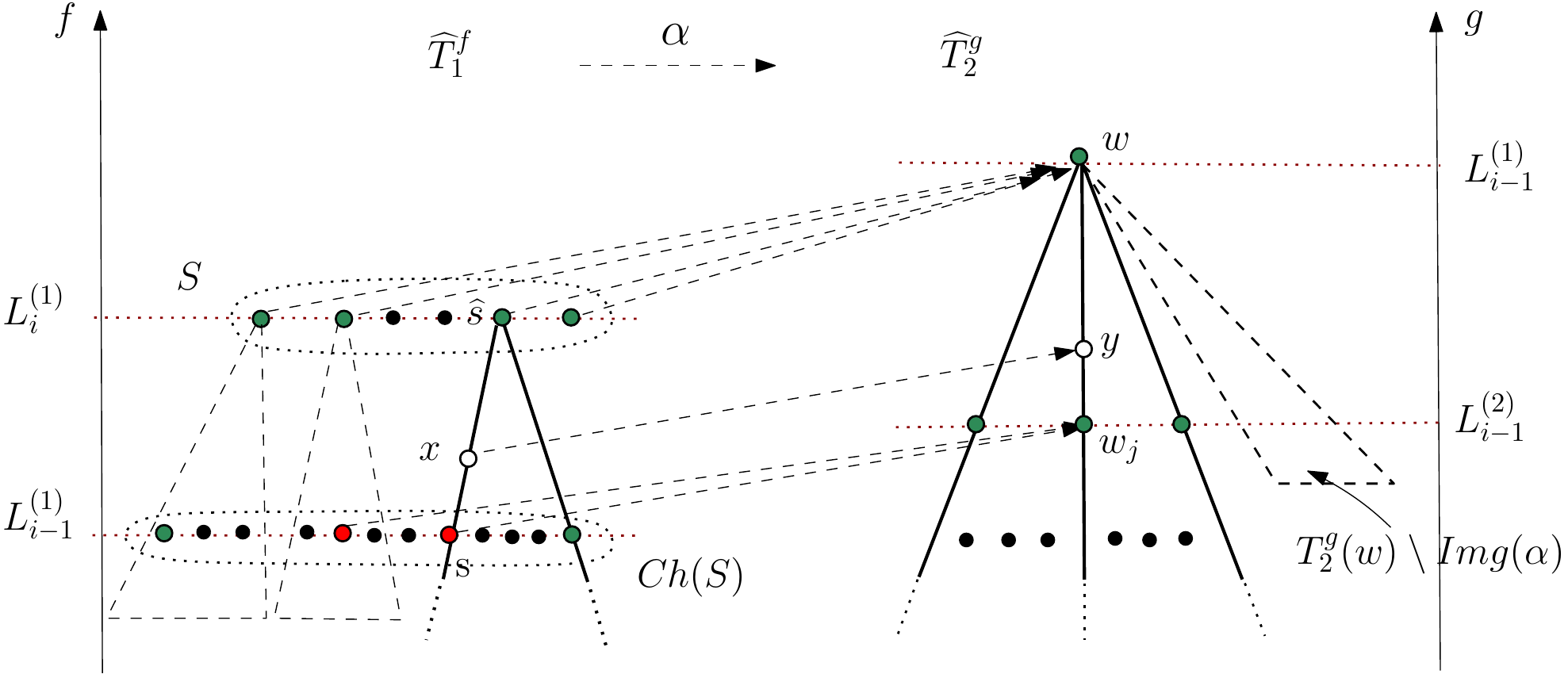} &
\includegraphics[height=3.7cm]{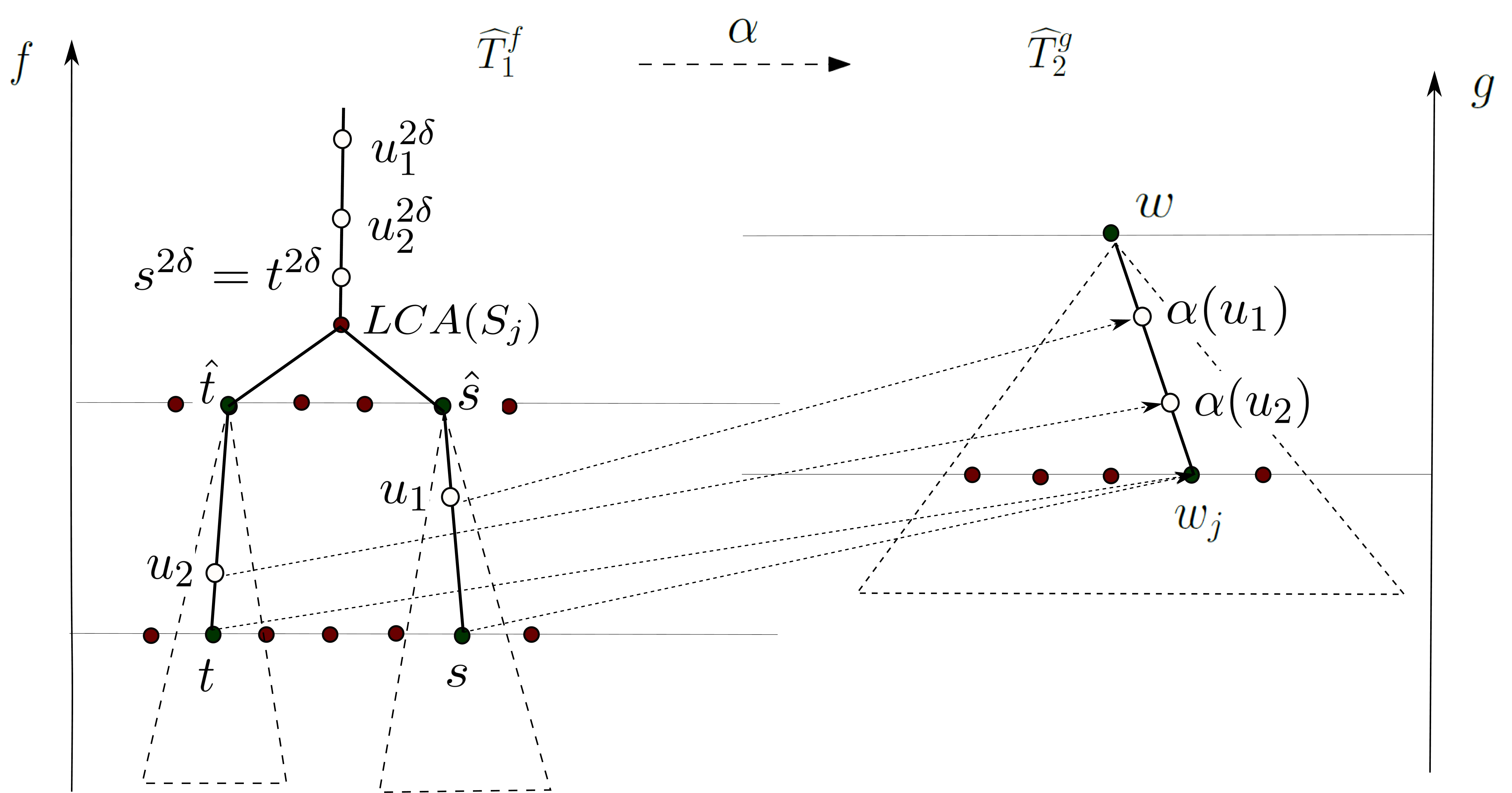}\\
(a) & (b)
\end{tabular}
\end{center}
\vspace*{-0.15in}
\caption{(a). 
Assume that $S_j \subset \levelC(S)$ consists the two red points $s$ and $s'$; note $\alpha(S_j) = w_j$. Under extension of $\alpha$, $x$ (from edge $(s, \hat{s})$) is mapped to $y \in (w_j, w)$. 
(b). Illustration for case-(a): $s, t \in S_j$ and $\alpha(S_j) = w_j$, hence $S_j$ must be valid, meaning that $s^{2\delta} = t^{2\delta} \myanceq LCA(S_j)$, which ulitmately implies that $u_1^{2\delta} \myanceq u_2^{2\delta}$.  
\label{fig:Lemma3-1}}
\end{figure} 
\begin{description}\denselist
\item[$\Rightarrow$: ] Suppose $F(S,w) = 1$. In this case, we will show that we can construct a \mypartial{$\mydelta$}{} map $\alpha: \myF_1(S) \to T_2(w)$. 
We assume that $w$ is not a leaf-node; as otherwise, $F(S,w)$ means that all nodes in $S$ are also necessarily leaf-nodes for $\hatT_1^f$ and thus $\myF(S) = S$ and $T_2(w) = \{w\}$. We then simply set $\alpha: \myF(S) \to T_2(w)$ as $\alpha(s) = w$ for each $s\in S$ and easy to see that this $\alpha$ is \mypartial{$\mydelta$}{} (by using the same argument as for the base case). 
Now suppose $\levelC(w) = \{w_1, \ldots, w_k\}$ and let $S_1, \ldots, S_k$ be the partition of $\levelC(S)$ that make $F(S,w) = 1$. 
Assume w.o.l.g that $S_{1}, \ldots, S_a$ are non-empty, while $S_{a+1}, \ldots, S_k$ are empty. 
Since $F(S_j, w_j) = 1$ for $j\in [1, a]$, there exists an \mypartial{$\mydelta$}{} map $\alpha_j: \myF_1(S_j) \to T_2(w_j)$ by induction hypothesis. The restriction of $\alpha$ to each $\myF_1(S_j)$ is simply $\alpha_j$. 
Then, we ``extend'' these $\alpha_j$'s into a map $\alpha: \myF_1(S) \to T_2(w)$ as follows: 

For each child $s \in \levelC(S)$, suppose $s\in S_j$ and the parent of $s$ is $\hat{s} \in S$. Under the \mypartial{$\mydelta$}{} map $\alpha_j: \myF_1(S_j) \to T_2(w_j)$ (from induction hypothesis), we know that $\alpha_j(s) = w_j$. 
We extend $\alpha_j$ to all points within segment $(s, \hat{s})$. Specifically, for any point $x\in edge(s, \hat{s})$, we simply set $\alpha(x)$ to be the corresponding point $y \in edge(w_j, w)$ at height $f(x) + \mydelta$ (i.e, $g(y)  = f(x) + \mydelta$). See Figure \ref{fig:Lemma3-1} (a) for an illustration.

Easy to verify that this extended map is continuous: as first, the extension along each edge $(s, \hat{s})$ is continuous. 
The only place where discontinuity may happen is at points $\hat{s}$ from $S$. However, all points in $S$ will be mapped to $w$ under this extension. Hence $\alpha$ constructed above is continuous. 

Furthermore, by construction, $\alpha$ satisfies property {\sf (P1)}. 
To prove that property {\sf (P3)} holds, consider any point $z \in T_2(w) \setminus \Img(\alpha)$.  
First, observe that by construction, all edge segments $[w_\ell, w]$, for $\ell \in [1,a]$, are contained in $\Img(\alpha)$. 
Now suppose $z \in T_2(w_j)$ for some $j > a$; that is, $S_j = \emptyset$. 
Then, the lowest ancestor of $z$ from $\Img(\alpha)$, denoted by $z^F$, is necessarily $z^F = w$. 
By (F-2) of our procedure, we have $\newdepth(w_j) \le 2\mydelta - (\hat{h}_i -\hat{h}_{i-1})$. It then follows that: 
\begin{align*}
g(z^F) - g(z) &= g(w) - g(z) = g(w) - g(w_j) + g(w_j) - g(z) \le g(w) - g(w_j) + \newdepth(w_j)\\
&\le \hat{h}_i - \hat{h}_{i-1} + \newdepth(w_j) \le 2\mydelta. 
\end{align*}
The only case left is that $z \in  edge(w_j, w)$ for some $j > a$. 
Again, its lowest ancestor from $\Img(\alpha)$ is $z^F = w$, and 
$$g(z^F) - g(z) = g(w) - g(z) \le g(w) - g(w_j) \le 2\mydelta .$$
(Again, the last inequality follows from (F-2) of our procedure.) 
Hence property {\sf (P3)} holds for the constructed map $\alpha$. 

What remains is to show that {\sf (P2)} also holds for $\alpha$. In particular, we need to show that for any $u_1, u_2 \in \myF(S)$ such that $\alpha(u_1) \myanceq \alpha(u_2)$, we have that $u_1^{2\mydelta} \myanceq u_2^{2\mydelta}$. 

First, if $\alpha(u_1) = w$ (implying that $u_1 \in S$), then this claim follows from the fact that $(S, w)$ is valid -- Indeed, as $f(\LCA(S)) \le h_i + 2\mydelta = f(u_1) + 2\mydelta$, the node $u_1^{2\mydelta}$ thus is the ancestor of $u^{2\mydelta}$ for any point $u \in \myF(S)$. 
So from now on we assume that $\alpha(u_1) \mydesc w$. 
If $\alpha(u_1) \in T_2(w_j)$ for some $j \in [1,a]$, then by construction of the map $\alpha$, we have $\alpha(u_1) = \alpha_j(u_1)$ and $\alpha(u_2) = \alpha_j(u_2)$ for the  \mypartial{$\mydelta$}{} map $\alpha_j: \myF_1(S_j) \to T_2(w_j)$. Thus $u_1^{2\mydelta} \myanceq u_2^{2\mydelta}$ holds as $\alpha_j$ is \mypartial{$\mydelta$}. 

Now assume otherwise, which means that there exists some $j\in [1,a]$ such that $\alpha(u_1) \in (w_j, w)$ where $(w_j, w)$ denote the interior of the edge connecting $w_j$ and its parent $w$, implying that $\alpha(u_2) \in T_2(w_j) \cup (w_j, w)$. 
Since $f(u_1) = g(\alpha(u_1)) - \mydelta$, there exists some $s \in S_j$ such that $u_1$ is from edge $(s, \hat{s})$ with $\hat{s}$ being the parent of $s$ from $S$. 
There are two cases:

(a) Suppose $\alpha(u_2) \in (w_j, w)$. By construction of $\alpha$, there must be $t \in S_j$ and $\hat{t} \in S$ such that $u_2 \in (t, \hat{t})$. 
Since $F(S_j, w_j) = 1$, the pair $(S_j, w_j)$ must be valid, meaning that $f(\LCA(S_j)) \le height(S_j) + 2\mydelta$. As $s, t \in S_j$, it then follows that $s^{2\mydelta}= t^{2\mydelta} \myanceq \LCA(S_j)$. See Figure \ref{fig:Lemma3-1} (b). 
Since in the tree $T_1^f$, $u_1^{2\mydelta} \myanceq s^{2\mydelta}$, $u_2^{2\mydelta} \myanceq t^{2\mydelta}$, and $f(u_1^{2\mydelta}) \ge f(u_2^{2\mydelta})$, it must be that 
$u_1^{2\mydelta}\myanceq u_2^{2\mydelta} \myanceq \LCA(S_j)$. Hence property {\sf (P2)} for holds for $u_1$ and $u_2$. 

(b) The second case is that $\alpha(u_2) \in T_2(w_j)$. In this case, note that $\alpha(u_1) \myanceq \alpha(s) \myanceq \alpha(u_2)$ as $\alpha(s)=w_j$ is the only child node of $\alpha(u_1)$ in $\hatT_2^g$. 
We thus obtain that $s^{2\mydelta} \myanceq u_2^{2\mydelta}$ by applying property {\sf (P2)} w.r.t. the \mypartial{$\mydelta$}{} map $\alpha_j: \myF_1(S_j) \to T_2(w_j)$ to the pair of points $s$ and $u_2$.  
As $u_1^{2\mydelta}\myanceq s^{2\mydelta}$, it then follows that $u_1^{2\mydelta}\myanceq u_2^{2\mydelta}$. 

This finishes the proof that property {\sf (P2)} also holds for the newly constructed map $\alpha: \myF_1(S) \to T_2(w)$. 
Putting everything together, we have that $\alpha$ is an \mypartial{$\mydelta$}{} map. 

\item[$\Leftarrow$: ] Now suppose there is an \mypartial{$\mydelta$}{} map $\alpha: \myF_1(S) \to T_2(w)$ for a valid pair $(S,w)$. We aim to show that $F(S,w) = 1$ in this case. 
As $\alpha$ is monotonically continuous, we know $\alpha(\levelC(S)) \subseteq \levelC(w) = \{w_1, \ldots, w_k\}$. 
For each $i\in [1,k]$, set $S_i = \alpha^{-1}(w_i)$ (we set $S_i = \emptyset$ if $w_i \notin \Img(\alpha)$). 
Obviously, $S_1, \ldots, S_k$ obtained this way form a partition of $\levelC(S)$; that is, $\cup_i S_i = \levelC(S)$ and $S_i\cap S_j = \emptyset$. 
Similar to above, assume w.o.l.g. that $S_1, \ldots, S_a$ are non-empty, and $S_{a+1}, \ldots, S_k$ are empty. 
It is easy to see that the restriction of $\alpha$ to each $\alpha_j: \myF(S_j) \to T_2(w_j)$, for $j\in [1,a]$, gives rise to an \mypartial{$\mydelta$}{} map. 
Furthermore, we claim that each $(S_j, w_j)$, $j\in [1,a]$, is a valid pair. 
In particular, we need to show that $\LCA(S_j) \le h + 2\mydelta$ where $h =  h_{i-1}$ is the height ($f$-value) of nodes in $S_j$ (from \superlevel{} $\slone{i-1}$). 
 This follows from property {\sf (P2)} of map $\alpha$ as for any two $u_1, u_2 \in S_j$, 
$\alpha(u_1) = \alpha(u_2) = w_j$, meaning that $u_1^{2\mydelta} = u_2^{2\mydelta}$. Hence $\LCA(S_j)$ must be at height at most $2\mydelta$ above $f(u_1)  = h_{i-1}$. Thus $(S_j, w_j)$ is a valid pair for any $j \in [1, a]$. 
Since $(S_j, w_j)$ is valid, and it is from level ${i-1}$, it then follows from the induction hypothesis that $F(S_j, w_j) = 1$ for $j\in [1,a]$ as there is a \mypartial{$\mydelta$}{} map $\alpha_j: \myF(S_j) \to T_2(w_j)$. This establishes condition (F-1) in our algorithm \DPalg(). 

Finally, consider any $S_j = \emptyset$ (i.e, $j > a$). 
This means that $w_j \in T_2(w) \setminus \Img(\alpha)$, and thus $T_2(w_j) \subseteq T_2(w) \setminus \Img(\alpha)$. On the other hand, since there is no tree nodes of $\hatT_2^g$ between $\sltwo{i-1}$ and $\sltwo{i}$ (and thus between $w_j$ and its ancestor $w$), the lowest ancestor $w_j^F$ of $w_j$ from $\Img(\alpha)$ must be $w$. This implies that for any $y\in |T_2(w_j)|$, $y^F = w$ as well.  
It then follows that for any $y\in |T_2(w_j)|$, 
$g(w) - g(y) \le 2\mydelta$ by property {\sf (P3)} of $\alpha$. 
We thus have: 
\begin{align*}
g(w_j) - g(y) &= (g(w) - g(y)) - (g(w) - g(w_j)) \le 2\mydelta - (g(w) - g(w_j)) \\
\Rightarrow ~~~~
\newdepth(w_j) &= \max_{y\in |T_2(w_j)|} (g(w_j) - g(y)) \le 2\mydelta - (g(w) - g(w_j)) = 2\mydelta - (\widehat{h}_i - \widehat{h}_{i-1}). 
\end{align*}
This shows that condition (F-2) also holds. 
Hence $F(S, w) = 1$ as $(S,w)$ is valid. 
\end{description}
Lemma \ref{lem:partialgood} then follows from the above two directions. 

\section{Proof of Lemma \ref{lem:sizevalidS}}
\label{appendix:lem:sizevalids}

First, we need the following simple result. 

\begin{claim}\label{claim:epsball}
Given $v \in \uTone$ (not necessarily a tree node of $T_1^f$),  
Set $T_v' :=\{ x \in \uTone \mid x \mydesceq v, \text{and}~f(v) - f(x) \ge 2\delta \}$. 
Then there exists $u \in \uTone$ such that $T_v' \subseteq B_\delta(u, T_1^f)$.
This implies that the sum of degrees of all tree nodes from $T_v'$ is bounded by the $\mydelta$-\degbound{} w.r.t. $T_1^f$ and $T_2^g$. 
\end{claim}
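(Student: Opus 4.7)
The plan is to exhibit an explicit anchor $u\in\uTone$ depending on $v$ and then verify the ball containment by a short case analysis on the unique tree path from $u$ to a point of $T_v'$. It is convenient to split on how deep the subtree rooted at $v$ goes, i.e., on whether $\newdepth(v)\le\delta$ or not.

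If $\newdepth(v)\le\delta$, take $u=v$: every $x\mydesceq v$ is connected to $v$ by a monotone path whose $f$-values lie in $[f(v)-\delta,f(v)]\subseteq[f(u)-\delta,f(u)+\delta]$, so the whole subtree $T_1^f(v)$, and a fortiori $T_v'$, lies in $B_\delta(v,T_1^f)$. Otherwise some descendant of $v$ is strictly below $f(v)-\delta$; applying the intermediate value theorem to the first edge of a descending path from $v$ that crosses height $f(v)-\delta$, I pick a point $u$ with $f(u)=f(v)-\delta$ exactly. This middle choice is the key: the ball $B_\delta(u,T_1^f)$ then spans exactly the window $[f(v)-2\delta,f(v)]$, which is the tightest window that can simultaneously reach $v$ at the top and points at depth $2\delta$ below $v$.

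To verify $T_v'\subseteq B_\delta(u,T_1^f)$ in this second case, fix $x\in T_v'$ and let $a=\LCA(u,x)$ in $T_1^f$. Since $u,x\mydesceq v$ we also have $a\mydesceq v$, and the unique path $\pi(u,x)$ decomposes into a piece $\pi(u,a)$ that is monotone-increasing in $f$ followed by a piece $\pi(a,x)$ that is monotone-decreasing in $f$, so the $f$-values along $\pi(u,x)$ lie in $[\min(f(u),f(x)),\,f(a)]$. Using $f(a)\le f(v)=f(u)+\delta$ and $f(x)\ge f(v)-2\delta=f(u)-\delta$ (the latter coming from the defining inequality $f(v)-f(x)\le 2\delta$ for $x\in T_v'$), this range sits inside $[f(u)-\delta,\,f(u)+\delta]$, and hence $x\in B_\delta(u,T_1^f)$.

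The final sentence of the claim is then immediate: since $T_v'\subseteq B_\delta(u,T_1^f)$, every tree node of $T_1^f$ lying in $T_v'$ is one of the tree nodes in this particular $\delta$-ball, so their degree-sum is at most the $\delta$-degree-bound $\tau_\delta(T_1^f,T_2^g)$ by definition. The main thing to handle carefully is producing $u$ in the second case; this is exactly why I split on $\newdepth(v)$, since if all descendants of $v$ lie within $\delta$ of $v$ one cannot reach function value $f(v)-\delta$ along the tree at all, and the centered choice $u=v$ must be used instead.
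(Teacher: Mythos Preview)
Your proof is correct and takes essentially the same approach as the paper: choose $u$ as a descendant of $v$ at height $f(v)-\delta$ when possible (in the shallow case the paper takes the lowest descendant of $v$ rather than $v$ itself, but either choice works), and then check that the unique tree path from $u$ to any $x\in T_v'$ stays in the height window $[f(u)-\delta,f(u)+\delta]$. Note that you correctly read the defining condition of $T_v'$ as $f(v)-f(x)\le 2\delta$, which matches every subsequent use of $T_v'$ in the paper and the paper's own proof of the claim; the ``$\ge$'' in the displayed statement is a typo.
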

\begin{wrapfigure}{r}{0.4\textwidth}
\vspace*{-0.15in}
\centering 
\includegraphics[height=3.5cm]{./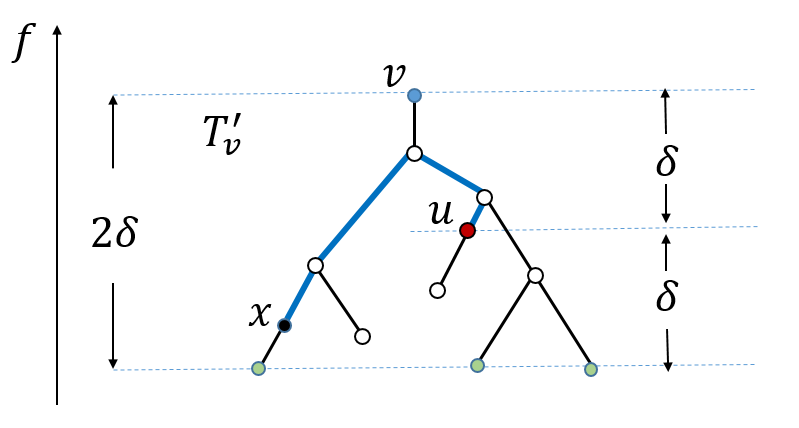}
\vspace*{-0.3in}
\caption{The thickened path is $\pi(u, x)$. \label{fig:deltaball}}
\end{wrapfigure}
\noindent\emph{Proof.~}
See Figure \ref{fig:deltaball} for an illustration: We simply choose $u$ as any descendant $u \mydesceq v$ such that $f(v) - f(u) = \mydelta$. (If there is no point at this height, just take $u$ to be the lowest descendant of $v$.)  
It is easy to see that that for each $x\in |T_v'|$, the path $\pi(u, x)$ is contained inside $T_v'$ and thus all points in this path is within $\mydelta$ height difference from $f(u)$, that is, for any $y\in \pi(u, x)$, $|f(y) - f(u)| \le \delta$. Hence $T_v'  \subseteq B_\mydelta(u; T_1^f)$. 
The bound on the sum of degrees for all tree nodes in $T_v'$ follows from the definition of $\mydelta$-\degbound{}. 
\qed{}

\vspace*{0.08in}\noindent Next, to prove Lemma \ref{lem:sizevalidS}, note that points in $S$ are augmented-tree nodes from the augmented tree $\hatT_1^f$ (not necessarily from $T_1^f$), while the \degbound{} parameter $\tau$ is defined w.r.t. the original trees $T_1^f$ and $T_2^g$. 

\begin{wrapfigure}{r}{0.4\textwidth}
\centering 
\includegraphics[height=3cm]{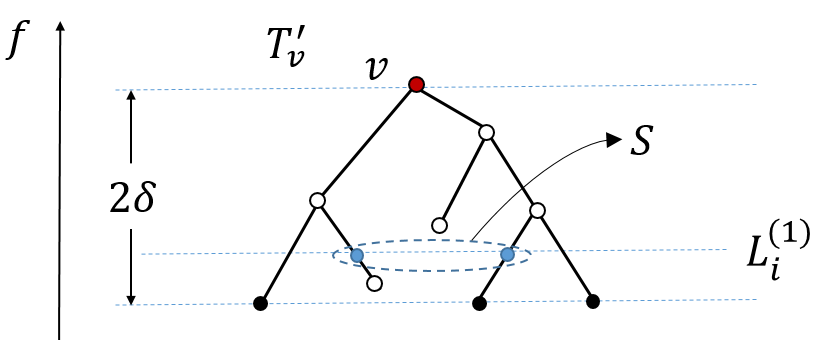}
\caption{Blue dots are $S$ from \superlevel{} $\slone{i}$.\label{fig:deltaS}}
\end{wrapfigure}
Given a valid-pair $(S, w)$, if $|S| = 1$, then the claim holds easily. So we now assume that $|S| > 1$. 
In this case, $(S, w)$ being valid means that $v = \LCA(S)$ is at most $2\mydelta$ height above nodes in $S$. 
Let $T_v' = \{ x \mydesceq v \mid f(v) - f(x) \le 2\mydelta \}$ denote the subset of the subtree $T_1^f(v)$ consisting of all descendant of $v$ in $\uTone$ whose height is within $2\mydelta$ difference to the height $f(v)$ of $v$.
See Figure \ref{fig:deltaS} for an illustration. 
Points in $S$ \emph{may not} be tree nodes from $T'_v$. 
However, as all nodes in $S$ are coming from the same height ($f$-function value), each tree arc in $T'_v$ can give rise to at most one point in $S$. Hence $|S|$ is bounded by the total number of edges in $T'_v$, which in turn is at most the sum of degrees of all tree nodes in $T'_v$. It then follows from Claim \ref{claim:epsball} that $|S| \le \myk$ as claimed.  

Next, we now bound $|\levelC(S)|$, the number of child-nodes of $S$. 
Indeed, first, suppose a point $s\in S$ is an augmented tree node of $\hatT_1^f$ but not a tree node of $T_1^f$. Then $s$ can give rise to only one child-node in $\levelC(S)$, and we can charge this node to the tree arc $s$ lies in. 
Otherwise, suppose $s$ is also a tree node in $T_1^f$. Then the number of its child-nodes is already counted when we compute the sum of degrees of all tree nodes in $T'_v$. 
Putting these two together, we have that $|\levelC(S)|$ is also bounded from above by the sum of degrees of all tree nodes within $T'_v$, which is further bounded by $\myk_\mydelta(T_1^f, T_2^g) =\myk$ by Claim \ref{claim:epsball}. 
This finishes the proof for Lemma \ref{lem:sizevalidS}. 

\section{Proof of Lemma \ref{lem:coarsevalidbound}}
\label{appendix:lem:coarsevalidbound}

In what follows we will separate valid pairs to two classes: (i) a \emph{singleton-pair} $(S, w)$ is a valid pair with $|S|=1$, or (ii) a \emph{non-singleton-pair} is a valid pair $(S, w)$ with $|S| > 1$. 

First, consider singleton-pairs, which have the form $(s, w)$ 
with $s \in V(\hatT_1^f)$ and $w\in V(\hatT_2^g)$. 
The number of augmented tree nodes in each augmented tree $\hatT_1^f$ or $\hatT_2^g$ is $O(n^2)$. 
Hence there are $O(n^2)$ choices of $w$. For each $w \in \sltwo{i}$, it can be paired with $O(n)$ potential augmented-tree nodes from the \superlevel{} $\slone{i}$ of $\hatT_1^f$. 
Therefore the total number of singleton-pairs is bounded by $O(n^3)$. 

Next we bound the number of non-singleton-pairs $(S, w)$, with $|S| > 1$, that Algorithm \DPalg() may inspect. 
Given such a set $S \subset V(\hatT_1^f)$ from the \superlevel{} $\slone{i}$, its common ancestor $v = LCA(S)$ has to be a \emph{tree node} in $V(T_1^f)$ whose height ($f$-function value) is at most $2\delta$ above points in $S$; that is, $f(v) \le h_i + 2\mydelta$ where recall that $h_i$ is the height ($f$-value) of \superlevel{} $\slone{i}$. 
As $v \in V(T_1^f)$, there are $|V(T_1^f)| \le n$ choices for $v$. 
We now count how many possible sets of $S$ a fixed choice $v\in V(T_1^f)$ can produce. 

To this end, set $T_v' = \{x \mydesceq v \mid f(v) - f(x) \le 2\mydelta\}$ as in Claim \ref{claim:epsball}, by which we know that the sum of degrees of all nodes within $T'_v$ is $\myk$. 
Hence the total number tree edges (from $T_1^f$) contained in $T_v'$ is at most $\myk$. 
On the other hand, there can be $O(n)$ number of \superlevel{}s intersecting $T_v'$. For each such \superlevel{}, the number of augmented tree nodes contained in $|T_v'|$ is bounded by the number of tree edges of $T_1^f$ in $T_v'$ and thus by $\myk$. 
It then follows that for each \superlevel{} intersecting $T_v'$, the number of potential subset $S$'s it can produce is at most $2^\myk$. 
Hence all \superlevel{}s from $T_v'$ can produce at most $O(n 2^\myk)$ number of potential sets of $S$. 
Overall, considering all $O(n)$ choices of $v$'s, there can be at most $O(n^2 2^\myk)$ potential $S$'s that the algorithm will never need to inspect.  

For each potential $S$, say from $\slone{i}$, there are $n$ choices for $w$ (as it must be an augmented-tree node from the \superlevel{} $\sltwo{i}$ of $\hatT_2^g$). 
Putting everything together, we have that there are at most $O(n^3 2^\myk)$ number of valid-pairs $(S, w)$ with $|S| > 1$, and they can also be enumerated within the same amount of time.

\section{A Faster FPT Algorithm for Deciding ``Is $d_I(T_1^f, T_2^g) \le \mydelta$''}
\label{appendix:subsec:fasteralg}

In what follows, we first bound the number of \sensiblepair{}s in Lemma \ref{lem:sensiblepairs}. We then show that Algorithm \DPalg() can be modified to consider ony \sensiblepair{}s, and achieves the claimed time complexity. 

\paragraph{\Edgelistpair{}s.}
Consider any pair $(S, w)$ with $S \subseteq \slone{i}$ and $w \in \sltwo{i}$. Suppose $S = \{s_1, \ldots, s_\ell\}$; each $s_j$ is an augmented-tree node from some tree arc, say $e_j$ of $T_1^f$. We call $A = \{e_1, \ldots, e_\ell\} \subseteq E(T_1^f)$ the \emph{edge-list supporting $S$}. Let $\alpha \in E(T_2^g)$ be the tree edge in $T_2^g$ such that $w \in \alpha$. 
We say that the \emph{\edgelistpair{}} ($A, \alpha$) \emph{supports} $(S,w)$. 
Two different pairs $(S,w)$ and $(S', w')$ could be supported by the same \edgelistpair{}. However, we claim that each \edgelistpair{} can support at most $4$ sensible-sets. 
Recall that $E(T)$ stands for the edge set of a tree $T$. 
Given an arbitrary tree edge $e = (u, u') \in E(T_1^f)$, we refer to the endpoint, say $u$, with smaller $f$-value as the \emph{lower-endpoint of $e$}, while the other one with higher $f$-value as the \emph{upper-endpoint of $e$}.
Similarly define the lower/upper-endpoints for edges in $T_2^g$. 

\begin{figure}[htbp]
\centering 
\includegraphics[height=4cm]{./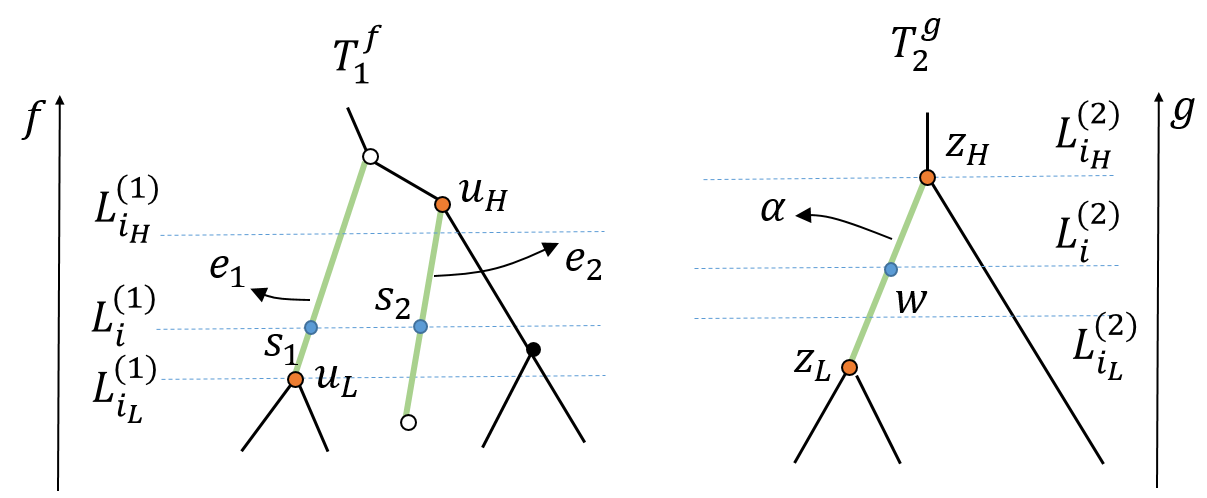}
\vspace*{-0.15in}
\caption{Green edges are $A = \{e_1, e_2\}$ and $\alpha$. In this example, $i_L$ is the index of the \superlevel{} passing through $u_L$, while $i_H$ is the index of the \superlevel{} passing through $z_H$. \label{fig:edgelist}}
\end{figure}
\begin{lemma}\label{lem:edgelist}
Each edge-list $(A, \alpha)$, with $A \subseteq E(T_1^f)$ and $\alpha \in E(T_2^g)$, can support at most $4$ \sensiblepair{}s.  
\end{lemma}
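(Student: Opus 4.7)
My plan is to show that each \edgelistpair{} $(A,\alpha)$ supports at most four \sensiblepair{}s by parameterizing the supported pairs with a single index, and then observing that ``sensibility'' can hold only near the endpoints of the feasible index range.

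First, I would observe that once $(A,\alpha)$ is fixed, any supported pair $(S,w)$ is uniquely determined by the common \superlevel{} index $i$ such that $S \subseteq V(\slone{i})$ and $w \in V(\sltwo{i})$. Indeed, each edge $e_j \in A$ meets $\slone{i}$ in at most one point (pinning down each $s_j$), and $w$ is the unique intersection of $\alpha$ with $\sltwo{i}$. Thus it suffices to count how many indices $i$ can yield a sensible pair. Next I would characterize the feasible range of $i$: writing $i_L$ for the smallest \superlevel{} index such that $h_{i_L}$ is at or above the highest lower endpoint of the edges in $A$ and $\hat h_{i_L}$ is at or above the lower endpoint $a_L$ of $\alpha$, and symmetrically defining $i_H$, the feasible indices form exactly $\{i_L, i_L+1,\dots, i_H\}$, as illustrated in Figure~\ref{fig:edgelist}.

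The crux of the proof is then to show that if $i_L+1 < i < i_H-1$, the pair $(S,w)$ cannot be sensible. At such an $i$:
\begin{itemize}
\item Every $s_j$ lies strictly in the interior of $e_j$ (since $h_i$ is strictly above every lower endpoint and strictly below every upper endpoint of the edges in $A$), so $S$ contains no tree node of $T_1^f$.
\item Each node of $\levelC(S)$ lies at level $i-1 > i_L$, still strictly above every lower endpoint of edges in $A$; because there are no tree nodes of $T_1^f$ between consecutive \superlevel{}s, each child therefore lies in the open interior of some $e_j$ and is not a tree node.
\item By the symmetric argument with $i+1 < i_H$, the parents of nodes of $S$ in $\hatT_1^f$ are also not tree nodes of $T_1^f$.
\item Applying the same reasoning to $\alpha$: the node $w$, its unique child in $\levelC(w)$, and its parent in $\hatT_2^g$ all lie in the open interior of $\alpha$, hence are not tree nodes of $T_2^g$.
\end{itemize}
Together these rule out both conditions (C-1) and (C-2) of Definition~\ref{def:sensiblepair}, so $(S,w)$ is not sensible. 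Hence the only candidates are $i \in \{i_L,\, i_L+1,\, i_H-1,\, i_H\}$, giving at most four \sensiblepair{}s per $(A,\alpha)$.

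The main subtlety I anticipate is just bookkeeping: the ``binding'' side at the boundary index $i_L$ may come either from an edge in $A$ (in which case some $s_j$ is a tree node of $T_1^f$) or from $\alpha$ (in which case $w=a_L$), and analogously for $i_L+1$ where either a node of $\levelC(S)$ or $\levelC(w)$ becomes a tree node; one must check that in all these cases (and symmetrically for $i_H-1, i_H$), at most one index of each type arises in the feasible range. Degenerate cases where some of the four distinguished indices coincide or fall outside $[i_L,i_H]$ only reduce the count, so the bound of four is preserved.
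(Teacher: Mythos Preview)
Your proposal is correct and follows essentially the same approach as the paper: define the feasible index range $[i_L,i_H]$ by taking the binding constraint among the lower (resp.\ upper) endpoints of $A$ and of $\alpha$, and then observe that for any index strictly between $i_L+1$ and $i_H-1$ none of $S$, $\levelC(S)$, the parents of $S$, $w$, $\levelC(w)$, or the parent of $w$ can be a tree node, ruling out (C-1) and (C-2). The paper's proof is terser and simply asserts this last step, whereas you spell out the four bullet checks, but the argument is the same.
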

\begin{proof}
Let $u_H$ be the lowest upper-end point of all edges in the edge-list $A = \{ e_1, \ldots, e_\ell\}$ and $u_L$ be the highest lower-endpoint of all edges in it. 
Suppose $\alpha = (z_L, z_H)$ with $g(z_L) < g(z_H)$. 
See Figure \ref{fig:edgelist} for an illustration. 
Let $i_H$ be the \emph{smaller} index of the \superlevel{}s supporting $u_H$ from $\setSL_1$ and supporting $z_H$ from $\setSL_2$, respectively. 
Similarly, let $i_L$ be the \emph{larger} index of the \superlevel{}s supporting $u_L$ from $\setSL_1$ and supporting $z_L$ from $\setSL_2$. 
Then any valid pair ($S,w$) supported by $(A,w)$ can only come from the $i$-th \superlevel{} with $i \in [i_L, i_H]$. 
Furthermore, any {\bf \sensiblepair{}} $(S,w)$ supported by $(A,w)$ must be from the $i$-th \superlevel{}s with $i \in \{ i_L, i_L+1, i_H-1, i_H\}$, as for any other choice of $i \in [i_L, i_H]$, no point from $S$, $w$, their children or parents in the augmented trees $\hatT_1^f$ and $\hatT_2^g$, can be a tree node. 
This proves the lemma. 
\end{proof}

\begin{lemma}\label{lem:sensiblepairs}
There are $O(n^2 2^\myk)$ distinct \edgelistpair{}s supporting \sensiblepair{}s. 
Hence there are $O(n^2 2^\myk)$ \sensiblepair{}s, and they can be computed in $O(n^2 2^\myk)$ time. 
\end{lemma}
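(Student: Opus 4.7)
My plan is to first bound the number of distinct \edgelistpair{}s $(A,\alpha)$ that support at least one \sensiblepair{}, and then invoke Lemma~\ref{lem:edgelist} to translate this bound into one on the \sensiblepair{}s themselves. The same enumeration procedure will produce the \sensiblepair{}s as a by-product within the claimed time.

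For the count of \edgelistpair{}s, I would split on $|A|$. When $|A|=1$, we have $O(n)$ choices for the single edge of $A \subseteq E(T_1^f)$ and $O(n)$ choices for $\alpha\in E(T_2^g)$, contributing $O(n^2)$ pairs. When $|A|\ge 2$, suppose $(A,\alpha)$ supports any valid-pair $(S,w)$ at level index $i$; then the intersections of the at least two distinct edges of $A$ with this level give $|S|\ge 2$ points lying on distinct tree arcs of $T_1^f$, so their LCA $v := \LCA(S)$ is a branching tree node and hence $v\in V(T_1^f)$. Validity forces $f(v)\le h_i + 2\mydelta$, and since every edge of $A$ descends from $v$ and crosses level $i$, each such edge intersects the region $T_v' := \{x \mydesceq v \mid f(v) - f(x) \le 2\mydelta\}$. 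By Claim~\ref{claim:epsball}, the sum of degrees of tree nodes of $T_1^f$ inside $T_v'$ is at most $\myk$, which (as in the proof of Lemma~\ref{lem:sizevalidS}) bounds the number of tree edges of $T_1^f$ meeting $T_v'$ by $\myk$. Hence $A$ ranges over at most $2^\myk$ subsets for each of the $O(n)$ choices of $v$, giving $O(n \cdot 2^\myk)$ admissible $A$'s and $O(n^2 \cdot 2^\myk)$ \edgelistpair{}s overall. Since every \sensiblepair{} is valid, the number of \edgelistpair{}s supporting \sensiblepair{}s is $O(n^2 \cdot 2^\myk)$, and by Lemma~\ref{lem:edgelist} the number of \sensiblepair{}s is at most $4$ times this, i.e., $O(n^2 \cdot 2^\myk)$.

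For the enumeration, I would iterate over each $v\in V(T_1^f)$, compute $T_v'$ and its $O(\myk)$ edges by a bounded walk from $v$, enumerate all $2^\myk$ subsets $A \subseteq E(T_v')$, and loop over $\alpha \in E(T_2^g)$. Given $(A,\alpha)$, the analysis of Lemma~\ref{lem:edgelist} pinpoints the $\le 4$ candidate level indices $\{i_L, i_L+1, i_H-1, i_H\}$, and for each we extract the candidate \sensiblepair{} by intersecting the edges of $A$ and $\alpha$ with that level in $O(\myk)$ time. To avoid over-enumeration I would canonicalize each $A$ by associating it with $v$ taken to be the LCA (in $T_1^f$) of the upper endpoints of the edges in $A$, so each $A$ is generated exactly once. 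Routine preprocessing (level indices for tree-node heights, LCA queries on $T_1^f$) keeps the per-pair work at $O(\myk)$, yielding total time $O(n^2 \cdot 2^\myk)$ up to the $O(\myk)$ factor already absorbed in the standard asymptotic form. I expect the main obstacle to be cleanly justifying the $|A|\ge 2$ step that $v$ is always a tree node of $T_1^f$ -- which rests on the branching argument above -- together with ensuring the bookkeeping for each \edgelistpair{} remains $O(\myk)$ rather than $O(n)$.
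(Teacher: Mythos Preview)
Your argument is correct and, in the counting part, arguably cleaner than the paper's. The paper proceeds by first treating singleton pairs (as you do), and then splitting non-singleton \sensiblepair{}s into \emph{type-1} (satisfying condition (C-1)) and \emph{type-2} (satisfying (C-2)); it bounds type-1 pairs by counting ``sensible-sets'' $S$ (via the same LCA-and-$T_v'$ idea you use) and then multiplying by $O(n)$ choices of $w$, and bounds type-2 pairs by first noting there are only $O(n)$ admissible $w$'s and then bounding the edge-lists for each. You instead bound, in one stroke, the number of \edgelistpair{}s $(A,\alpha)$ that support \emph{any} valid pair with $|A|\ge 2$: since $|S|=|A|\ge 2$ forces $v=\LCA(S)$ to be a branching (hence genuine) tree node of $T_1^f$, and validity puts $A$ inside the $\le\myk$ edges meeting $T_v'$, you get $O(n\,2^{\myk})$ lists $A$ and $O(n^2 2^{\myk})$ pairs, then apply Lemma~\ref{lem:edgelist} once. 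This avoids the (C-1)/(C-2) case split entirely and uses only validity, which is a nice simplification; the paper's split does not buy anything extra for the bound itself.

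Two small remarks. First, your canonicalization ``associate $A$ with $v=\LCA(\text{upper endpoints of }A)$'' is sound because for $|A|\ge 2$ the edges in $A$ meet a common super-level and hence are pairwise incomparable, so this LCA coincides with $\LCA(S)$ for any $S$ supported by $A$; you should state this explicitly. Second, your per-pair work of $O(\myk)$ gives total time $O(n^2 2^{\myk}\myk)$, not the $O(n^2 2^{\myk})$ stated in the lemma; waving this off as ``absorbed'' is fine for the downstream Theorem~\ref{thm:fasteralg} (which carries a $\myk^{\myk+2}$ factor anyway), but it does not literally match the lemma's claimed running time, so either tighten the bookkeeping or note the discrepancy.
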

\begin{proof}
First, it is easy to see that there are $O(n^2)$ singleton \sensiblepair{}s. 
Indeed, each singleton pair $(\{s\},w)$ is necessarily supported by a singleton \edgelistpair{} of the form $(e, \alpha)$ with $s\in e$ and $w\in \alpha$. 
By Lemma \ref{lem:edgelist}, each \edgelistpair{} can support at most $4$ \sensiblepair{}s (thus at most $4$ singleton \sensiblepair{}s). Since there are $O(n^2)$ number of singleton \edgelistpair{}s, it then follows that there can be $O(n^2)$ singleton \sensiblepair{}s. 

We now focus on non-singleton \sensiblepair{}s. 
To this end, we will distinguish two types of \sensiblepair{}s and bound them separately. 
A \sensiblepair{} $(S,w)$ is \emph{type-1} if condition (C-1) in Definition \ref{def:sensiblepair} holds; and \emph{type-2} if condition (C-2) holds. 

First, we bound type-1 \sensiblepair{}s. We say that a set $S$ from some \superlevel{} $\slone{i}$ is a \emph{sensible-set} if $S$ satisfies condition (C-1). (In other words, if $(S,w)$ is type-1 \sensiblepair{}, then $S$ must be a sensible-set.) 
We will now bound the number of sensible-sets. 
Given a non-singleton type-1 \sensiblepair{} $(S, w)$ (thus $|S| > 1$),  similar to the argument in the proof of Lemma \ref{lem:coarsevalidbound}, the lowest common ancestor $v = LCA(S)$ of augmented-tree nodes in $S$ must be a \emph{tree node} from $V(T_1^f)$ less than $2\mydelta$ heighb above $S$. 
Set $T_v' = \{x \mydesceq v \mid f(v) - f(x) \le 2\mydelta\}$: we know that $S$ is contained within $T_v'$. 
By Claim \ref{claim:epsball}, both the number of tree nodes and the number of tree arcs within $T_v'$ are bounded from above by $\myk$. 
Since there are only $O(\myk)$ number of tree-arcs of $T_1^f$ contained in $T_v'$, $T_v'$ can produce at most $2^\myk$ distinct edge-lists. 
On the other hand, a similar argument as the one used for Lemma \ref{lem:edgelist} can also show that each edge-list $A \subset E(T_1^f)$ can only support at most $4$ sensible-sets. 
Hence there are at most $O(2^\myk)$ sensible-sets possible from $T_v'$. 
Ranging over all $O(n)$ choices of $v$'s, there can be $O(n 2^\myk)$ possible sensible-sets. 

For each sensible-set $S$, there can be at most $n$ choices of $w$ forming a potential \sensiblepair{} $(S, w)$ with it (where $w$ has to be an augmented tree-nodes from a specific \superlevel{} in $\hatT_2^g$). 
This leads to $O(n^2 2^\myk)$ type-1 \sensiblepair{}s in total, and they can be computed within the same time complexity.

Next, we bound the type-2 \sensiblepair{}s ($S,w$). 
Note that there are only $O(n)$ choices of $w$ participating type-2 \sensiblepair{}s (as each tree edge in $T_2^g$ can give rise to at most $4$ choices of $w$'s satisfying condition (C-2) in Definition \ref{def:sensiblepair}). 
For each fixed choice $w \in \alpha_w$, by an argument similar to the one used for the type-1 case, we can argue that there are only $O(n 2^\myk)$ number of edge-lists from $E(T_1^f)$ supporting some \sensiblepair{} of the form $(S, w)$. 
Ranging over $O(n)$ choices for $w$, this gives rise to $O(n^2 2^\myk)$ total \edgelistpair{}s supporting at most $O(n^2 2^\myk)$ type-2 \sensiblepair{}s. 
This finishes the proof of the claim. 
\end{proof}

We now modify Algorithm \DPalg($T_1^f, T_2^g, \mydelta$) so that it will only compute feasibility $F(S, w)$ for \sensiblepair{} $(S,w)$'s. 
\begin{description}\denselist
\item[Description of \modifyDPalg($T_1^f, T_2^g, \mydelta$)] 
\item[] (1) Compute all \edgelistpair{}s $\Xi$ that support sensible-sets. For each edge-list $(A, \alpha) \in \Xi$, associate to it all (at most 4) sensible-sets that $(A, \alpha)$ supports. 
 From $\Xi$, compute all \sensiblepair{} and associate, to each \superlevel{} $\slone{i}$, the collection of \sensiblepair{}s $(S,w)$ such that $S \subseteq \slone{i}$. 
\item[] (2) Compute the feasibility $F(S, w)$ in a bottom-up manner (i.e, in increasing order of their \superlevel{} indices) only for \sensiblepair{}s $(S,w)$. Consider a \sensiblepair{} ($S,w)$ from the $i$-th \superlevel{}s, that is, $S \subseteq \slone{i}$ and $w \in \sltwo{i}$. 
\begin{description}\denselist
\item[Base case $i=1$:]
All valid-pairs from the first \superlevel{} are \sensiblepair{}s. The computation of $F(S,w)$ is the same as in the Base-case of Algorithm \DPalg().
\item[When $i > 1$:] We have already computed the feasibility for all \sensiblepair{}s from level ($i-1$) or lower. 
Now given a valid pair $(S,w)$ from level-$i$, we will use a similar strategy as in Algorithm \DPalg(). 
Specifically, let $\levelC(S) \subseteq \slone{i-1}$ denote the set of children of $S$ from \superlevel{} $\slone{i-1}$, and $\levelC(w) = \{w_1, \ldots, w_k\} \subseteq \sltwo{i-1}$ the set of children of augmented-tree node $w$. 

If $\levelC(w)$ is empty, the $F(S,w)=1$ if and only if $\levelC(S) = \emptyset$. 

If $\levelC(w)$ is not empty, then we check whether there exists a partition of $\levelC(S) = S_1\cup \cdots \cup S_k$ such that conditions (F-1) and (F-2) in Algorithm \DPalg() hold. 

However, we need to modify condition (F-1), as it is possible that $(S_j, w_j)$ is valid but not a \sensiblepair{} and thus $F(S_j, w_j)$ has not yet being computed.

More precisely, we do the following: if $S_j$ is not valid, then obviously $F(S_j, w_j) = 0$. 
Otherwise, $(S_j, w_j)$ is a valid-pair. If it is also a \sensiblepair{}, then $F(S_j, w_j)$ is already computed. 

The remaining case is that $(S_j, w_j)$ is valid but not a \sensiblepair{}. 
Let $A$ be the edge-list supporting $S_j$, and $\alpha\in E(T_2^g)$ the edge containing $w_j$. 
Let $(S', w')$ be the highest \sensiblepair{} supported by $(A, \alpha)$ but from a \superlevel{} below that of $(S_j, w_j)$. 
If such a ($S', w'$) does not exist, then we set $F(S_j, w_j) = 0$. 
Otherwise, set $F(S_j, w_j) = F(S', w')$. 

The {\sf (modified F-1)} is: If $S_j \neq 0$, the $F(S_j, w_j)$ as setup above should equal to $1$. 
\end{description}
\item[Output:] The algorithm returns ``yes" if and only if $F(root(\hatT_1^f), root(\hatT_2^g)) = 1$. 
\end{description}

\paragraph{Proof of Part (i) of Theorem \ref{thm:fasteralg}.} 
We now show that Algorithm \modifyDPalg($T_1^f, T_2^g, \mydelta$) returns ``yes'' if and only if $d_I(T_1^f, T_2^g) \le \mydelta$. 

First, note that $(root(\hatT_1^f), root(\hatT_2^g))$ must be a \sensiblepair{}, and thus will be computed by our algorithm. 
We now show that for any \sensiblepair{} ($S,w$), the feasibility $F(S,w)$ computed by \modifyDPalg() is the same as that computed by \DPalg(). 
To differentiate the two feasibility values, we use $\Fnew(S,w)$ and $\Fold(S, w)$ to denote the feasibility computed by \modifyDPalg() and by \DPalg(), respectively. 

We prove this by induction w.r.t. \sensiblepair{}s from \superlevel{}s of increasing indices. 
At the base case when the index of \superlevel{} $i=1$, it is easy to check that a valid pair has to be sensible -- in fact, $S$ and $w$ are either empty or contains only tree-nodes from $T_1^f$ and $T_2^g$. 

Now consider a \sensiblepair{} $(S, w)$ from the $i$-th \superlevel{}. 
Let $\levelC(S)$ and $\levelC(w)$ be as defined in \modifyDPalg() (which is the same as in \DPalg()). 
For any partition $\levelC(S) = S_1 \cup \cdots S_k$ of $S$, we only need to show that condition (F-1) holds if and only if condition (modified F-1) holds. Furthermore, the only case we need to consider is when $S_j$ is valid but the pair $(S_j, w_j)$ is not sensible. 
Note that in this case, $\Fnew(S_j, w_j)$ could be set to $\Fnew(S', w')$ as described in \modifyDPalg(). 
As in the algorithm, let $A$ denote the edge-list supporting $S_j$, and $\alpha = (z_L, z_H)$, with $g(z_L) \le g(z_H)$, is the edge from $T_2^g$ containing $w_j$. 
We now prove the following two claims:  

\vspace*{0.15in}\emph{(Claim-1) First we show that if $\Fnew(S_j, w_j) = 1$; then it must be that $\Fold(S_j, w_j) = 1$.}
In this case, $(S', w')$ must exist when running Algorithm \modifyDPalg(), and it is the highest \sensiblepair{} supported by \edgelistpair{} $(A, \alpha)$ but below $S_j$; with $\Fnew(S', w') = 1$. 
We claim that all augmented-tree nodes in $S'$ and augmented-tree node $w'$ must all be of degree-2 (i.e, they are all in the interior of some tree arcs of the original trees $T_1^f$ and $T_2^g$, and none of them is a tree node for $T_1^f$ or $T_2$). 

Indeed, suppose this is not the case and some augmented-tree nodes in $S'$ or $w'$ is in fact a tree node. 
Then, as $(S', w')$ is supported by $(A, \alpha)$, by the proof of Lemma \ref{lem:edgelist}, the only possibility is that $w' = z_L$ or $S'$ contains $u_L$, where $u_L$ is the {\bf highest} lower-endpoint of all edges in $A$.
Suppose $S'$ is from \superlevel{} $\slone{c}$. 
Then consider its parents $S''$ from \superlevel{} $\slone{c+1}$ as well as the parent $w''$ of $w'$ from $\sltwo{c+1}$. 
Since $(S',w')$ (coming from $c$-th \superlevel) is below $(S_j, w_j)$ which comes from the ($i-1$)-th \superlevel{}, we have that $c+1 \le i-1$. 
It then follows that $S''$ must be supported by the same edge-list $A$, and similarly, $w' \prec w'' \preceq w_j$ along edge $\alpha = (z_L, z_H)$. 
As $S'$ is valid, $S''$ must be valid. As $S'$ and $w'$ contain a tree node of $T_1^f$ or $T_2^g$, $(S'', w'')$ must be sensible (satisfying either condition (C-1) or (C-2) in Definition \ref{def:sensiblepair}). This however contradicts that $(S', w')$ is the highest \sensiblepair{} below $(S_j, w_j)$ supported by \edgelistpair{} $(A, \alpha)$. 
Hence the assumption is wrong and all points in $S'$ and $w'$ must be in the interior of some tree arcs. 

Since $(S', w')$ is a \sensiblepair{} and is from a lower \superlevel{} than $(S, w)$, by the induction hypothesis, we already have that $\Fold(S', w') = \Fnew(S', w')$, meaning that $\Fold(S', w') = 1$. 
Now during the execution of Algorithm \DPalg(), it will inspect a sequence of {\bf valid pairs} 
$$(S', w') = (S_j^{(0)}, w_j^{(0)}), (S_j^{(1)}, w_j^{(1)}), \ldots, (S_j^{(t)}, w_j^{(t)}) = (S_j, w_j), $$ 
listed in increasing heights, supported by \edgelistpair{} $(A, \alpha)$. 
As $S'$ is valid, and all $S_j^{(b)}$'s, for $b\in [1, t]$, are supported on the same edge-list, each $(S_j^{(b)}, w_j^{(b)})$ must be valid as well. Furthermore, as $w_j^{(b)}$ is the only child of $w_j^{(b+1)}$ for any $b\in [0, t-1]$ (they are both contained in the interior of edge $\alpha$), 
it then follows that during the execution of Algorithm \DPalg(), $\Fold(S_j^{(b+1)}, w_j^{(b+1)})$ is set to be $\Fold(S_j^{(b)}, w_j^{(b)})$ for each $b\in [0, t-1]$. 
Hence 
$$(\Fold(S_j, w_j) =)~ \Fold(S_j^{(t)}, w_j^{(t)}) = \cdots = \Fold(S_j^{(1)}, w_j^{(1)}) = \Fold(S_j^{(0)}, w_j^{(0)}) ~(= \Fold(S', w')).$$
Since $\Fold(S', w') = 1$, it thus follows that $\Fold(S_j, w_j) = 1$, establising (Claim-1).

\vspace*{0.15in}\emph{(Claim-2) Next we show that if $\Fold(S_j, w_j)=1$, then $\Fnew(S_j,w_j) = 1$.} 
In this case, let $(S_j^{(0)}, w_j^{(0)}), (S_j^{(1)}, w_j^{(1)}), \ldots, (S_j^{(t)}, w_j^{(t)}) = (S_j, w_j)$ denote the sequence of all pairs (not necessarily valid) which are: (i) supported by the \edgelistpair{} $(A,\alpha)$, (ii) at or below $(S_j, w_j)$; and (iii) strictly higher than the \superlevel{}s containing $u_L$ and $z_L$. We also assume that this sequence is listed in increasing heights. 

Assume that the higher \superlevel{} containing either $u_L$ or $z_L$ is the $c$-th \superlevel{}. 
Then (iii) above implies that (1) all (augmented-tree) nodes in $(S_j^{(b)}, w_j^{(b)})$, for all $b\in [0,t]$, are degree-2 nodes (i.e, not tree nodes); (2) $(S_j^{(0)}, w_j^{(0)})$ must come from the ($c+1$)-th \superlevel{}s $\slone{c+1}$ and $\sltwo{c+1}$; and (3) no $(S_j^{(b)}, w_j^{(b)})$, for $b\in [1, t-1]$, can be sensible. 
Note that (2) implies that $(S_j^{(0)}, w_j^{(0)})$ satisfies either condition (C-1) or (C-2) of Definition \ref{def:sensiblepair} (although we have not yet shown it is valid, and thus we do not know whether it is sensible or not yet). 

We argue that if $\Fold(S_j,w_j) = 1$, then $\Fold(S_j^{(b)}, w_j^{(b)}) = 1$ for every $b \in [0, t)$. This is because in Algorithm \DPalg(), $\Fold(S_j^{(b)}, w_j^{(b)})$ is set to be $\Fold(S_j^{(b-1)}, w_j^{(b-1)})$ for each $b\in [1,t]$, as $w_j^{(b-1)}$ is the only child of $w_j^{(b)}$ for $b\in [1,t]$. 
It then follows that $\Fold(S_j^{(0)}, w_j^{(0)}) = 1$, meaning $(S_j^{(0)}, w_j^{(0)})$ must be a valid pair. 
Combined with the fact that $(S_j^{(0)}, w_j^{(0)})$ satisfies either condition (C-1) or (C-2) of Definition \ref{def:sensiblepair} as shown in the previous paragraph, we have that $(S_j^{(0)}, w_j^{(0)})$ is a \sensiblepair{}. Furthermore, by (3) from the previous paragraph, $(S_j^{(0)}, w_j^{(0)})$ has to be the highest \sensiblepair{} supported by \edgelistpair{} $(A, \alpha)$. 
Hence Alorithm \modifyDPalg() will set $(S',w')=(S_j^{(0)}, w_j^{(0)})$. By our induction hypothesis, we have $\Fnew(S',w') = \Fold(S', w')$, implying that the modified algorithm will set $\Fnew(S_j,w_j) = \Fnew(S',w') = \Fold(S',w') = 1$. This proves (Claim-2). 

Combining (Claim-1) and (Claim-2) above, we have that for any \sensiblepair{} $(S,w)$, $\Fnew(S,w) = \Fold(S,w)$. It then follows that Algorithm \modifyDPalg() reutnrs the same answer as Algorithm \DPalg(), which, combined with part (i) of Theorem \ref{thm:DPalg}, establishes the correctness of Algorithm \modifyDPalg(). 

\paragraph{Proof of Part (ii) of Theorem \ref{thm:fasteralg}}
We now show that Algorithm \modifyDPalg() can be implemented to run in the claimed time complexity. 

First, Lemma \ref{lem:sensiblepairs} shows that we can compute all \sensiblepair{}s, as well as the set of \edgelistpair{}s $\Xi$ supporting them, in time $O(n^2 2^\myk)$ time. 
We store all \sensiblepair{}s in increasing order of their \superlevel{}s, and process it one by one. 
For each edge-list pair $(A, \alpha)$ with $A \subseteq E(T_1^f)$ and $\alpha \in E(T_2^g)$, we link to it the four \sensiblepair{}s it supports. 
For each \sensiblepair{}, it also stores a pointer, linking to the edge-list pair supporting it. 

In step (2) when computing $F(S,w)$, we need to be able to search whether a pair $(S_j, w_j)$, with $S_j \subseteq \levelC(S)$ and $w_j \in \levelC(w)$, is sensible or not, and identify $(S',w')$ through the \edgelistpair{} supporting ($S_j, w_j$) when  necessary. 
This can be done by storing all edge-list pairs in $\Xi$ in some data structure that supports search efficiently. 
To this end, we view each edge-list pair $(A,\alpha)$ as a set of ordered edge indices [$id_1, \ldots, id_t$], where $id_1, \ldots, id_{t-1}$ are indices for edges in  $A$ from $T_1^f$ in increasing order, while $id_t$ is index for edge $\alpha$ in $T_2^g$. 
Given the collection of all edge-list pairs $\Xi$ that support some \sensiblepair{}s, we first use a standard balanced binary search tree $\Pi$ to store all indices occured in the first position of the ordered index-sets representation of \edgelistpair{} in $\Xi$.  
Next, for each index, say $id_1$ stored in this tree $\Pi$, we then associate to it a secondary data structure (another balanced binary search tree $\Pi_{id_1}$) to store the set of second indices from \edgelistpair{}s of the form $[id_1, \ldots]$. We then repeat, and build a third-level tree for each index in the secondary tree $\Pi_{id_1}$, say $id_2$, to store all indices in the 3rd position from \edgelistpair{}s of the form $[id_1, id_2, \ldots]$.  
Let $\ell$ be the largest cardinality of any \edgelistpair{}. Then this process is repeated at most $\ell$ times to build auxiliary trees of at most $\ell$ levels. Each auxiliary tree has size at most $n$ as the choice of each position is bounded by the total number of edges in each input tree. 

By Lemma \ref{lem:sizevalidS}, $|S| \le \myk$ for any valid pair ($S,w$). Hence $\ell \le \myk+1$. 
This data structure takes $O(\ell \cdot |\Xi|) = O(n^2 \myk 2^\myk)$ space, since each index of every \edgelistpair{} will only be stored at most once. 
The data structure can be built in $O(n^2 \myk 2^\myk \log n)$ time. 
Finally, to search for a specific \edgelistpair{} (represented by the ordered-index set) $[id_1, \ldots, id_t]$, we need to perform $t$ searches in $t$ balanced binary search trees each of which is of size at most $n$. 
Hence a search for an \edgelistpair{} takes $O(t \log n) = O(\myk \log n)$ time. 

Finally, we analyze the time complexity to compute $F(S,w)$ for a fixed \sensiblepair{} $(S,w)$ in step (2) of Algorithm \modifyDPalg().
First, observe that $k=|\levelC(w)| = 
degree(w) \le \myk$, and $|\levelC(S)| \le \myk$ by Lemma \ref{lem:sizevalidS}. Hence  the number of partitioning of $\levelC(S)$ our 
algorithm will check is bounded by 
$O(|\levelC(S)|^{k}) = O(\myk^\myk)$. 
For each partition $S=S_1 \cup \cdots S_k$, and for each $j\in [1, k]$, 
it takes $O(\myk \log n)$ time to search for $(S', w')$ as required in (modified F-1). Hence overall, it takes $O(k \myk \log n) = O(\myk^2 \log n)$ to check conditions (modified F-1) or (F-2) for all $j\in [1,k]$. 
Putting everything 
together, we have that the time complexity of 
our algorithm is bounded from above by 
$O({n^2}{2^{\myk}}\myk^{\myk+2} \log n)$, as claimed. 
This finishes the proof of Part (ii) of Theorem \ref{thm:fasteralg}. 

\section{Proof of Lemma \ref{lem:criSet}}
\label{appendix:lem:criSet}

Imagine we run the dynamic programming algorithm \DPalg($T_1^f, T_2^g, \Intdopt$) as introduced in the previous section. It should return `yes'. 
In particular, consider the augmented trees $\hatT_1^f$ and $\hatT_2^g$ we construct w.r.t parameter $\Intdopt$ as in Section \ref{subsec:FPT}, where $\hatT_1^f$ (resp. $\hatT_2^g$) is augmented from $T_1^f$ (resp. $T_2^g$) by including all points in the \superlevel{}s also as tree nodes. 
Algorithm \DPalg($T_1^f, T_2^g, \Intdopt$) will compute the feasibility $F(S, w)$ for all valid pairs $(S,w)$ in a bottom-up manner, where $S$ and $w$ are from those \superlevel{}s $\slone{i}$ and $\sltwo{i}$ respectively. In the end, it will return $F(root(\hatT_1^f), root(\hatT_2^g))$, which should be equal to `1' in this case. 

Now imagine that we will decrease $\Intdopt$ by an infinitesimally small quantity $\nu>0$ to $
\delta' = \Intdopt - \nu$. 
Algorithm \DPalg($T_1^f, T_2^g, \delta'$) should return `no' since $\Intdopt$ is the smallest possible value on which the dynamic programming algorithm returns `yes'. 
Recall that the \superlevel{}s w.r.t. a generic $\mydelta$ value are defined as follows: 
\begin{align*}
\setSL_{1} &:= \{ \myL(c) \mid c \in \myC_1 \} ~\cup ~\{ \myL(c-\mydelta) \mid c \in \myC_2 \} ~~~\text{and} \\
\setSL_{2} &:= \{ \myL(c+\mydelta) \mid c \in \myC_1 \} ~\cup~ \{ \myL(c) \mid c \in \myC_2 \},  
\end{align*}
where $\myC_1$ (resp. $\myC_2$) contains the $f$-function values of tree nodes in $T_1^f$ (resp. $g$-values of tree nodes in $T_2^g$). 
Half of these \superlevel{}s are independent of the value of $\mydelta$; these are levels in $\{\myL(c) \mid c\in \myC_1\} \subset \setSL_{1}$ for $T_1^f$ and $\{ \myL(c) \mid c \in \myC_2 \} \subset \setSL_{2}$ for $T_2^t$, which pass through tree nodes in $T_1^f$ and $T_2^g$, respectively. We call these \superlevel{}s as \emph{fixed \superlevel{}s}. 
The other half are \emph{induced \superlevel{}s} and they are at height either $\mydelta$ above or below some fixed \superlevel{}s. 

Now consider the \superlevel{}s w.r.t. $\Intdopt$. 
If any induced \superlevel{}, say $\slone{i} \in \setSL_1$, also passes through a tree node, say $v$, in $T_1^f$. 
This means that $\Intdopt \in \criSet_1$: Indeed, since $\slone{i}$ is induced, it is of the form 
$\slone{i} = L(c-\Intdopt)$ with $c = g(w) \in \myC_2$ for some $w \in V(T_2^g)$. Then $\Intdopt = g(w) - f(v)$ with $v\in V(T_1^f)$ and $w\in V(T_2^g)$. Thus $\Intdopt \in \criSet_1$ and the lemma holds. 

Hence from now on we assume that no induced \superlevel{} w.r.t $\Intdopt$ passes through any tree node in $T_1^f$ nor in $T_2^g$. 
As we decrease the value of $\Intdopt$ slightly to $\delta'$, the induced \superlevel{}s move (up or down accordingly) by the same amount $\nu = \Intdopt - \delta'$. 
Since no induced \superlevel{} contains any tree node in $T_1^f$ or $T_2^g$, for any induced \superlevel{} $L$, each point in it only moves up or down in the interior of some tree edge of $T_1^f$ or $T_2^g$, and no new point can appear in a \superlevel. Hence there is a canonical bijection between the tree nodes $V(\hatT_1^f(\Intdopt))$ (resp. $V(\hatT_2^g(\Intdopt))$ ) in the augmented tree $\hatT_1^f(\Intdopt)$ w.r.t. $\Intdopt$, 
and $V(\hatT_1^f(\delta'))$ (resp. $V(\hatT_2^g(\delta'))$) w.r.t. $\delta'$. 
(This bijection also induces a bijection of points (augmented tree nodes) within the $i$-th \superlevel{} w.r.t $\Intdopt$ and points in corresponding $i$-th \superlevel{} w.r.t $\delta'$.) 
Under this bijection, we use $S(\delta')$ to denote the set of tree nodes in $\hatT_1^f(\delta')$ (resp. in $\hatT_2^g(\delta')$) corresponding to $S \subset V(\hatT_1^f(\Intdopt))$ (resp. $S\subset V(\hatT_2^g(\Intdopt))$). 

Since $F(root(\hatT_1^f(\delta')), root(\hatT_2^g(\delta'))) = `no'$, it means that during the bottom-up process of algorithm \DPalg($T_1^f,T_2^g, \delta'$), 
at some point  $F_{\Intdopt}(S,w)=1$ (w.r.t. parameter $\Intdopt$), yet $F_{\delta'}(S(\delta'),w(\delta')) = 0$ (w.r.t. parameter $\delta'$). 
Let $(S, w)$ be such a pair from the lowest \superlevel{} that this happens.  

This could be caused by the following reasons: 
\begin{itemize}\denselist
\item[(1)] $(S(\delta'), w(\delta'))$ is no longer a valid pair. This means that $u = LCA(S)$ (which is necessarily a tree node in $V(T_1^f)$) must be at height $h + 2\Intdopt$, where $h = f(S)$. 
If $S$ is from a fixed \superlevel, then there exists a tree node $v\in V(T_1^f)$ such that $f(v) = h$. It then follows that $2\Intdopt = f(u) - f(v)$, implying that $\Intdopt \in \criSet_2$. 
Otherwise, $S$ is from an induced \superlevel{} $\slone{i}$. This means that there exists a tree node $w'\in V(T_2^g)$ such that $h = g(w') - \Intdopt$. It then follows that $f(u) - g(w') = h+2\Intdopt - (h+\Intdopt) = \Intdopt$, implying that $\Intdopt \in \criSet_1$. 
\item[(2)]  $(S(\delta'), w(\delta'))$ is still valid, $F_\Intdopt(S, w)  = 1$ but $F_{\delta'}(S(\delta'),w(\delta')) = 0$. 
There are two cases how $F_{\delta'}(S(\delta'),w(\delta')) = 0$ in algorithm \DPalg($T_1^f,T_2^g, \delta'$).  
\begin{itemize}\denselist
\item[(2.a)]
$\levelC(S(\delta')) \neq \emptyset$ but $\levelC(w(\delta')) = \emptyset$. This cannot happen as by the bijective relation between augmented tree nodes $V(\hatT_1^f(\delta'))$ and $V(\hatT_2^g(\delta'))$, this would have implied that $\levelC(S(\Intdopt)) \neq \emptyset$ but $\levelC(w(\Intdopt)) = \emptyset$. That is, $F(S, w) = 0$ in algorithm \DPalg($T_1^f,T_2^g, \Intdopt$) as well, which contradicts the assumption on $(S, w)$.  
\item[(2.b)]
Otherwise,  since $(S, w)$ is from the lowest level s.t. $F_{\Intdopt}(S,w)=1$ (w.r.t. parameter $\Intdopt$), yet $F_{\delta'}(S(\delta'),w(\delta')) = 0$ (w.r.t. parameter $\delta'$), condition (F-2) must have failed when we process $(S(\delta'), w(\delta')$ in algorithm \DPalg($T_1^f,T_2^g, \delta'$). In other words, there exists a child $w_j$ of $w$ such that $\mydepth(w_j) = 2\Intdopt - (g(w) - g(w_j))$, as condition (F-2) will be violated as we decrease $\Intdopt$ to $\delta'$. Thus there must be a leaf node $\hat{w}$ in the subtree rooted at $w_j$ so that $g(w_j) - g(\hat{w}) = \mydepth(w_j)$, and thus $g(w) - g(\hat{w}) = 2\Intdopt$. Assume that $S$ is from $\slone{i}$ (and thus $w$ is from $\sltwo{i}$). 
If \superlevel{} $\sltwo{i}$ contains a tree node of $T_2^g$, then this implies that $\Intdopt \in \criSet_3$. 
If not, then the \superlevel{} $\sltwo{i}$ must be an induced \superlevel{}, meaning that there is a tree node $u \in V(T_1^f)$ such that $g(w) = \hat{h}_i = f(u) + \Intdopt$. 
It then follows that $f(u) - g(\hat{w}) = \Intdopt$ and thus $\Intdopt \in \criSet_1$. 
\end{itemize}
\end{itemize}
In all cases, we note that $\Intdopt \in \criSet$. 
Lemma \ref{lem:criSet} then follows.

\section{Proof of Theorem \ref{thm:optinterleaving-fast}}
\label{appendix:thm:optinterleaving-fast}

First, we compute and sort all candidate values in the set $\criSet$ as in Section \ref{subsec:int-optimizaton}. 
Set $m = |\criSet|$ and let $\Delta = \{\delta_1 < \delta_2 < \cdots \delta_m \}$ be the ordered sequence of candidate values in $\criSet$. 

On the other hand, observe that the degree-bound parameter $\tau_\delta(T_1^f, T_2^g)$ can take at most $O(n)$ integer values: $1, 2, 3, \ldots, 2n$, where $n$ is the total number of nodes in the input trees. 
Set $\tau_k = k$, with $k \in \{1, 2, \ldots, 2n\}$. 
It is easy to see that the set of $\delta \in \Delta$ such that $\tau_\delta = \tau_k$ forms a consecutive subsequence of $\Delta$, which we denote by $[\delta_{\ell_k}, \ldots, \delta_{r_k}]$. That is, for any $j \in [\ell_k, r_k]$, $\tau_{\delta_j} = \tau_k$. 

For a fixed $\tau$, for simplicity, we set $Time(n, \tau):= n^2 2^\tau \tau^{\tau+2}\log n$; by Theorem \ref{thm:fasteralg}, $O(Time(n, \tau))$ is the time to solve the decision problem ``Is $d_I(T_1^f, T_2^g) \le \delta$ where $\delta$ is any value such that its $\delta$-degree-bound $\tau_\delta$ equal to $\tau$; that is, $\tau_\delta = \tau$. 

Below, we first show that for any $\tau_k$, we can identify the smallest $\delta \in \Delta$ \emph{valid for $\tau_k$} defined as (i) $\tau_\delta = \tau_k$, and (ii) $d_I(T_1^f, T_2^g) \le \delta$; or returns ``Null" if such a $\delta$ does not exist. 
We call this procedure {\sf SmallestValidDelta}($\tau_k$). 

\paragraph{Procedure {\sf SmallestValidDelta}($\tau_k$).}
First, we run the decision problem for $d_I(T_1^f, T_2^g) \le \delta_{r_k}$. If the answer is `no', then no $\delta$ can be valid for $\tau_k$, and we return ``Null". 
Otherwise, we perform a binary search within the range $[\ell_k, r_k]$ to search for the smallest index $i \in [\ell_k, r_k]$ such that $\delta_i$ is valid for $\tau_k$, and return $\delta_i$. 
Overall, we execute the decision problem $O(\log |r_k - \ell_k+1|) = O(\log n)$ times. For each of the $\delta$ value tested during this course, we have $\tau_\delta = \tau_k$. Hence by Theorem \ref{thm:fasteralg} the overall time complexity for this procedure is $O(Time(n, \tau) \cdot \log n)$. 

\paragraph{Identifying optimal $\delta^*$.} 
Now to prove Theorem \ref{thm:optinterleaving-fast}, we need to identify the smallest $\delta^* \in \Delta$ where the decision problem returns `yes'. 
Let $\tau_{k^*} = \tau_{\delta^*}$ be the $\delta^*$-degree bound parameter. 
If we can identify $k^*$, then Procedure {\sf SmallestValidDelta}($\tau_k$) will computes $\delta^*$. 

On the other hand, $k^*$ is the smallest $k$ value such that Procedure {\sf SmallestValidDelta}($\tau_k$) returns `yes'. We compute $k^*$ by a standard binary+exponential search procedure on $k \in \{1, 2, \ldots, 2n\}$. 
Specifically, starting with $k = 1$, we check whether Procedure {\sf SmallestValidDelta}($\tau_k$) returns ``Null" or not. If yes, then we double the size of $k$ (i.e, in the $i$-th iteration, we will be checking for $k = 2^i$), till we reach the first $k'$ such that the procedure returns a valid $\delta$ value. 
This means that the correct $k^*$ must be from $[k'/2, k']$, which we further identify by a binary search within this range. 

The time needed for the first ``exponential'' search of $k' = 2^{i'}$ is 
\begin{align*} 
& Time(n, 1) \log n + Time(n, 2) \log n + Time(n, 4) \log n + \cdots + Time(n, 2^{i'}) \log n \\
= &O(Time(n, k')\log n), 
\end{align*}
as the last term $Time(n, 2^{i'})\log n = Time(n, k') \log n$ dominates the sum of this  geometric-like series. 

For the second binary search within the range $[k'/2, k']$, it takes $O(Time(n, k') \log n \log k')$. As $k' \le 2k^*$, the total time is thus $O(Time(n, 2k^*) \log^2 n)$, as claimed in Theorem \ref{thm:optinterleaving-fast}. 

\section{Proof of Theorem \ref{thm:GHalg}}
\label{appendix:thm:GHalg}

By Claim \ref{claim:GHandInterleaving}, if we can compute $\mu = \min_{u\in V(T_1), w\in V(T_2)} d_I(T_1^{f_u}, T_2^{g_w})$, where $f_u: |T_1| \to \mathbb{R}$ (resp. $g_w: |T_2| \to \mathbb{R}$) is the distance function to point $u$ (resp. $w$) in $|T_1|$ (resp. in $|T_2|$), then $\mu$ is a $14$-approximation of $\hat{\Intdopt} = \dgh(\mathcal{T}_1,\mathcal{T}_2)$; that is, $\mu/14 \le \hat{\Intdopt} \le 14 \mu$ (the upper bound is in fact $2 \mu$ by Claim \ref{claim:GHandInterleaving}). 

There are two issues that we need to address. 
First, consider an arbitrary pair of nodes  $u\in V(T_1)$ and $w\in V(T_2)$. Our DP-algorithm to compute $d_I(T_1^{f_u}, T_2^{g_w})$ depends on the degree-bound $\tau_\delta(T_1^{f_u}, T_2^{g_w})$ parameter for the two merge trees $T_1^{f_u}$ and $T_2^{g_w}$. 
How does this relate to the metric-degree-bound $\newtau_\delta(T_1, T_2)$? We claim: 
\begin{claim}\label{claim:twodegreebounds}
$\newtau_\delta \le \tau_\delta \le \newtau_{2\delta}$.
\end{claim}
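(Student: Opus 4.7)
The plan is to deduce both inequalities from containment relations between metric balls $\widehat{B}_\delta(\cdot, T_i)$ and merge-tree balls $B_\delta(\cdot; T_i^{f_u})$ (the case $i=2$ with $g_w$ is symmetric). Because the sum-of-degrees functional is monotone in the region inspected, any containment of balls translates directly into a comparison of the two parameters.

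For the lower bound $\newtau_\delta \le \tau_\delta$, I will show that $\widehat{B}_\delta(v, T_1) \subseteq B_\delta(v; T_1^{f_u})$ for every $v \in |T_1|$. Indeed, if $x \in \widehat{B}_\delta(v, T_1)$, then for any $y$ on the path $\pi_{T_1}(v, x)$ we have $d_1(y, v) \le d_1(x, v) \le \delta$, and by the triangle inequality $|f_u(y) - f_u(v)| = |d_1(y, u) - d_1(v, u)| \le d_1(y, v) \le \delta$, so $x \in B_\delta(v; T_1^{f_u})$. The analogous statement for $T_2$ with $g_w$ completes this direction.

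For the upper bound $\tau_\delta \le \newtau_{2\delta}$, given an arbitrary center $v \in |T_1|$ I would define $v_0$ to be the point on the geodesic $\pi_{T_1}(v, u)$ at distance $\min(\delta, d_1(v, u))$ from $v$, and establish the containment $B_\delta(v; T_1^{f_u}) \cap |T_1| \subseteq \widehat{B}_{2\delta}(v_0, T_1)$. Points of $B_\delta(v; T_1^{f_u})$ lying on the upward extension ray above the root $u$ are not tree nodes, so this intersection is all that matters for the degree-sum comparison. Fix such an $x$, and let $m$ denote the Steiner point (tree median) of $\{v, x, u\}$ in $T_1$, so that $\pi(v, x) = \pi(v, m) \cup \pi(m, x)$ and $\pi(v, u) = \pi(v, m) \cup \pi(m, u)$, with $m$ being the point on $\pi(v, x)$ closest to $u$. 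Since the whole of $\pi(v, x)$ lies in the annulus $\{y : |d_1(y, u) - d_1(v, u)| \le \delta\}$, we get $d_1(m, u) \ge d_1(v, u) - \delta$, which gives $d_1(v, m) = d_1(v, u) - d_1(m, u) \le \delta$. A short case split on whether $v_0 \in \pi(v, m)$ or $v_0 \in \pi(m, u)$, using $d_1(v, v_0) = \min(\delta, d_1(v, u))$ together with $d_1(x, u) \le d_1(v, u) + \delta$, then yields $d_1(v_0, x) \le 2\delta$ in every case. The symmetric argument in $T_2$ with $g_w$ finishes direction two.

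The main obstacle is that keeping $v$ itself as the center of the metric ball only yields the weaker bound $d_1(v, x) \le 3\delta$; shifting the center by $\delta$ toward $u$ to get $v_0$ saves exactly the missing $\delta$. The tree-median decomposition of the three paths $\pi(v, x), \pi(v, u), \pi(x, u)$ provides the geometric setup needed to execute this shift cleanly.
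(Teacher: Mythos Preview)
Your proposal is correct. The lower bound $\newtau_\delta \le \tau_\delta$ is argued exactly as in the paper, via the containment $\widehat{B}_\delta(v,T_1)\subseteq B_\delta(v;T_1^{f_u})$.

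For the upper bound you and the paper in fact choose the \emph{same} recentering point: the paper takes ``the point $y$ in $B_\delta(v;T_1^{f_u})$ with highest $f_u$-value'', and since that ball is a connected subtree of the merge tree rooted at $u$, this $y$ is the unique ancestor of $v$ at height $f_u(v)+\delta$ (or $u$ itself), i.e.\ your $v_0$. The difference is in how the containment $B_\delta(v;T_1^{f_u})\subseteq \widehat{B}_{2\delta}(v_0,T_1)$ is verified. The paper exploits the merge-tree structure directly: as $y$ is the root of the subtree $B_\delta(v;T_1^{f_u})$, the tree path from $y$ to any $z$ in the ball is monotone in $f_u$, whence $d_1(y,z)=|f_u(y)-f_u(z)|\le 2\delta$ in one line. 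Your route stays in the metric-tree world, decomposing via the Steiner median $m$ of $\{v,x,u\}$ and doing explicit distance bookkeeping. (Incidentally, your case ``$v_0\in\pi(v,m)$'' is essentially vacuous: $d_1(v,m)\le\delta$ together with $d_1(v,v_0)=\min(\delta,d_1(v,u))$ forces $m\in\pi(v,v_0)$, so only the second case and the degenerate case $v_0=u$ really occur.) The paper's argument is shorter and uses the monotonicity inherent to merge trees; yours is slightly longer but is a purely metric argument that does not rely on $f_u$ being a height function.
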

\begin{proof}
It is easy to see that for any $x\in |T_1|$, $\widehat{B}_\delta (x, T_1) \subseteq B_\delta(x, T_1^{f_u})$, where $\widehat{B}_\delta$ is the geodesic ball w.r.t. metric $d_1$, while $B_\delta$ 
is the $\delta$-ball defined at the beginning of Section \ref{sec:decision} for the merge tree $T_1^f$. 
A symmetric statement hold for the two type of balls in $T_2$ and $T_2^{g_w}$. 
This implies that $\newtau_\delta \le \tau_\delta$. 
We now show that for each $B_\delta(x, T_1^{f_u})$, there exists some $y\in |T_1|$ such that $B_\delta(x, T_1^{f_u}) \subseteq \widehat{B}_{2\delta}(y, T_1)$. 

Indeed, consider any $B_\delta(x, T_1^{f_u})$: First, observe that by  definition of the $\delta$-ball $B_\delta(x, T_1^{f_u})$ (see the beginning of Section \ref{sec:decision}), any point $z\in B_\delta(x, T_1^{f_u})$ is connected to $x$ and $f_u(z) \in [f_u(x) - \delta, f_u(x) + \delta]$. 
Set $y$ be the point this ball with highest $f_u$ function value. As $B_\delta(x, T_1^{f_u})$ is a connected subset of a tree, there is a monotone path $\pi(y, z)$ from this highest point $y$ in it to every point $z\in B_\delta(x, T_1^{f_u})$. 
Since $f_u$ is the shortest path distance to point $u \in T_1$ (which is the root of the merge tree $T_1^{f_u}$), and $\pi(y,z)$ is monotone, we thus have that $f_u(z) = f_u(y) + d_1(z, y)$. 
It then follows that $d_1(z,y) = f_u(y) - f_u(z) \le 2\delta$ for any point $z\in B_\delta(x, T_1^{f_u})$. Hence $B_\delta(x, T_1^{f_u}) \subseteq \widehat{B}_{2\delta}(y, T_1)$ as claimed. 

A symmetric statement holds for balls for $T_2$ and $T_2^{g_w}$. Hence $\tau_\delta \le \newtau_{2\delta}$. 
\end{proof}

Second, observe that in general, $d_I(T_1^{f_u}, T_2^{g_w})$ could be much larger than $\mu$. This means that the time complexity of our DP algorithm to compute $d_I(T_1^{f_u}, T_2^{g_w})$ may not be dependent on $\tau_{\mu}$ (and thus $\newtau_{2\mu}$) any more. 
Hence to compute $\mu = \min_{u\in V(T_1), w\in V(T_2)} d_I(T_1^{f_u}, T_2^{g_w})$, we cannot afford to compute every $d_I(T_1^{f_u}, T_2^{g_w})$ for each pair of $u\in V(T_1)$ and $w\in V(T_2)$. 
Instead, we now enumerate all candidate sets $\widehat{\criSet} = \cup_{u \in V(T_1), w\in V(T_2)} \criSet_{u, w}$, where $\criSet_{u,w}$ denotes the candidate set of choices for $\delta$ w.r.t $T_1^{f_u}$ and $T_2^{g_w}$, for all $O(n^2)$ pairs of $u \in V(T_1)$ and $w\in V(T_2)$.
We next sort all parameters $\delta_{u, w}$ in $\widehat{\criSet}$: the subscript $u, w$ in $\delta_{u,w}$ means that this value is a potential $\delta$ value for interleaving distance $d_I(T_1^{f_u}, T_2^{g_w})$. Note that The total number of critical values in $\widehat{\criSet}$ is $O(n^2 \times n^2) = O(n^4)$. This takes $O(n^4 \log n)$ time. 

Next, we perform a similar double-binary search procedure (once for the optimal degree-bound parameter $\tau$, and once for $\delta$) as in the proof of Theorem \ref{thm:optinterleaving-fast} to identify the smallest $\delta_{u,w}$ such that $d_I(T_1^{f_u}, T_2^{g_w}) \le \delta_{u,w}$ for any $u \in V(T_1)$ and $w\in V(T_2)$. 
Let $\tau' = \tau_{\delta_{u,w}} (T_1^{f_u}, T_2^{g_w})$ be the $\delta_{u,w}$-degree bound for merge trees $f_u$ and $g_w$. 
This means that during the procedure, all the parameter $\tau$ values we ever tested are at most $2\tau'$, meaning that for each decision problem we ever tested, its time complexity is bounded by $O(Time(n, 2\tau') \log^2 n)$, where $Time(n, \tau) := n^2 2^\tau \tau^{\tau+2}\log n$ as in the proof of Theorem \ref{thm:optinterleaving-fast}. 
Hence the total time complexity of this double-binary search procedure is 
$O(Time(n, 2\tau') \log^2 n)$. 

On the other hand, by Claim \ref{claim:GHandInterleaving}, we have $\delta_{u,w} (= \min_{u\in V(T_1), w\in V(T_2)} d_I(T_1^{f_u}, T_2^{g_w})) \le 14 \hat{\Intdopt}$. It then follows from Claim \ref{claim:twodegreebounds} that 
$$ \tau' = \tau_{\delta_{u,w}} \le \tau_{14 \hat{\Intdopt}} \le \newtau_{28 \hat{\Intdopt}}. $$
Setting $\newtau = 2\newtau_{28 \hat{\Intdopt}}$, the total time for performing the double-binary search procedure is 
$$ O(Time(n, 2\tau') \log^2 n) = O(n^2 2^{\newtau} \newtau^{\newtau+2} \log^3 n),  $$ 
which, together the $O(n^4\log n)$ time to compute and sort all candidate values in $\widehat{\criSet}$, gives the time complexity as claimed in Theorem \ref{thm:GHalg}.




\end{document}